\documentclass[11pt]{article}
\usepackage{graphicx}
\usepackage{subcaption}
\usepackage{todonotes}
\usepackage{mathtools,booktabs}
\usepackage{soul}
\usepackage{amsthm}
\usepackage{amsmath,amssymb}
\usepackage{systeme}

\usepackage{gensymb}
\usepackage[ruled,vlined,linesnumbered]{algorithm2e}
\SetKwComment{Comment}{/* }{ */}
\SetKwInOut{Input}{input}\SetKwInOut{Output}{output}

\graphicspath{{./figures/}}
\usepackage[absolute]{textpos}

\usepackage{lineno}
\usepackage{fullpage,cite}
\usepackage[top=0.8in, bottom=0.9in, left=0.9in, right=0.9in]{geometry}

\usepackage[pdftex, plainpages = false, pdfpagelabels,
                 bookmarks=false,
                 bookmarksopen = true,
                 bookmarksnumbered = true,
                 breaklinks = true,
                 linktocpage,
                 pagebackref,
                 colorlinks = true,  % was true
                 linkcolor = blue,
                 urlcolor  = cyan,
                 citecolor = red,
                 anchorcolor = green,
                 hyperindex = true,
                 hyperfigures
                 ]{hyperref}

\def\bbR{\mathbb{R}}

\def\calB{\mathcal{B}}

\newtheorem{lemma}{Lemma}
\newtheorem{corollary}{Corollary}
\newtheorem{observation}{Observation}
\newtheorem{theorem}{Theorem}
\newtheorem{definition}{Definition}
\title{Shortest Path Map Equivalence Decompositions and Applications\thanks{A preliminary version of this paper will appear in {\em Proceedings of the 34th Annual European Symposium on Algorithms (ESA 2026)}.
}
}
\author{Haitao Wang\thanks{Kahlert School of Computing,
University of Utah, Salt Lake City, UT 84112, USA. {\tt haitao.wang@utah.edu}}
}
\date{}

\def\bbR{\mathbb{R}}
\def\calP{\mathcal{P}}
\def\calB{\mathcal{B}}
\def\calL{\mathcal{L}}

\def\calU{\mathcal{U}}

\def\calO{\mathcal{O}}
\def\calA{\mathcal{A}}

\def\Psispt{\Psi^{spt}}
\def\Psisptb{\Psi^{spt}_{\mathcal{B}}}
\def\spm{S\!P\!M}
\def\spt{S\!P\!T}
\def\vis{V\!i\!s}

\def\st{$s$-$t$}
\def\spmb{S\!P\!M_{\mathcal{B}}}
\def\Psibp{\Psi_{\calB}(\calP)}
\def\Psibb{\Psi_{\calB}(\calB)}
\def\Psipb{\Psi_{\calP}(\calB)}
\def\Psipp{\Psi_{\calP}(\calP)}

\def\visb{\vis_{\calB}}

\begin{document}

\maketitle

\vspace{-0.2in}
\begin{abstract}
Given a polygonal domain $\calP$ in the plane, the shortest path map with respect to a point $s$, denoted by $\spm(s)$, is the decomposition of $\calP$ into cells such that shortest paths from $s$ to all points $t$ in the same cell have the same vertex sequence. The shortest path map equivalence decomposition of $\calP$ is the decomposition of $\calP$ into cells so that $\spm(s)$ is topologically equivalent for all points $s$ in the same cell. In this paper, we prove new upper bounds on the combinatorial complexities of the $\spm$-equivalence decompositions under various settings, depending on whether $s$ and/or $t$ are restricted to the boundary of $\calP$. We also propose new algorithms to compute these decompositions. Further, our results lead to new solutions to several other problems, including answering two-point shortest path queries in $\calP$, and computing geodesic diameter and center of $\calP$. 
\end{abstract}

{\em Keywords:} shortest paths, shortest path maps, SPM-equivalent decompositions, geodesic diameter, geodesic center, polygons

\section{Introduction}
\label{sec:intro}

Let $\calP$ be a polygonal domain of $h$ holes with a total of $n$ vertices in the plane, i.e., $\calP$ is a closed and multiply-connected region bounded by $n$ segments that form $h+1$ cycles (holes and the region outside $\calP$ are also called {\em obstacles}). The {\em shortest path map} with respect to a point $s\in \calP$, denoted by $\spm(s)$, is the decomposition of $\calP$ into cells such that shortest paths from $s$ to all points $t$ in the same cell are combinatorially the same (i.e., they have the same vertex sequence)~\cite{ref:ChiangTw99,ref:HershbergerAn99}; see Fig.~\ref{fig:spm}. The {\em shortest path map equivalence decomposition} (or {\em SPM-equivalence decomposition} for short) of $\calP$, denoted by $\Psi$, is the decomposition of $\calP$ into cells so that $\spm(s)$ is topologically equivalent for all points $s$ in the same cell. Chiang and Mitchell~\cite{ref:ChiangTw99} first studied the SPM-equivalence decomposition and used it to answer two-point shortest path queries in $\calP$. They proved that the combinatorial complexity of $\Psi$ is bounded by $O(n^{10})$ and proposed an $O(n^{10}\log n)$ time algorithm to compute it. 

We consider several variants of the SPM-equivalence decompositions, depending on whether $s$ and/or $t$ are required to be on the boundary of $P$. Specifically, let $\calB$ denote the boundary of $\calP$. 
We define the {\em boundary-restricted} shortest path map for a point $s$, denoted by $\spmb(s)$, as the decomposition of $\calB$ into segments such that shortest paths from $s$ to all points $t$ in the same segment are combinatorially the same. 
We define the {\em boundary-restricted SPM-equivalence decomposition} with respect to $\calP$, denoted by $\Psibp$, as the decomposition of $\calB$ into segments so that $\spm(s)$ are topologically equivalent for all points $s$ in the same segment. Similarly, we let $\Psibb$ denote the decomposition of $\calB$ into segments so that $\spmb(s)$ is topologically the same for all points $s$ in the same segment, and $\Psipb$ the decomposition of $\calP$ into cells so that $\spmb(s)$ is topologically the same for all points $s$ in the same cell. If we follow the same convention for notation, then $\Psipp$ is $\Psi$ (we will use the two notations interchangeably). Another way to view these four decompositions is to define them depending on whether $s$ and/or $t$ are restricted to be on $\calB$. For example, $\Psipb$ is for the case where $s\in \calP$ while $t\in \calB$. Note that $\Psipp$ is a refinement of $\Psipb$ (i.e., each cell of $\Psipp$ is completely inside a cell of $\Psipb$), and $\Psibp$ is a refinement of $\Psibb$.  

\begin{figure}[t]
\begin{minipage}[t]{\textwidth}
\begin{center}
\includegraphics[height=1.8in]{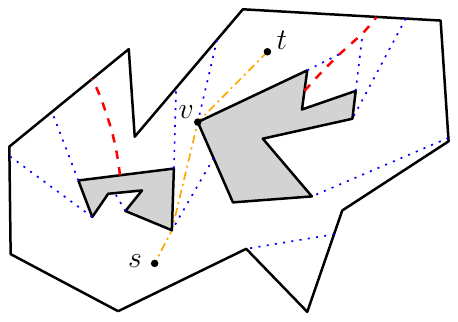}
\caption{ Illustrating a shortest path map $\spm(s)$: The dotted blue segments are extension segments and the dashed red curves are bisector curves. The point $t$ is in a cell whose root is $v$. }
\label{fig:spm}
\end{center}
\end{minipage}
\vspace{-0.1in}
\end{figure}

While we are not aware of any previous work about $\Psibb$ or $\Psibp$, 
%it might be possible to obtain an upper bound $O(n^5\log n\log^*n)$ for $\Psibb$ using the techniques of Bae and Okamoto~\cite{ref:BaeQu12} and also compute $\Psibb$ in $O(n^5\log n\log^*n)$ time. The decomposition 
$\Psipb$ was implicitly discussed in \cite{ref:ChiangTw99}. Chiang and Mitchell's approach~\cite{ref:ChiangTw99} also gave an $O(n^{10})$ upper bound for $|\Psipb|$, asymptotically the same as $|\Psipp|$. Their algorithm computes $\Psipb$ in $O(n^{10}\log n)$ time. 
%i.e., their method showed that $|\Psipb|=O(n^{10})$ and $\Psipb$ can be computed in $O(n^{10}\log n)$ time. 

In this paper, we obtain new bounds and algorithms for these decompositions (see Table~\ref{tab:decomresult} for a summary). 

\begin{itemize}
\item For $\Psibb$, we prove $|\Psibb|=O(n^{4+\epsilon})$ and give an algorithm to compute $\Psibb$ in $O(n^{4+\epsilon})$ time; throughout the paper, let $\epsilon$ represent an arbitrarily small positive constant. For comparison, a traditional method~\cite{ref:BaeTh13,ref:MitchellSh96,ref:WangOn18} can give an $O(n^5)$ upper bound for $|\Psibb|$ and compute it in $O(n^5\log n)$ time.
\item 
For $\Psibp$, we present an algorithm to compute it in $O((n^2+|\Psibp|)\cdot \log n)$ time, and prove 
 that $|\Psibp|=O(n^{5})$. Note that $|\Psibp|$ denotes the combinatorial size of $\Psibp$ in the {\em worst case} among all input instances $\calP$ (not the size in any specific input instance). We follow the same convention for other similar notation, e.g., $|\Psipp|$, $|\Psibb|$, $|\Psipb|$. 
 Therefore, a time complexity like $O((n^2+|\Psibp|)\cdot \log n)$ is a function of the worst-case size of $\Psibp$.
 %and we say that it is {\em decomposition-sensitive}. 
 
\item 
For the most general decomposition $\Psipp$, we do not have a better bound than $O(n^{10})$  proved by Chiang and Mitchell~\cite{ref:ChiangTw99}. However, we present an algorithm that can compute $\Psipp$ in 
%a decomposition-sensitive 
$O(n^{7.73}+|\Psipp|\log n)$ time. 
%(or $O(n^7(h+\log n)+|\Psipp|\cdot \log n)$ time for small $h$). 
As such, if $|\Psipp|$ could be bounded by $o(n^{10})$, then our algorithm for constructing $\Psipp$ would be faster than the $O(n^{10}\log n)$-time solution in \cite{ref:ChiangTw99}. 
Note that the time complexity of the algorithm of \cite{ref:ChiangTw99} does not depend on $|\Psipp|$.
%Further, if $|\Psipp|=\Omega(n^7)$, then our algorithm would be optimal within a logarithmic factor. 
The currently best-known lower bound for $|\Psipp|$ is $\Omega(n^4)$~\cite{ref:ChiangTw99}.
Hence, our result reduces the problem of designing efficient algorithms for computing $\Psipp$ to proving smaller upper bounds for its combinatorial complexity.

\item 
For $\Psipb$, we prove $|\Psipb|=O(n^{8})$ and present an algorithm that can compute it in $O(n^5\log^2 n+|\Psipb|\cdot \log n)$ time. 
%As above, our contribution is to reduce the problem of designing efficient algorithms for computing $\Psipb$ to proving smaller upper bounds for its combinatorial complexity. 
\end{itemize}

\begin{table}[h]
    \caption{Summary of the results on the four decompositions.}
    \label{tab:decomresult}
    \centering
    \tabcolsep1ex%
    %\resizebox{\linewidth}{!}%
    \vspace{0.15in}
    {%
        \begin{tabular}{l|ll|ll}
         \toprule
              & \multicolumn{2}{c|}{Combinatorial Sizes} & \multicolumn{2}{c}{Algorithm Complexities}       \\            
              &Previous & Ours & Previous & Ours       \\
            %\midrule
            %\specialrule{1.5pt}
            \hline
            $\Psibb$ & $O(n^5)$~\cite{ref:BaeTh13,ref:MitchellSh96,ref:WangOn18} &  $O(n^{4+\epsilon})$ &  $O(n^5)$~\cite{ref:BaeTh13,ref:MitchellSh96,ref:WangOn18} &  $O(n^{4+\epsilon})$ \\
            $\Psibp$ &  &  $O(n^{5})$ &  &  $O((n^2+|\Psibp|)\cdot \log n)$ \\
            $\Psipp$ & $O(n^{10})$~\cite{ref:ChiangTw99} &   &  $O(n^{10}\log n)$~\cite{ref:ChiangTw99} &  $O(n^{7.73}+|\Psipp|\cdot \log n)$ \\
            $\Psipb$ & $O(n^{10})$~\cite{ref:ChiangTw99} &  $O(n^{8})$ &  $O(n^{10}\log n)$~\cite{ref:ChiangTw99} &  $O(n^5\log^2 n+|\Psipb|\cdot \log n)$ \\
            \bottomrule
        \end{tabular}}
\end{table}

%It is remarkable that our algorithms for all these decompositions are size-sensitive. To the best of our knowledge, we are not aware of any size-sensitive algorithms for constructing these decompositions in the previous work. 

With our results on these decompositions, we further obtain new results on answering two-point shortest path queries, and computing geodesic diameter and center, as discussed below. 
%We believe they will find many other applications as well.

% \begin{table}[h]
%     \caption{Summary of the results on the four decompositions.}
%     \label{tab:decomresult}
%     \centering
%     \tabcolsep1ex%
%     %\resizebox{\linewidth}{!}%
%     \vspace{0.15in}
%     {%
%         \begin{tabular}{lc|ccc}
%          \toprule
%               %& \multicolumn{2}{c|}{Combinatorial Sizes} & \multicolumn{2}{c}{Algorithm Complexities}       \\            
%               &  & $\calB$-$\calB$ & $\calB$-$\calP$ & $\calP$-$\calP$       \\
%             %\midrule
%             %\specialrule{1.5pt}
%             \hline
%             Shortest path  & previous & $O(n^{4+\epsilon},n^{4+\epsilon},\log n)$~\cite{ref:deBergTo24} & $O(n^{6+\epsilon},n^{6+\epsilon},\log^2 n)$~\cite{ref:deBergTo24}  & $O(n^{10+\epsilon},n^{10+\epsilon},\log n)$~\cite{ref:deBergTo24} \\
%             queries & ours & $O(|\Psibb|\log n,|\Psibb|,\log n)$  & $O(n^5\log n,n^5\log n,\log^2 n)$  &  \\
%                    &   &         & $O(n^6,n^6,\log n)$  &  \\
%             \hline
%             \multirow{ 2}{*}{Diameter}  & previous &  &   &  \\
%              & ours &  &   &  \\
%              \hline
%             \multirow{ 2}{*}{Center}  & previous &  &   &  \\
%              & ours &  &   &  \\
%             \bottomrule
%         \end{tabular}}
% \end{table}

\paragraph{Two-point shortest path queries.}
The problem is to build a data structure so that a (Euclidean) shortest path from $s$ to $t$ in $\calP$ can be quickly found for any two query points $s$ and $t$. Using their SPM-equivalence decomposition, Chiang and Mitchell~\cite{ref:ChiangTw99} constructed a data structure in $O(n^{11})$ space and preprocessing time, and each query can be answered in $O(\log n)$ time (this is the time for computing the length of the shortest path; reporting the path itself needs additional time linear in the number of edges of the path; all two-point shortest path query data structures discussed in the paper have this feature and therefore we will not repeat this again). For simplicity, we use $O(P(n),S(n),Q(n))$ to denote the complexity of such a data structure, where $P(n)$, $S(n)$, and $Q(n)$ denote the preprocessing time, the space, and the query time of the data structure, respectively. Chiang and Mitchell~\cite{ref:ChiangTw99} also gave a data structure of $O(n^{10}\log n, n^{10}\log n, \log^2n)$ complexity. In case the number of obstacles $h$ is much smaller than the number of vertices $n$, they also provided data structures with complexities sensitive to $h$. For example, they gave a data structure of complexity $O(n^5,n^5,h+\log n)$ and another data structure of $O(n+h^5)$ space with $O(h\log n)$ query time. Refer to \cite{ref:ChiangTw99} for other tradeoffs between preprocessing and query time. 

Guo, Maheshwari, and Sack~\cite{ref:GuoSh08} also studied the problem and presented a data structure of complexity $O(n^2\log n,n^2,h\log n)$. Bae and Okamoto~\cite{ref:BaeQu12} considered a special case where both query points are restricted on $\calB$, and they derived a data structure of complexity $O(n^5\log n\log^*n,n^5\log^*n,\log n)$. By introducing a novel idea of using cuttings~\cite{ref:ChazelleCu93}, de Berg, Miltzow, and Staals~\cite{ref:deBergTo24} revisited the problem and gave two data structures of complexities $O(n^{10+\epsilon},n^{10+\epsilon},\log n)$ and $O(n^{9+\epsilon},n^{9+\epsilon},\log^2 n)$, respectively; their preprocessing algorithm involves randomization (due to using the algorithm in~\cite{ref:KoltunAl04}) and the preprocessing time is expected, while the space and query complexities are worst-case. Note that although cuttings are not new, using cuttings to solve two-point shortest path queries is novel. For the case where both $s$ and $t$ are restricted on $\calB$ (referred to as the {\em $\calB$-$\calB$ case}), they gave a data structure of $O(n^{4+\epsilon},n^{4+\epsilon},\log n)$ complexity~\cite{ref:deBergTo24}. If only one of the query points is restricted on $\calB$ (referred to as the {\em $\calB$-$\calP$ case}), they gave a randomized data structure of $O(n^{6+\epsilon},n^{6+\epsilon},\log^2 n)$ complexity,
%\footnote{The authors claim that their query time is $O(\log n)$. However, they used the data structure from~\cite{ref:AgarwalCo97}, whose query time is $O(\log^2 n)$.}, 
where the preprocessing time is expected. It should be noted that the method of \cite{ref:deBergTo24}, which does not rely on SPM-equivalence decompositions, does not yield any new results on the SPM-equivalence decompositions. 

Building upon our new results on the SPM-equivalence decompositions, we achieve the following results on two-point shortest path queries (see Table~\ref{tab:sp} for a summary). 

\begin{itemize}
\item 
For the $\calB$-$\calB$ case, we obtain a data structure of complexity $O(|\Psibb|\log n,|\Psibb|,\log n)$, assuming that $\Psibb$ is available. With our $O(n^{4+\epsilon})$ time algorithm for computing $\Psibb$, we obtain the complexity $O(n^{4+\epsilon},n^{4+\epsilon},\log n)$ for the data structure. This matches the result of \cite{ref:deBergTo24}.
%This improves the $O(n^{4+\epsilon},n^{4+\epsilon},\log n)$ result from~\cite{ref:deBergTo24} when $h=o(n)$.

\item 
For the $\calB$-$\calP$ case, we obtain two data structures of complexities $O(n^2\log n + n\cdot |\Psibp|,n\cdot |\Psibp|,\log n)$ and $O((n^2+|\Psibp|)\cdot \log n,|\Psibp|\cdot \log n,\log^2 n)$, respectively. Using our bound $|\Psibp|=O(n^5)$, these are $O(n^6,n^6,\log n)$ and $O(n^5\log n,n^5\log n,\log^2 n)$, respectively. This improves the result of $O(n^{6+\epsilon},n^{6+\epsilon},\log^2 n)$ from~\cite{ref:deBergTo24}. 

\item 
For the general case (i.e., the $\calP$-$\calP$ case), 
%following Chiang and Mitchell's approach~\cite{ref:ChiangTw99} and using our new algorithm for constructing the decomposition $\Psipp$, 
we can have two data structures of complexities $O(n^{7.73}+|\Psipp|\log n,(n^7+|\Psipp|)\log n,\log^2 n)$ and $O(n^{8}+n\cdot |\Psipp|,n^8+n\cdot |\Psipp|,\log n)$, respectively.   
%$O(n^{7.73}+|\Psipp|\log n)$ time (or $O(n^7(h+\log n)+|\Psipp|\cdot \log n)$ time for small $h$)
Using the current best bound $|\Psipp|=O(n^{10})$, these match the results from \cite{ref:ChiangTw99} and are worse than the results from \cite{ref:deBergTo24}. However, if the ``true'' size of $|\Psipp|$ is much smaller than $O(n^{10})$ (e.g., $|\Psipp|=O(n^9)$), then our result could be better than all previous work. 
As such, our result reduces the problem of designing a more efficient algorithm to proving a smaller bound for $|\Psipp|$. 
\end{itemize}
%In addition, for a special case where each of $s$ and $t$ is restricted to the boundary of one particular obstacle, we give a data structure of complexity $O(n^{3+\epsilon},n^{3+\epsilon},\log n)$. For another special case where $s$ is restricted to a given algebraic curve $\gamma$ of constant degree and possibly intersecting obstacles of $\calP$, and $t$ is restricted to $\calB$ (or another given algebraic curve $\gamma$ of constant degree and possibly intersecting obstacles of $\calP$), we derive a data structure of complexity $O(n^{3.5+\epsilon},n^{3.5+\epsilon},\log n)$.

\begin{table}[h]
    \caption{Summary of the results on two-point shortest path queries. 
    %The currently best-known bounds for $|\Psibb|$, $|\Psibp|$, and $|\Psipp|$ are $O(n^{4+\epsilon})$, $O(n^5)$, and $O(n^{10})$, respectively.
    }
    \label{tab:sp}
    \centering
    \tabcolsep1ex%
    %\resizebox{\linewidth}{!}%
    \vspace{0.15in}
    {%
        \begin{tabular}{l|l|l}
         \toprule              
              &Previous & Ours       \\
            %\midrule
            %\specialrule{1.5pt}
            \hline
            $\calB$-$\calB$ case & $O(n^{4+\epsilon},n^{4+\epsilon},\log n)$~\cite{ref:deBergTo24} & $O(n^{4+\epsilon},n^{4+\epsilon},\log n)$  \\            
            \hline
            $\calB$-$\calP$ case & $O(n^{6+\epsilon},n^{6+\epsilon},\log^2 n)$~\cite{ref:deBergTo24} & $O(n^5\log n,n^5\log n,\log^2 n)$ or  $O(n^6,n^6,\log n)$ \\   
            \hline
            $\calP$-$\calP$ case & $O(n^{10+\epsilon},n^{10+\epsilon},\log n)$~\cite{ref:deBergTo24} & $O(n^{8}+n\cdot |\Psipp|,n^8+n\cdot |\Psipp|,\log n)$  \\     
                                 & $O(n^{9+\epsilon},n^{9+\epsilon},\log^2 n)$~\cite{ref:deBergTo24} & $O(n^{7.73}+|\Psipp|\log n,(n^7+|\Psipp|)\log n,\log^2 n)$  \\                        
            \bottomrule
        \end{tabular}}
\end{table}

Computing shortest paths in polygonal domains is a classical problem and has been studied extensively in the past several decades, e.g.,~\cite{ref:ChenCo13,ref:Eriksson-BiqueGe15,ref:GhoshAn91,ref:HershbergerAn99,ref:HershbergerA22,ref:MitchellA91,ref:MitchellSh96,ref:StorerSh94,ref:RohnertSh86,ref:GuibasOp89,ref:GuibasLi87,ref:HershbergerA91,ref:HershbergerCo94,ref:LeeEu84}.
For the one-point query problem where $s$ is given in the input and $t$ is the only query point, the problem becomes much easier and $\spm(s)$, which is of space $O(n)$, along with a point location data structure~\cite{ref:EdelsbrunnerOp86,ref:KirkpatrickOp83,ref:SarnakPl86}, can be used to answer shortest path queries in $O(\log n)$ time. Constructing $\spm(s)$ can be done in $O(n\log n)$ time~\cite{ref:HershbergerAn99} or in $O(n+h\log h)$ time~\cite{ref:WangA23} after $\calP$ is triangulated. If $\calP$ is a simple polygon (i.e., $h=0$), then there exists a data structure of $O(n,n,\log n)$ complexity for the two-point query problem~\cite{ref:GuibasOp89}.

\paragraph{Geodesic diameter.}
For any pair of points in $\calP$, the length of their shortest path in $\calP$ is also called the {\em geodesic distance}.  
The pair of points of $\calP$ whose geodesic distance is the largest is called a {\em diametral pair} and their geodesic distance is called the {\em geodesic diameter} of $\calP$. %The {\em geodesic diameter problem} is to compute the diameter of $\calP$ (as well as find a diametral pair). 

If $\calP$ is a simple polygon, computing the geodesic diameter can be done in $O(n)$ time~\cite{ref:HershbergerMa97}, improving the $O(n\log n)$ time results~\cite{ref:GuibasOp89,ref:SuriCo89}. For the polygonal domain case, the problem becomes much more challenging. One main difficulty lies in that a diametral pair of points could both be in the interior of $\calP$. Bae, Korman, and Okamoto~\cite{ref:BaeTh13} thoroughly studied the problem and discovered many interesting properties. Their effort led to an algorithm of $O(n^{7.73})$ time (or $O(n^7(h+\log n))$ for small $h$) for computing the diameter. Their algorithm can be improved to $O(n^{7.4+\epsilon})$ time using the new shortest path query algorithm~\cite{ref:deBergTo24}. 
Other restricted cases of the problem were also studied in \cite{ref:BaeTh13}. For the $\calB$-$\calB$ case, which aims to find a {\em restricted} diametral pair $(s,t)$ with $s\in \calB$ and $t\in \calB$, the method of \cite{ref:BaeTh13} solved the problem in $O(n^5\log n\log^*n)$ time. %(or $O(n^5(h+\log n))$ for small $h$). 
For the $\calB$-$\calP$ case, where one wishes to find a restricted diametral pair $(s,t)$ with $s\in \calB$ and $t\in \calP$, an algorithm of roughly $O(n^{5.91})$ time 
%(or $O(n^5(h+\log n))$ for small $h$) 
is presented in \cite{ref:BaeTh13}. 
%In part due to the difficulty of the problem, no progress has been made since the work~\cite{ref:BaeTh13} was published more than a decade ago. 

Based on our new results on the SPM-equivalence decompositions, we solve the $\calB$-$\calB$ case in $O(n^{4+\epsilon})$ time and the $\calB$-$\calP$ case in $O(n^{5}\log n)$ time.
These improve the $O(n^5\log n\log^*n)$ time and $O(n^{5.91})$ time results in \cite{ref:BaeTh13}, respectively (see Table~\ref{tab:diameter} for a summary; note that the $\calP$-$\calB$ case is symmetric to the $\calB$-$\calP$ case).

\begin{table}[h]
    \caption{Summary of the results on the geodesic diameter and geodesic center.}
    \label{tab:diameter}
    \centering
    \tabcolsep1ex%
    %\resizebox{\linewidth}{!}%
    \vspace{0.15in}
    {%
        \begin{tabular}{l|ll|ll}
         \toprule
              & \multicolumn{2}{c|}{Geodesic Diameter} & \multicolumn{2}{c}{Geodesic Center}       \\            
              &Previous & Ours & Previous & Ours       \\
            %\midrule
            %\specialrule{1.5pt}
            \hline
            $\calB$-$\calB$ case & $O(n^5\log n\log^*n)$~\cite{ref:BaeTh13} &  $O(n^{4+\epsilon})$ &  $O(n^8\log n)$~\cite{ref:WangOn18} &  $O(n^{4+\epsilon})$ \\
            $\calB$-$\calP$ case & $O(n^{5.91})$~\cite{ref:BaeTh13}  &  $O(n^{5}\log n)$ & $O(n^8\log n)$~\cite{ref:WangOn18} &  $O(n^5\log n\log^* n)$ \\
            $\calP$-$\calB$ case & $O(n^{5.91})$~\cite{ref:BaeTh13}  &  $O(n^{5}\log n)$ & $O(n^{11}\log n)$~\cite{ref:WangOn18} &  $O(n^{10+\epsilon})$ \\
            $\calP$-$\calP$ case & $O(n^{7.4+\epsilon})$~\cite{ref:deBergTo24} &   &  $O(n^{11}\log n)$~\cite{ref:WangOn18} &  $O((n^7+|\Psipp|)\cdot n^{2+\epsilon})$ \\
            \bottomrule
        \end{tabular}}
\end{table}

%In addition, if each of $s$ and $t$ is restricted to the boundary of a single obstacle, then our algorithm can find such a diametral pair in $O(n^{3+\epsilon})$ time. Using our new bound and algorithm for the decomposition $\Psibp$, we show that this problem can be solved in $O(n^{5}\log n)$ time. Our approach is quite flexible and can solve other variations of the problem. For example, if $s$ is restricted on $\gamma$ and $t$ is restricted on $\gamma'$ for two curves $\gamma$ and $\gamma'$ as defined above, then such a diametral pair $(s,t)$ can be found in $O(n^{3.5+\epsilon})$ time. 

\paragraph{Geodesic center.}
A closely related concept is
the {\em geodesic center} of $\calP$, which is a point that minimizes the maximum geodesic distance from it to any other point in $\calP$. The problem of computing the geodesic center has attracted much attention~\cite{ref:AhnA16,ref:AsanoCo85,ref:BaeCo15CGTA,ref:PollackCo89,ref:WangOn18}. If $\calP$ is a simple polygon, a linear-time algorithm for computing a geodesic center is known~\cite{ref:AhnA16,ref:LubiwTh25}. For the polygonal domain case, the problem again becomes significantly more challenging. One reason is that a farthest point of a point may be in the interior of $\calP$~\cite{ref:BaeTh13}; this is in contrast to the simple polygon case in which a farthest point must be a vertex of $\calP$. Bae, Korman, and Okamoto~\cite{ref:BaeCo15CGTA} gave the first-known algorithm that can compute a geodesic center in $O(n^{12+\epsilon})$ time. Later Wang~\cite{ref:WangOn18} discovered a so-called {\em $\pi$-range property}, which eventually led to an $O(n^{11}\log n)$ time algorithm. 

We obtain improved results on several restricted versions of the geodesic center problem (see Table~\ref{tab:diameter} for a summary). 
\begin{itemize}
    \item First, if the center is required to be on $\calB$ and farthest points are also required to be on $\calB$, i.e., finding a point on $\calB$ to minimize the maximum geodesic distance from it to any other point in $\calB$, then we solve this case in 
    $O(|\Psibb|\log n\log^* n)$ time, which is $O(n^{4+\epsilon})$ with our bound $|\Psibb|=O(n^{4+\epsilon})$.     
    For comparison, the method of \cite{ref:WangOn18} solves this problem in $O(n^8\log n)$ time. 
    \item Second, if the center is required on $\calB$ but farthest points can be anywhere in $\calP$, then our algorithm runs in $O(|\Psibp|\log n\log^* n)$ time, which is 
    $O(n^5\log n\log^*n)$ with our bound $|\Psibp|=O(n^5)$.  
    The algorithm of \cite{ref:WangOn18} solves this problem in $O(n^8\log n)$ time.     
    \item Third, if the center can be anywhere in $\calP$ while the farthest points are required to be on $\calB$, 
    %i.e., finding a point in $\calP$ to minimize the maximum geodesic distance from it to any point in $\calB$, 
    then we solve the problem in $O((n^5+|\Psipb|)\cdot n^{2+\epsilon})$ time, which is $O(n^{10+\epsilon})$ with our bound $|\Psipb|=O(n^{8})$. The method of \cite{ref:WangOn18} solves this problem in $O(n^{11}\log n)$ time.
    %As discussed above, the currently best known upper bound for $|\Psipb|$ is $O(n^{10})$~\cite{ref:ChiangTw99}.
    \item Finally, for the most general problem, our algorithm runs in $O((n^7+|\Psipp|)\cdot n^{2+\epsilon})$ time. As discussed above, the currently best known upper bound for $|\Psipp|$ is $O(n^{10})$~\cite{ref:ChiangTw99}.
    %which would provide an improvement over the $O(n^{11}\log n)$ time result in \cite{ref:WangOn18} if $|\Psipp|$ could be bounded by $O(n^9)$ time. As discussed above, the currently best known upper bound for $|\Psipp|$ is $O(n^{10})$~\cite{ref:ChiangTw99}.
\end{itemize}

\subsection{An overview of our approach}
To prove an upper bound for $|\Psibb|$, we reduce the problem to analyzing the combinatorial complexities of lower envelopes of certain functions. For each vertex $u$ of $\calP$, we define a set of functions $f_u(s,t)$ as follows. 
Let $\vis(u)$ denote the visibility polygon of $u$ in $\calP$. For each edge $e_t\in \spmb(u)$, $e_t$ lies in a single cell of $\spm(u)$, and let $v_t$ be the root of the cell. For each edge $e_s$ of $\vis(u)\cap \calB$, we define a function $f_u(s,t)=|su|+d(u,v_t)+|v_tt|$, for $s\in e_s$ and $t\in e_t$, where $d(u,v_t)$ is the geodesic distance between $u$ and $v_t$. Since both $\vis(u)\cap \calB$ and $\spmb(u)$ has $O(n)$ edges, the above defines $O(n^2)$ bivariate functions (of constant degree) for $u$, each of which defines a constant-sized two-dimensional surface patch in $\bbR^3$. As $\calP$ has $O(n)$ vertices, we have $O(n^3)$ functions in total; let $F$ denote the set of all of these functions. Our first observation is that a vertex of $\Psibb$ corresponds to a vertex of the lower envelope $\calL(F)$ of (the surface patches defined by) the functions of $F$. As such, it suffices to analyze the complexity of $\calL(F)$.

%determines the complexity of $\Psibb$, and more specifically, $|\Psibb|=\Theta(\calL(F))$. 

To find a bound for $|\calL(F)|$, one could apply the result of Sharir~\cite{ref:SharirAl94}, who proved an $O(n^{d+\epsilon})$ upper bound for the size of the lower envelope of a collection of $n$ constant-sized (partially defined) $d$-variate algebraic functions of constant degree, for any $d\geq 2$ (the $d=2$ case was first proved in~\cite{ref:HalperinNe94}; see also \cite[Chapter 7]{ref:SharirDa95}). 
Since $|F|=O(n^3)$ and $d=2$ in our problem, applying the above result directly could obtain an upper bound $O(n^{6+\epsilon})$ for $|\calL(F)|$. We instead take a closer look at the problem. Our main idea is that we find a way to decompose the lower envelope $\calL(F)$ into $O(n^2)$ regions so that for each region it suffices to consider $O(n)$ functions of $F$. Consequently, by applying the algorithm of~\cite{ref:HalperinNe94,ref:SharirAl94,ref:SharirDa95} for each region, we can prove an upper bound of $O(n^{4+\epsilon})$ for $|\calL(F)|$ and thus for $|\Psibb|$, and also compute the vertices of $\calL(F)$ and thus compute $\Psibb$ in $O(n^{4+\epsilon})$ time. Note that traditional brute-force method~\cite{ref:BaeTh13,ref:MitchellSh96,ref:WangOn18} can give an $O(n^5)$ upper bound for $|\Psibb|$ and compute it in $O(n^5\log n)$ time.

For $\Psibp$, it turns out that a simple brute-force approach can provide a good upper bound for its size (similar techniques were already used in previous work for related problems~\cite{ref:BaeTh13,ref:ChiangTw99,ref:WangOn18}). 
%Computing the other two decompositions $\Psibp$ and $\Psibb$, however, are done using ``direct'' methods. 
To compute $\Psibp$, since it is a decomposition of $\calB$, which is one-dimensional, we use a sweeping point algorithm, i.e., moving a point $s$ on $\calB$ and find those event points when $\spm(s)$ changes combinatorially. 

For $\Psipp$, unfortunately, our approach does not yield a better result than the $O(n^{10})$ upper bound from \cite{ref:ChiangTw99}. One reason is that the decomposition is on $\calP$, which is two-dimensional, and a vertex of it does not necessarily correspond to a vertex of the lower envelope of the corresponding functions (see Section~\ref{sec:psipp} for a detailed explanation). Hence, analyzing the complexity of the lower envelope does not immediately lead to an upper bound for $\Psipp$. The same issue happens to $\Psipb$ as well. 
In contrast, both $\Psibb$ and $\Psibp$ are decompositions of $\calB$, which is essentially one-dimensional, and a vertex of them corresponds to a vertex of the corresponding lower envelope. 

%Although we do not have improved upper bounds for the sizes of $\Psipp$ and $\Psipb$, we propose size-sensitive algorithms for computing them. To this end, 

To construct $\Psipp$ and $\Psipb$, we first present efficient algorithms to compute the edges of the lower envelopes of the corresponding functions. We then compute these decompositions by using the edges of the lower envelopes.
%in a decomposition-sensitive manner. 
%However, by adapting the Agarwal, Aronov, and Sharir's technique for computing the lower bound for algebraic functions~\cite{ref:AgarwalCo97}, we are able to derive an output-sensitive algorithm for computing $\Psipp$. Again, our algorithm is not a direct application of the algorithm. Indeed, adapting the technique requires many new observations and insight into our problem. Also, the algorithm in~\cite{ref:AgarwalCo97} is randomized while ours is deterministic. Computing $\Psipb$ can be done in a similar manner. 

%The output-sensitive algorithms for constructing $\Psibb$ and $\Psibp$ make uses of the property that the object we need to decompose is $\calB$, which is essentially a one-dimensional space. We use a sweeping point algorithm to construct each of the decomposition, i.e., moving a point $s$ on $\calB$ and find those event points when $\spmb(s)$ (resp., $\spm(s)$) is experiencing changes combinatorially. 

%Our new bounds and algorithms for SPM-equivalent decompositions lead to new results on the two-point shortest path query problem, geodesic diameter and center problems in polygonal domains.

For answering two-point shortest path queries, using a SPM-equivalent decomposition, a data structure can be built by constructing a ``parameterized'' shortest path map for points in each cell of the decomposition, which is the approach first proposed in~\cite{ref:ChiangTw99} (a linear factor in the preprocessing can be saved by using persistent data structures~\cite{ref:DriscollMa89}, but the query time grows to $O(\log^2 n)$).  
This approach works for all SPM-equivalent decompostions, e.g., for the $\calB$-$\calB$ case,  $\Psibb$ is used (in this case, we actually show that $O(\log n)$ query time can still be achieved even when a persistent data structure is utilized). 

For computing the geodesic diameter, using the results of Bae, Korman, and Okamoto~\cite{ref:BaeTh13}, we show that a diametral pair corresponds to a vertex of the lower envelope of the corresponding functions. 
%(or corresponds to a vertex of the decompositions).
%For example, if we are looking for a restricted diametral pair $(s,t)$ such that $s\in \calB$ and $t\in \calP$, then such a diametral pair corresponds to a vertex of in the lower envelope of the functions for $\Psipb$. 
Consequently, once we construct the lower envelope, the diametral pairs can be easily found. This approach works for all cases. 

%In particular, a diametral pair $(s,t)$ of the most general problem, i.e., $s\in \calP$ and $t\in \calP$, corresponds to a vertex in the lower envelope $\calL(F)$ of the collection $F$ of the functions for $\Psipp$. The number of vertices of $\calL(F)$ is $O(n^7)$ and our algorithm can compute all vertices in $O(n^7\log n)$ time and compute all diametral pairs in $O(n^7\log n)$ time. Note that the approach of Bae, Korman, and Okamoto~\cite{ref:BaeTh13} also proves that the number of all diametral pairs is $O(n^7)$. Their algorithm computes a superset of all diametral pairs in $O(n^7\log n)$ time. However, it takes additional $O(n^{7.73})$ time to identify the true diametral pairs from the superset. Our algorithm, in contrast, can compute all true diametral pairs directly in a total of $O(n^7\log n)$ time. 

For computing the geodesic centers, an observation is that for any point $s\in \calP$, its farthest point in $\calP$ must be a vertex of $\spm(s)$~\cite{ref:BaeCo15CGTA}. Based on this property, for each cell $\sigma$ of the corresponding SPM-equivalent decomposition, for each vertex $v$ of $\spm(s)$, we parameterize the geodesic distance $d(s,v)$ using the coordinate of $s\in \sigma$. Then, a geodesic center $s$ restricted to $s\in \sigma$ corresponds to a lowest vertex in the upper envelope of the functions $d(s,v)$ for all vertices $v$ of $\spm(s)$. This approach has been used in the previous work~\cite{ref:BaeCo15CGTA} for the general case. 
%which only considered the most general geodesic center by using the decomposition $\Psipp$. 
However, to make the algorithm efficient, one needs good upper bounds on SPM-equivalent decompositions and efficient algorithms to compute them. As we have new upper bounds and algorithms for the decompostions, we obtain new algorithms for computing the geodesic centers. 
%Note that which decomposition to use depends on the problem setting.  For example, if one is interested in a geodesic center $s$ restricted to $\calB$ with respect to all points in $\calP$, then the decomposition $\Psibp$ should be used. 

\paragraph{Outline.} The rest of the paper is organized as follows. Section~\ref{sec:pre} introduce some notation and concepts that will be used throughout the paper. Our results for SPM-equivalent decompositions $\Psibb$, $\Psibp$, $\Psipp$, and $\Psipb$ are 
are presented in the subsequent Sections~\ref{sec:psibb}, \ref{sec:psibp}, \ref{sec:psipp}, and \ref{sec:psipb}, respectively. 
The two-point shortest path query problem, the geodesic diameter problem, and the geodesic center problem are discussed in the Sections~\ref{sec:query}, \ref{sec:diameter}, and \ref{sec:center}, respectively. 

\section{Preliminaries}
\label{sec:pre}
We define some notation and concepts that will be used throughout the paper, in addition to those already introduced in Section~\ref{sec:intro}. 

For any two points $p$ and $q$ in the plane, denote by $\overline{pq}$ the line segment with $p$ and $q$ as endpoints; denote by $|pq|$ the length of the segment. We say that $p$ is {\em visible} to $q$ if $\overline{pq}\subseteq \calP$.

For any two points $s$ and $t$ in $\calP$, we use $\pi(s,t)$ to denote a shortest path from $s$ to $t$ in $\calP$. 
%In the casewhere shortest paths are not unique, $\pi(s,t)$ may refer to an arbitrary one. 
Denote by $d(s,t)$ the (Euclidean) length of $\pi(s,t)$;
we call $d(s,t)$ the {\em geodesic distance} between $s$ and $t$. The vertex of $\pi(s,t)$ adjacent to $s$ is called the {\em anchor} of $s$; the anchor of $t$ is defined similarly. 
%We use $\pi_{u,v}(s,t)$ to represent a shortest \st\ path that is the union of $\overline{su}$, a shortest path from $u$ to $v$, and $\overline{vt}$, for two vertices $u,v$ of $\calP$.
%For two line segments $e$ and $f$ in $\calF$, their {\em geodesic distance} is defined to be the minimum geodesic distance between any point on $e$ and any point on $f$, i.e., $\min_{s\in e, t\in f}d(s,t)$; by slightly abusing the notation, we use $d(e,f)$ to denote their geodesic distance.
%For any path $\pi$ in $\calP$, we use $|\pi|$ to denote its length.
%Consider two obstacle vertices $u$ and $v$. For any point $s\in \vis(u)$ and any point $t\in \vis(v)$, 

Define $d_{u,v}(s,t) = |{su}| + d(u,v) + |{vt}|$ for two vertices $u,v$ of $\calP$ and two points $s,t\in \bbR^2$; note that when using this notation, we do not always guarantee that $s$ is visible to $u$ and $t$ is visible to $v$, and $s$ or/and $t$ may not even be inside $\calP$. 

%such that $s$ is visible to $u$ and $v$ is visible to $t$. 

For any compact region $R$ in the plane, let $\partial R$ denote its boundary. 
%We use $\partial \calP$ to denote the union of the boundaries of all obstacles of $\calP$. 
Recall that $\calB$ is the boundary of $\calP$. 
%Define $\calI=\calP\setminus\calB$, i.e., $\calI$ consists of all interior points of $\calP$. 
%Define $\calV$ as the set of all vertices of $\calP$. 
For differentiation, we often refer to the vertices of $\calP$ as {\em obstacle vertices} and the edges of $\calP$ as {\em obstacle edges}.
%We use {\em obstacle vertex} to refer to $s$ or a vertex of $\calP$ and use {\em obstacle edge} to refer to an edge of an obstacle of $\calP$.

For any point $p$ in $\calP$, let $\vis(p)$ denote the {\em visibility polygon} of $p$, i.e., $\vis(p)$ consists of all points $q\in \calP$ such that $\overline{pq}\subseteq \calP$. The size of $\vis(p)$ is $O(n)$~\cite{ref:HeffernanAn95}. 
We further define $\vis_{\calB}(p)=\vis(p)\cap \calB$, i.e., the portions of $\calB$ that are visible to $p$, which consists of $O(n)$ segments on $\calB$.

% \begin{figure}[t]
% \begin{minipage}[t]{\textwidth}
% \begin{center}
% \includegraphics[height=2.5in]{spm.pdf}
% \caption{ Illustrating the shortest path map $\spm(s)$. The solid (red) curves are walls and the
% (blue) dotted segments are windows. The anchor of each cell is also shown with a black point.}
% \label{fig:spm}
% \end{center}
% \end{minipage}
% \vspace{-0.15in}
% \end{figure}

\paragraph{Shortest path maps.}
The {\em shortest path map} $\spm(s)$ of a point $s\in \calP$ is a decomposition of $\calP$ into cells such that all points $t$ in the cell $\sigma$ have the same anchor in any shortest \st\ path~\cite{ref:HershbergerAn99,ref:MitchellA91}; the anchor is called the {\em root} of $\sigma$.  
Each edge of $\spm(s)$ is an obstacle edge fragment, an {\em extension segment} (i.e., extended from an obstacle vertex $u$ in the direction away from $v$, where $v$ is the anchor of $u$ in a shortest $s$-$u$ path), or a {\em bisector curve}
which is the locus of points $p$ with $d(s,u)+|{pu}|=d(s,v)+|{pv}|$ for two obstacle vertices $u$ and $v$. 
See Fig.~\ref{fig:spm}. Although an extension segment can be viewed as a degenerated bisector curve, we use bisector curve to refer to the general case only. As such, any point on a bisector curve has two topologically different paths from $s$. 
%Each bisector curve is in general a hyperbola; a special case happens if one of $u$ and $v$ is the anchor of the other, in which case their bisector curve is a straight line.  
%called {\em extension segment} (because it is a segment extending from $v$ along the direction from $u$ to $v$, assuming $u$ is the anchor of $v$). From now on, we use {\em bisector curve} to only refer to a general hyperbola that is not an extension segment. As such, each edge of $\spm(s)$ is an obstacle edge fragment, a bisector curve, or an extension segment. 
%Following the notation in~\cite{ref:Eriksson-BiqueGe15}, we differentiate between two types of bisector curves: {\em walls} and {\em windows}. A bisector curve of $\spm(s)$ is a {\em wall} if there exist two topologically different shortest paths from $s$ to each point of the edge; otherwise (i.e., the above special case) it is a {\em window} (e.g., see Fig.~\ref{fig:spm}). 
Following the notation in~\cite{ref:Eriksson-BiqueGe15}, we call the intersection of two bisector curves a {\em triple point}, which has at least three topologically different shortest paths from $s$.  
%meaning that it is the intersection of at least two bisector curves. 
%As such, each edge of $\spm(s)$ is an obstacle edge fragment, a wall, or a window. 

It is known that $\spm(s)$ is of size $O(n)$~\cite{ref:HershbergerAn99,ref:MitchellSh96} and can be computed in $O(n\log n)$ time~\cite{ref:HershbergerAn99} or in $O(n+h\log h)$ time after $\calP$ is triangulated~\cite{ref:WangA23} (note that triangulating $\calP$ can be done in $O(n+h\log h)$ time by a recent algorithm of Chan~\cite{ref:ChanTr26}). 
%(note that triangulating $\calP$ can be done in $O(n\log n)$ time or in $O(n+h\log^{1+\epsilon}h)$ time~\cite{ref:Bar-YehudaTr94}). 

%For any subset $R$ of $\calP$ (e.g., $R=\calB$), the portion of $\spm(s)$ restricted to $A$ is denoted by  $\spm_A(s)$, i.e., $\spm_A(s)=\spm(s)\cap A$. For example, 
Recall that $\spmb(s)$ refers to the portion of $\spm(s)$ restricted to $\calB$, the boundary of $\calP$. Hence, $\spmb(s)$ consists of a set of segments each of which belongs to a cell of $\spm(s)$.

\paragraph{Shortest path trees (SPT) and SPT-equivalence decompositions.}
For a point $s\in \calP$, the {\em shortest path tree}, denoted by $\spt(s)$, is a spanning tree of $s$ and the vertices of $\calP$ such that the path in the tree from $s$ to any vertex is a shortest path in $\calP$. 
%By our general position assumption, $\spt(u)$ is unique for any obstacle vertex $u$. 
Given $\spm(s)$, $\spt(s)$ can be easily obtained in $O(n)$ time. 

The {\em SPT-equivalence decomposition} of $\calP$, denoted by $\Psispt$, is the subdivision of $\calP$ into cells such that $\spt(s)$ is combinatorially the same for all points $s$ in the same cell. Note that $\Psispt$ can be obtained by overlaying the shortest path maps $\spm(u)$ for all obstacle vertices $u$ of $\calP$~\cite{ref:ChiangTw99}. 
Since each shortest path map is of size $O(n)$, the worst-case complexity of $\Psispt$ is $O(n^4)$ and this bound is tight~\cite{ref:ChiangTw99}. 
%Also, it is not difficult to see that $\Psispt$ is a refinement of $\Psivis$, i.e., each cell of $\Psispt$ is contained in a cell of $\Psivis$, because if two points have topologically different visibility polygons, then they must have different shortest path trees. 

%In general, let $A$ be a subset of $\calP$, we use $\Psispt_A$ to denote the subdivision of $A$ into cells so that $\spt(s)$ is combinatorially the same for all points $s$ in the same cell, i.e., $\Psispt_A=\Psispt\cap A$. 

Define $\Psisptb$ as the subdivision of $\calB$ into segments so that $\spt(s)$ is combinatorially the same for all points $s$ in the same segment. It is not difficult to see that $\Psisptb$ can be obtained by overlapping the segments of $\spmb(u)$ for all obstacle vertices $u$. As each $\spmb(u)$ has $O(n)$ segments, $\Psisptb$ has $O(n^2)$ segments. This bound is also tight in the worst case. As such, $\Psisptb$ can be computed in $O(n^2\log n)$ time.

\paragraph{Shortest path map (SPM) equivalence decompositions.}
We follow the definitions of $\Psibb$, $\Psibp$, $\Psipb$, and $\Psipp$ (which is also $\Psi$) in Section~\ref{sec:intro}. As all obstacle vertices are on $\calB$, both $\Psibb$ and $\Psibp$ are refinements of $\Psisptb$ and thus their sizes are $\Omega(n^2)$ in the worst case~\cite{ref:ChiangTw99}. Similarly, both $\Psipb$ and $\Psipp$ are refinements of $\Psispt$ and thus their sizes are $\Omega(n^4)$ in the worst case.

\paragraph{General position assumption.}
For ease of exposition, we make the following general position assumption for $\calP$: 
%(1) No three obstacle vertices are colinear; 
(1) For any obstacle vertex $s$, $\spm(s)$ does not have a vertex of degree larger than three; 
(2) for any point $s\in \calB$, $\spm(s)$ does not have a vertex of degree larger than four; 
(3) for any point $s\in \calP\setminus \calB$, $\spm(s)$ does not have a vertex of degree larger than five; 
%(1) For any obstacle vertex $v$, there does not exist another obstacle vertex $s$ and two distinct obstacle vertices $u_i$, $1\leq i\leq 2$, so that $|su_1|+d(u_1,v)=|su_2|+d(u_2,v)$; 
%(2) for any obstacle vertex $v$, there does not exist a point $s\in \calB$ and three distinct obstacle vertices $u_i$, $1\leq i\leq 3$, all visible to $s$, so that $|su_i|+d(u_i,v)$ is the same for all $i=1,2,3$; 
%(3) for any obstacle vertex $v$, there does not exist a point $s$ in the interior of $\calP$ and four distinct obstacle vertices $u_i$, $1\leq i\leq 4$, all visible to $s$, so that $|su_i|+d(u_i,v)$ is the same for all $1\leq i\leq 4$. 
%(note that this assumption is consist with the usual assumption that for each obstacle vertex $v$, for any point $s\in $); 
(4) every pair of obstacle vertices has a unique shortest path;
(5) no three obstacle vertices are collinear.

\section{The decomposition $\boldsymbol{\Psibb}$}
\label{sec:psibb}

Recall that $\Psibb$ is the decomposition of $\calB$ into segments such that $\spmb(s)$'s for all points $s$ in the same segment are topologically equivalent. 
%In the following, we first present an output-sensitive algorithm to compute $\Psibb$ and then prove an $O(n^{4+\epsilon})$ upper bound for $|\Psibb|$. 

%\subsection{Algorithm for computing $\boldsymbol{\Psibb}$}
%\label{sec:algopsibb}
Recall that $\Psibb$ is a refinement of $\Psisptb$. 
We first compute $\Psisptb$, which takes $O(n^2\log n)$ time as discussed in Section~\ref{sec:pre}.

%Let $e_s=\overline{ab}$ be a segment of $\Psisptb$. Without loss of generality, we assume that $a$ is the origin and $b$ is at the $x$-axis and to the right of $a$. 

\paragraph{Event points.}
Consider a segment $e_s$ of $\Psisptb$. 
Suppose we move a point $s$ on $e_s$ from one endpoint to the other. 
$\spmb(s)$ will change topologically in the following two situations; we call $s$ an {\em event point} when either situation happens. 
%Since $e_s$ is an edge of $\Psisptb$, $\spmb(s)$ will change topologically when $s$ is crossing such a point $s'\in e_s$ as follows (we call $s'$ an {\em event point} for $s$). 
%(1) A wall endpoint on an obstacle edge $e$ meets an endpoint of $e$. 
\begin{enumerate}
    \item At the event point, two bisector curves of $\spm(s)$ have their endpoints on $\calB$ meet at an interior point of an obstacle edge $e$. Right before $s$ reaches the event point, the two bisector curves have their endpoints both on $e$ (i.e., as $s$ moves towards the event point, the two bisector endpoints move closer on $e$). See Fig.~\ref{fig:bbevent1}, where (a), (b), and (c) illustrate the scenarios when $s$ is before, at, and after the event point, respectively. 
    \item This event case can be considered the inverse of the first event case.
    At the event point, two bisector curves of $\spm(s)$ have their endpoints on $\calB$ meet at an interior point of an obstacle edge $e$. Right before $s$ reaches the event point, the two bisector curves intersect at a triple point $q$ in the interior of $\calP$ (i.e., there is a third bisector curve $\gamma$ whose endpoint is $q$ and whose other endpoint is at a point $p\in e$, and    
    as $s$ moves towards the event point, $q$ moves on $\gamma$ towards $p$). See Fig.~\ref{fig:bbevent1}, where (c), (b), and (a) illustrate the scenarios when $s$ is before, at, and after the event point, respectively. 
\end{enumerate}

As such, when $s$ moves between two adjacent event points on $e_s$, $\spmb(s)$ does not change combinatorially. Hence, to construct $\Psibb$, it suffices to compute all event points on all segments $e_s$ of $\Psisptb$. 

\begin{figure}[t]
\begin{minipage}[t]{\textwidth}
\begin{center}
\includegraphics[height=0.8in]{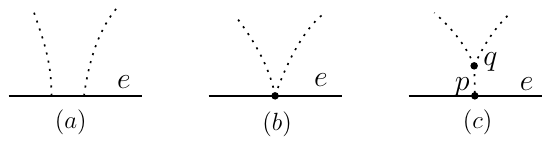}
\caption{ Illustrating the combinatorial change of $\spmb(s)$.}
\label{fig:bbevent1}
\end{center}
\end{minipage}
\vspace{-0.1in}
\end{figure}

As a ``warm-up,'' we prove an $O(n^{5})$ upper bound by using a simple brute-force method; similar methods were used for other problems in the previous work, e.g., \cite{ref:BaeTh13,ref:ChiangTw99,ref:WangOn18}. 

\paragraph{An $\boldsymbol{O(n^5)}$ upper bound for $\boldsymbol{|\Psibb|}$.}
Each edge $e$ of $\calB$ is processed as follows. For any subset $V$ of three (not necessarily distinct) obstacle vertices $u_i$, $1\leq i\leq 3$, let $\calO_V$ be the overlay of $\spmb(u_i)$, $1\leq i\leq 3$, which comprises $O(n)$ segments. For each segment $e_t\in \calO_V$, let $v_i$ be the root of the cell of $\spm(u_i)$ that contains $e_t$, $1\leq i\leq 3$. We solve the equation $d_{u_1,v_1}(s,t)=d_{u_2,v_2}(s,t)=d_{u_3,v_3}(s,t)$ for $s\in e$ and $t\in e_t$. Since the equation has $O(1)$ solutions, there are $O(1)$ pairs  $(s,t)$ with $s\in e$ and $t\in e_t$. As there are $O(n)$ segments in $\calO_V$, there are $O(n)$ such pairs $(s,t)$ with $s\in e$ and $t\in \calB$ for $V$. Enumerating all subsets $V$ of three obstacle vertices gives an upper bound $O(n^4)$ for all such pairs $(s,t)$ with $s\in e$. Processing all obstacle edges $e$ as above gives an $O(n^5)$ upper bound for all such pairs $(s,t)$ with $s\in \calB$. 

Consider any event point $s$ of $\Psibb$, where $\spmb(s)$ makes combinatorial changes. Let $e$ be the obstacle edge that contains $s$. Then, there must be a point $t\in \calB$ so that $(s,t)$ is a solution to an equation for a subset $V$ of three obstacle vertices as above. As such, the number of event points of $\Psibb$ must be $O(n^5)$. 

The above proof can be easily turned into an algorithm to compute all event points and thus $\Psibb$ in $O(n^5\log n)$ time. 
\medskip

%\paragraph{Improvement.}
In the rest of this section, we present an improved $O(n^{4+\epsilon})$ time algorithm to compute $\Psibb$. Our approach also establishes an upper bound $O(n^{4+\epsilon})$ for $|\Psibb|$.
%Using similar techniques, we will also prove the upper bounds for $\Psi_{\xi}(\calB)$ and $\Psi_{B_s}(B_t)$.

\subsection{Defining functions}

For each obstacle vertex $u$, we define a function $f_u$ as follows. 

\begin{definition}\label{def:fun}
For any obstacle vertex $u$, for any point $s\in \visb(u)$ and any point 
$t\in \calB$, define $f_u(s,t)=d_{u,v_t}(s,t)$, where $v_t$ is the root of the cell of $\spm(u)$ that contains $t$.    
\end{definition}

Recall that $\visb(u)$ is the portion of the visibility polygon $\vis(u)$ of $u$ on $\calB$. For simplicity, we assume $f_u(s,t)=\infty$ if $s$ is not visible to $u$. 

If we use the positions of $s$ and $t$ on $\calB$ as variables, then $f_u(s,t)$ is a bivariate piecewise algebraic function of constant degree (but not necessarily of constant complexity). Also, $f_u(s,t)$ defines a graph that is a surface patch in $\bbR^3$ in which the first two coordinates correspond to the positions of $s$ and $t$ on $\calB$ and the third coordinate is $f_u(s,t)$. 
A convenient way to think about this is to place all edges of $\calB$ on a line, in which $s$ and $t$ change. More specifically, obstacles can be arbitrarily ordered, but edges of the same obstacle of $\calP$ should be placed consecutively following their order along the obstacle boundary (refer to \cite{ref:BaeQu12} for a detailed treatment of this; in \cite{ref:BaeQu12}, the authors differentiated the point $s\in \calB$ from its ``coordinate'' on $\calB$ with a different notation; here to simplify the notation, we use $s$ to denote both a point on $\calB$ and its coordinate). 
We refer to the copy of the line for $s$ (resp., $t$) as the {\em $s$-axis} (resp., {\em $t$-axis}); they together form the $st$-plane. We also use $f_u(s,t)$ to refer to the graph or the surface patch defined by it. 
%For any point $p$ in the graph, let $\eta_s(p)$ (resp., $\eta_t(p)$) denote the projection of $p$ onto the $s$-axis (resp., $t$-axis). 

For the function $f_u(s,t)$, while the domain of $t$ is the entire $\calB$, the domain of $s$ comprises the $O(n)$ segments of $\visb(u)$. We call the endpoints of the segments of $\visb(u)$ the {\em breakpoints induced by $u$} on the $s$-axis. Also, for each segment of $\spm_{\calB}(u)$, the vertex $v_t$ (as in Definition~\ref{def:fun}) is the same for all points $t$ on the segment. We call the endpoints of the segments of $\spm_{\calB}(u)$ the {\em breakpoints induced by $u$} on the $t$-axis. For each segment $\sigma_s\in \visb(u)$ and each segment $\sigma_t\in \spmb(u)$, $f_u(s,t)=d_{u,v}(s,t)$ for any $s\in \sigma_s$ and $t\in \sigma_t$, where $v$ is the root of the cell of $\spm(u)$ containing $\sigma_t$, and thus $f_u(s,t)$ is of constant complexity on $s\in \sigma_s$ and $t\in \sigma_t$; we refer to it as an {\em elementary function}. As such, since both $\visb(u)$ and $\spmb(u)$ have $O(n)$ segments, $f_u(s,t)$ can be decomposed into $O(n^2)$ elementary functions each of which is defined on $s\in \sigma_s$ and $t\in \sigma_t$ for 
a segment $\sigma_s$ of $\visb(u)$ and a segment $\sigma_t$ of $\spmb(u)$. 
%by breakpoints induced by $u$ on the $s$-axis and $t$-axis. 
%As such, we can partition the $st$-plane into a grid of size $O(n^2)$ by $O(n)$ axix-parallel lines so that $f_u(s,t)$ on each grid cell is an elementary function. 
For differentiation, we refer to the original function $f_u(s,t)$ for $s\in \visb(u)$ and $t\in \calB$ as a {\em compound function}. 

%The observation below follows from the definition of $f_u$. 
%\begin{observation}\label{obser:superfun}
%For any two points $s',t'\in \calB$, the surface patch in $\bbR^3$ defined by $f_u(s,t)$ has at most one point whose $s$-coordinate and $t$-coordinate are equal to $s'$ and $t'$, respectively. 
%\end{observation}

% \paragraph{Remark.}
% Although Observation~\ref{obser:superfun} looks simple, our way of defining functions is crucial to the success of our approach. Indeed, as will be clear later, the observation makes it possible to adapt the random sampling analysis technique of \cite{ref:HalperinNe94,ref:SharirAl94,ref:SharirDa95} to our problem setting. 
% \medskip

Let $F$ denote the set of compound functions $f_u(s,t)$ for all obstacle vertices $u$. 
For convenience, we also consider $F$ the set of elementary functions of all these compound functions. 
%an elementary function of a compound function $f_u(s,t)$ is also considered an elementary function of $F$. 
As such, $F$ has $O(n^3)$ elementary functions. 

\paragraph{Lower envelope.}
%Define $\calA(F)$ to be the arrangement of the all surface paths defined by the functions of $F$.
Define $\calL(F)$ to be the lower envelope of all functions of $F$. 
%Let $V(F)$ denote the set of vertices in the interior of $\calL(F)$. 
By definition, for two points $s,t\in \calB$, there is a unique point $p\in \calL(F)$ whose projection onto the $st$-plane is $(s,t)$. Further, if $s$ is not visible to $t$, then the third coordinate of $p$ must be equal to $d(s,t)$, because a shortest \st\ path must contain an obstacle vertex. 

Due to our general position assumption, the common intersection of three elementary functions of $F$ is a vertex of $\calL(F)$. 
%Clearly, each vertex of $\calL(F)$ is the common intersection of at least three elementary functions of $F$ (actually exactly three due to our general position assumption). 
%(in fact, our general position assumption guarantees that each vertex of $\calL(F)$ is the common intersection of exactly three elementary functions). 
%(we could make the general position assumption stronger so that every vertex of $\calA(F)$ is defined by exactly three elementary functions). 
The following observation follows from our general position assumption. 
%which implies that every three elementary functions of $F$ can define $O(1)$ vertices of $\calA(F)$. 

%\begin{observation}\label{obser:vertexfun}
%Each interior vertex of $\calL(F)$ is defined by exactly three elementary functions of $F$. 
%\end{observation}

%For any two points $s,t\in\calB$, unless $s$ is visible to $t$, the $z$-coordinate of the unique point of $\calL(F)$ with the first two coordinates $(s,t)$ gives $d(s,t)$. 

\begin{observation}\label{obser:threefun}
Every three elementary functions of $F$ have $O(1)$ common intersection points. 
\end{observation}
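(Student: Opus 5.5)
Each elementary function is an explicit algebraic expression of constant degree on a constant-complexity domain, so a common intersection point of three of them is a solution of two polynomial equations in two unknowns. I will bound the number of such solutions by Bézout's theorem, using the general position assumption to rule out the degenerate case of infinitely many solutions.

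First I write an elementary function concretely. It has the form $d_{u,v}(s,t)=|su|+d(u,v)+|vt|$ on $s\in\sigma_s$, $t\in\sigma_t$. Here $\sigma_s$ is a segment of $\visb(u)$, $\sigma_t$ is a segment of $\spmb(u)$, and $d(u,v)$ is a constant. Let $\sigma_s$ lie on obstacle edge $e$ and $\sigma_t$ on edge $e'$. Parametrize $s$ and $t$ affinely by their $s$- and $t$-axis coordinates. Then $|su|=\sqrt{Q_1(s)}$ and $|vt|=\sqrt{Q_2(t)}$ for quadratic polynomials $Q_1,Q_2$.

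Next, consider three elementary functions $g_i=|su_i|+c_i+|v_it|$, $i=1,2,3$. Their common intersection points over the common part of their domains are the solutions of $g_1=g_2$ and $g_2=g_3$. Only the part where all three domains overlap matters, which is the product of one $s$-interval and one $t$-interval. Each equation involves four square roots of quadratics. Repeated squaring removes the radicals and gives two polynomial equations $P_1(s,t)=0$ and $P_2(s,t)=0$ of constant degree; each squaring at most doubles the degree, so degree 16 suffices. Squaring can only add solutions, so the true intersection points are among the common zeros. By Bézout's theorem there are $O(1)$ common zeros, unless $P_1$ and $P_2$ share a nonconstant factor.

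The main obstacle is excluding that common factor, i.e., a one-dimensional curve of points where all three functions agree. If two of the triples $(u_i,v_i,c_i)$ coincide, then the two functions are identical, not merely intersecting, so I treat the three as distinct functions. Suppose the three distinct functions agreed along a curve in the $st$-plane. Fix a generic $t$ on that curve. Then the $s$-dependent parts $|su_i|+(c_i+|v_it|)$ coincide at a point $s\in\calB$ with three distinct anchors $u_i$. Fixing instead $s$ and varying $t$ gives a degenerate configuration in $\spm(s)$ or in $\spm(u_i)$ along a continuum. I would argue that such a continuum violates general position assumptions (1)–(2), which forbid high-degree vertices in $\spm$ of vertices and boundary points, and (4)–(5), which exclude coincident shortest paths and collinear vertices. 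In particular, no two distinct functions $|su_i|+|v_it|+c_i$ can coincide identically on an open set.

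If this reduction feels too delicate, I will adopt the paper's stance: the general position assumption is meant to guarantee exactly that the algebraic system is zero-dimensional, so I will state it as part of the assumption that no three elementary functions intersect along a curve. Given zero-dimensionality, Bézout's theorem yields the $O(1)$ bound immediately.
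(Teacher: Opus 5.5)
Your proposal follows the paper's proof: a common intersection point is a solution $(s,t)$ of $d_{u_1,v_1}(s,t)=d_{u_2,v_2}(s,t)=d_{u_3,v_3}(s,t)$, finiteness of that system is ultimately taken from general position, and your radical elimination and B\'{e}zout step only make explicit the algebra the paper leaves implicit. Two caveats do not affect correctness, since your fallback of taking zero-dimensionality as part of general position is exactly the level at which the paper argues (alternatively, bound the finite semialgebraic solution set by its $O(1)$ connected components): your derivation from assumptions (1)--(2) would fail as sketched, because two of the three functions may share the same $u_i$ and a common intersection of elementary functions need not lie on $\calL(F)$, so it need not involve shortest paths; and a common factor of $P_1,P_2$ can also arise from spurious branches created by squaring, not only from a curve of true solutions.
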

\begin{proof}
Consider three elementary functions $f_{u_i}(s,t)=d_{u_i,v_i}(s,t)=|su_i|+d(u_i,v_i)+|v_it|$, for $i=1,2,3$. A common intersection point of these three functions corresponds to a pair of points $(s,t)$ that is a solution of the equation $d_{u_1,v_1}(s,t)=d_{u_2,v_2}(s,t)=d_{u_3,v_3}(s,t)$. Since these are three different elementary functions, for any $i,j\in\{1,2,3\}$ with $i\neq j$, either $u_i\neq u_j$ or $v_i\neq v_j$. Due to our general position assumption, the equation $d_{u_1,v_1}(s,t)=d_{u_2,v_2}(s,t)=d_{u_3,v_3}(s,t)$ has $O(1)$ solutions. The observation thus follows. 
\end{proof}

The following observation reduces the problem of bounding $|\Psibb|$ to bounding the combinatorial complexity of $\calL(F)$, by showing that every event point of $\Psibb$ corresponds to a distinct vertex of $\calL(F)$.  

\begin{observation}\label{obser:eventvertex}
Every event point of $\Psibb$ corresponds to a distinct vertex of $\calL(F)$, and 
$|\Psibb|=O(n^2+|\calL(F)|)$.
\end{observation}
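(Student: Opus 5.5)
Fix an event point $s^*$ in the interior of a segment $e_s$ of $\Psisptb$; the $O(n^2)$ endpoints of these segments are charged to the $O(n^2)$ term. The plan is to exhibit a point $t^*\in\calB$ such that $(s^*,t^*)$ is the projection of a vertex of $\calL(F)$, and then check that distinct event points give distinct vertices.

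By the event description, in both cases there is an interior point $t^*$ of an obstacle edge $e$ at which the endpoints of two bisector curves of $\spm(s^*)$ coincide. Hence $t^*$ has three topologically different shortest paths from $s^*$. We may take the anchors of $s^*$ along these paths to be obstacle vertices $u_1,u_2,u_3$ visible from $s^*$, and the anchors of $t^*$ to be $v_1,v_2,v_3$. Each $\pi(s^*,t^*)$ then has the form $\overline{s^*u_i}\cup\pi(u_i,v_i)\cup\overline{v_it^*}$. Since $t^*$ is not visible from $s^*$ (otherwise there would be a unique straight shortest path), $d(s^*,t^*)=\min_{f\in F} f(s^*,t^*)$.

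We need $f_{u_i}(s^*,t^*)=d_{u_i,v_i}(s^*,t^*)$, so $t^*$ must lie in the cell of $\spm(u_i)$ rooted at $v_i$. This follows from the subpath optimality of $\pi(u_i,t^*)$, with care at cell boundaries: by general position, $t^*$ lies in the interior of the relevant segment of $\spmb(u_i)$. The three pairs $(u_i,v_i)$ are distinct because the paths are topologically different, so the three elementary functions are distinct. All three attain the minimum value $d(s^*,t^*)$ at $(s^*,t^*)$, so $(s^*,t^*)$ lies on a common intersection of three elementary functions on $\calL(F)$, which is a vertex of $\calL(F)$.

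For distinctness, the vertex determines $s^*$ as its $s$-coordinate, so distinct event points map to distinct vertices. The count is then $|\Psibb|\le O(n^2)+\#\{\text{event points}\}\le O(n^2+|\calL(F)|)$.

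The main obstacle is the middle step, justifying that the three anchors of $t^*$ on the paths from $s^*$ match the roots in $\spm(u_i)$. The risk is that $t^*$ lies on a bisector of some $\spm(u_i)$, or that some $u_i$ coincides with another. I would first try to settle this via general position assumption (2), which bounds vertex degrees of $\spm(s)$ for $s\in\calB$, and by arguing that uniqueness of vertex-to-vertex shortest paths makes each $v_i$ the unique root in $\spm(u_i)$.
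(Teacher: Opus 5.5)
Your argument is essentially the paper's: take the point $t^*\in\calB$ with three topologically different shortest paths from $s^*$, observe that $(s^*,t^*,d(s^*,t^*))$ lies on $\calL(F)$ and on three distinct elementary functions $d_{u_i,v_i}$, so it is a vertex, and get injectivity from the $s$-coordinate.

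The step you flag as the main obstacle needs no general position. For every $s$ visible to $u$ we have $f_u(s,t)=|su|+d(u,t)$, whichever root is used on a cell boundary of $\spm(u)$. By subpath optimality this equals $d(s^*,t^*)$ at $(s^*,t^*)$.

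In fact, the claim you planned to prove (that $t^*$ is interior to a segment of $\spmb(u_i)$) is false whenever $u_i=u_j$ for some $i\neq j$. Two of the paths then leave $s^*$ through the same vertex, so $t^*$ is a bisector endpoint of $\spm(u_i)$. This does happen in genuine events, and general position assumption (2) does not exclude it. In that case the two functions are simply different elementary pieces of the same compound function $f_{u_i}$, and the point is still a vertex of $\calL(F)$.
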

\begin{proof}
Recall that $\Psibb$ consists of vertices of $\Psisptb$ plus event points 
%(i.e., those in our algorithm for computing $\Psibb$) 
in the interior of segments of $\Psisptb$. Since $\Psisptb$ has $O(n^2)$ vertices, to prove $|\Psibb|=O(n^2+|\calL(F)|)$, it suffices to show that the number of event points is at most $|\calL(F)|$. To this end, we show that every event point of $\Psibb$ corresponds to a distinct vertex of $\calL(F)$. 

Consider such an event point $s\in \Psibb$. By definition, there is a point $t\in \calB$ such that there are three topologically different shortest $s$-$t$ paths $\overline{su_i}\cup \pi(u_i,v_i)\cup \overline{v_it}$ with $i=1,2,3$. Hence, $(s,t,d(s,t))$ must be a point in $\bbR^3$ on the lower envelope $\calL(F)$ and is also a common intersection of three elementary functions of $F$: $d_{u_i,v_i}(s,t)$, $1\leq i\leq 3$. As such, $(s,t,d(s,t))$ must be a vertex of $\calL(F)$. 

Note that no other event point corresponds to the same vertex $(s,t,d(s,t))$. Indeed, to have another event point $s'$ also correspond to $(s,t,d(s,t))$, $s=s'$ must hold. 
%Since $s'\neq s$, we know that $s'$ cannot correspond to $(s,t,d(s,t))$. 

We also remark that it is possible that there is another point $t'\in \calB$ such that $t'\neq t$ and there are three topologically different shortest $s$-$t'$ paths. In that case, $s$ corresponding to another vertex $(s,t',d(s,t'))$ of $\calL(F)$. Hence, this observation also holds if an event point $s$ is caused by multiple topological changes of $\spmb(s)$, in which case $s$ is considered multiple event points that are co-located. 
\end{proof}

In what follows, we focus on computing the vertices of $\calL(F)$. As $F$ has $O(n^3)$ elementary bivariate functions, applying the result of \cite{ref:HalperinNe94,ref:SharirAl94,ref:SharirDa95} directly can give $|\calL(F)|=O(n^{6+\epsilon})$ and compute all vertices in $O(n^{6+\epsilon})$. In the following, we present an improved algorithm of $O(n^{4+\epsilon})$ time and prove the bound $|\calL(F)|=O(n^{4+\epsilon})$. 

\subsection{Computing the vertices of $\boldsymbol{\calL(F)}$}

For convenience, when considering functions $f_{u}(s,t)$ of $F$, we use $\calB_s$ (resp., $\calB_t$) to refer to the copy of $\calB$ for $s$ (resp., $t$), i.e., $s$ changes on $\calB_s$ while $t$ changes on $\calB_t$. 

For a line segment $e_s$ on an obstacle edge of $\calB_s$ and a line segment $e_t$ on an obstacle edge of $\calB_t$, let $\calL(F)[e_s,e_t]$ denote the portion of $\calL(F)$ of the functions $f_u(s,t)\in F$ restricted to $s\in e_s$ and $t\in e_t$. We will partition $\calB_t$ into $O(n)$ line segments $e_t$ and present an algorithm to compute all vertices of $\calL(F)[\calB_s,e_t]$ in $O(n^{3+\epsilon})$ time. Doing this for all $O(n)$ such segments $e_t$ of $\calB_t$ will compute all vertices of $\calL(F)$ in $O(n^{4+\epsilon})$ time. To compute the vertices of $\calL(F)[\calB_s,e_t]$, we partition $\calB_s$ into $O(n)$ line segments $e_s$ and present an algorithm to compute all vertices of $\calL(F)[e_s,e_t]$ in $O(n^{2+\epsilon})$ time. Doing this for all $O(n)$ such segments $e_s$ will compute all vertices of $\calL(F)[\calB_s,e_t]$ in $O(n^{3+\epsilon})$ time. 
%The details are discussed in the following. 
%In what follows, we present the details on how to make the partitions on $\calB_s$ and $\calB_t$, and the corresponding algorithms. 

%Consider an obstacle edge $e_t\subseteq \calB_t$. The breakpoints partition $e_t$ into multiple {\em elementary segments}. We know that $\calB_t$ has $O(n^2)$ elementary segments. We assume that $e_t$ has more than $n$ 
\paragraph{The partition of $\calB_t$.}
Recall that there are $O(n^2)$ breakpoints on $\calB_t$, which partition $\calB_t$ into $O(n^2)$ line segments, and we call them {\em elementary intervals}. For each obstacle edge $e\subseteq \calB_t$, starting from an endpoint of $e$, we group every $n$ adjacent elementary intervals on $e$ into a {\em super interval}, except that the last super interval may contain less than $n$ elementary intervals. As $\calB_t$ has $O(n^2)$ elementary intervals and $O(n)$ obstacle edges, $\calB_t$ is partitioned into $O(n)$ super intervals each of which consists of at most $n$ elementary intervals. This is our partition for $\calB_t$. Note that all super intervals can be computed in $O(n^2)$ time, assuming that the shortest path maps of all obstacle vertices are available. Since the shortest path map of each obstacle vertex can be computed in $O(n\log n)$ time~\cite{ref:HershbergerAn99}, we conclude that all super intervals of $\calB_t$ can be obtained in $O(n^2\log n)$ time. 

Let $e_t$ be a super interval of $\calB_t$. 
We next show that computing $\calL(F)[\calB_s,e_t]$ can be done in $O(n^{3+\epsilon})$ time. To this end, we first introduce our partition for $\calB_s$ with respect to $e_t$. 

\paragraph{The partition of $\calB_s$.}
Recall that $e_t$ contains at most $n$ elementary intervals. For notational convenience, we assume that $e_t$ contains exactly $n$ elementary intervals. Then, $e_t$ has $n-1$ breakpoints in its interior (these breakpoints partition $e_t$ into $n$ elementary intervals). For each obstacle vertex $u$, let $k_u$ denote the number of breakpoints of $e_t$ induced by $u$. As such, we have $\sum_{u\in V}k_u=n-1$ (recall that $V$ is the set of all obstacle vertices of $\calP$). 

Recall that $\calB_s$ has $O(n^2)$ breakpoints, which partition $\calB_s$ into $O(n^2)$ line segments, and we also call them {\em elementary intervals} for $s$. For each breakpoint $p$, we define $k_p=k_u$, where $u$ is the obstacle vertex that induces $p$. Consider an obstacle edge $e\subseteq \calB_s$. Let $a$ and $b$ be the two endpoints of $e$, respectively. Starting from the elementary interval containing $a$, we group adjacent elementary intervals on $e$ as many as possible into a {\em super interval} $\xi$ until one of the following two cases to happen (the last elementary interval that causes one of these cases happens is included in $\xi$): 
%(1) the super interval contains $n$ elementary intervals; 
(1) the sum of $k_p$ of all breakpoints in the interior of $\xi$ is larger than $2n$; (2) $\xi$ contains $b$. 
If $\xi$ does not contain $b$, then starting from the next elementary interval on $e$ after $\xi$, we create another super interval. We continue this until $b$ is included in last super interval. We create super intervals on other obstacle edges of $\calB_s$ in the same way. 
Since $k_p\leq n-1$ for each breakpoint $p$, we have the following observation. 

\begin{lemma}\label{lem:superinterval}
    For each super interval of $\calB_s$, the sum of $k_p$ of all breakpoints $p$ in its interior is no more than $3n$. 
\end{lemma}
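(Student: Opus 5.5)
The plan is to argue directly from the greedy construction of the super intervals on an obstacle edge $e\subseteq\calB_s$. Fix a super interval $\xi$. The construction adds elementary intervals to $\xi$ one at a time, starting from an endpoint, and stops as soon as either the running sum of $k_p$ over breakpoints in the interior of $\xi$ exceeds $2n$ or $\xi$ reaches $b$. The idea is to track that running sum just before the last elementary interval is added.

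Let $\xi'$ be the union of all elementary intervals of $\xi$ except the last one added. If $\xi$ consists of a single elementary interval, its interior contains no breakpoints, and the sum is $0\le 3n$. Otherwise, the stopping rule was not triggered after $\xi'$ was formed. So $\xi'$ does not contain $b$, and the sum of $k_p$ over breakpoints in the interior of $\xi'$ is at most $2n$.

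When the last elementary interval is appended, the interior of $\xi$ gains exactly one new breakpoint $p$. This is the common endpoint of $\xi'$ and the appended interval, which was an endpoint of $\xi'$ and is now interior to $\xi$. The other endpoint of the appended interval becomes an endpoint of $\xi$, not an interior point. Hence the sum over the interior of $\xi$ is at most $2n+k_p$.

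Since $k_p=k_u$ for the vertex $u$ inducing $p$, and $\sum_{u\in V}k_u=n-1$, we have $k_p\le n-1$. The total is therefore at most $2n+n-1<3n$, which proves the lemma.

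The only delicate point is the bookkeeping of which breakpoints lie in the interior of $\xi$ versus on its boundary. One should also note that several breakpoints might coincide at a single location, since breakpoints are induced by different vertices. In that case each is counted separately, and the coinciding ones together count as the shared endpoint. This could add more than one $k_p$ at once. The fix is to treat breakpoints as a multiset in which coincident ones are separate zero-length steps, or to invoke the general position assumption so that distinct vertices induce distinct breakpoints. With either convention each step adds a single $k_p\le n-1$, and the bound $3n$ holds.
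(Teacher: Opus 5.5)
Your proof is correct and follows the paper's argument. You split $\xi$ into $\xi'$ and the last elementary interval, use the fact that the stopping rule did not fire on $\xi'$ to get a bound of $2n$, and add $k_{p'}\le n-1$ for the single breakpoint that becomes interior. Your extra remark on coincident breakpoints covers a point the paper leaves to general position, and it needs no special convention: coincident breakpoints induced by distinct vertices have $k_p$'s summing to at most $\sum_{u\in V}k_u=n-1$, so the $3n$ bound holds anyway.
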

\begin{proof}
Consider a super interval $\xi$. If $\xi$ consists of only one elementary interval, then the observation trivially follows as there is no breakpoint in the interior of $\xi$. Otherwise, let $\xi'\subseteq \xi$ be the sub-interval of $\xi$ excluding the last elementary interval, denoted by $\xi''$. Then, $\xi'$ and $\xi''$ are separated by a breakpoint $p'$. By definition, the sum of $k_p$ of all breakpoints $p$ in the interior of $\xi'$ must be no more than $2n$. Since $k_{p'}\leq n-1$ and the sum of $k_p$ of all breakpoints $p$ in the interior of $\xi$ is equal to $k_{p'}$ plus the sum of $k_p$ of all breakpoints $p$ in the interior of $\xi'$, we obtain that  the sum of $k_p$ of all breakpoints $p$ in the interior of $\xi$ is at most $3n$.
\end{proof}

The following lemma provides an upper bound on the number of super intervals of $\calB_s$. 

\begin{lemma}
    The number of super intervals of $\calB_s$ is $O(n)$. 
\end{lemma}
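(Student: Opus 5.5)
We bound the number of super intervals by charging each one either to an obstacle edge or to a large total weight of breakpoints. On a fixed obstacle edge $e\subseteq\calB_s$, the greedy grouping closes a super interval $\xi$ for one of two reasons. Either $\xi$ contains the endpoint $b$ of $e$; there is exactly one such super interval per obstacle edge, so $O(n)$ in total. Or the sum of $k_p$ over the breakpoints $p$ in the interior of $\xi$ exceeds $2n$. It therefore suffices to show that there are only $O(n)$ super intervals of the second kind.

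To this end, we bound the total weight $W=\sum_p k_p$ over all breakpoints $p$ of $\calB_s$. By definition, the breakpoints induced by $u$ on the $s$-axis are the endpoints of the $O(n)$ segments of $\visb(u)$. So $u$ induces $O(n)$ breakpoints on $\calB_s$, and each of them has weight $k_u$. Hence
\[
W=\sum_{u\in V} O(n)\cdot k_u = O(n)\cdot \sum_{u\in V}k_u = O(n)\cdot (n-1)=O(n^2),
\]
using $\sum_{u\in V}k_u=n-1$. If a breakpoint is induced by several vertices, we treat it as several co-located breakpoints, one per inducing vertex; this only increases $W$ and does not affect the argument.

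Super intervals are interior-disjoint, and a breakpoint is interior to at most one super interval, so the interior weights of the super intervals of the second kind are disjoint portions of $W$. Each such super interval has interior weight greater than $2n$, so their number is less than $W/(2n)=O(n)$. Adding the $O(n)$ super intervals of the first kind gives $O(n)$ super intervals in total.

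The only step needing care is the weight bound $W=O(n^2)$. It rests on $|\visb(u)|=O(n)$ segments for each $u$, which is stated in the preliminaries, together with the normalization $\sum_u k_u=n-1$ fixed for the super interval $e_t$.
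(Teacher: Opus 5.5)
Your proof is correct and follows the paper's argument essentially step for step. At most one super interval per obstacle edge is closed by reaching the edge endpoint, and each of the others uses up interior weight greater than $2n$ out of a total weight $\sum_{u} O(n)\cdot k_u = O(n)\cdot(n-1) = O(n^2)$; your remarks on interior-disjointness and co-located breakpoints only make explicit what the paper leaves implicit.
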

\begin{proof}
    By definition, there are two cases for a super interval to be created. 
    For the second case, observe that each obstacle edge can have at most one super interval under the second case. Hence, the number of second case super intervals is $O(n)$. 

    We now discuss the first case. For each obstacle vertex $u$, let $P_u$ denote the set of breakpoints induced by $u$. Note that $|P_u|=O(n)$. Recall that $\sum_{u\in V}k_u=n-1$. Hence, the sum of $k_p$ of all breakpoints $p$ on $\calB_s$ is $\sum_{u\in V}\sum_{p\in P_u} k_p = \sum_{u\in V}\sum_{p\in P_u} k_u =O(n)\cdot \sum_{u\in V}k_u=O(n^2)$. For each super interval $\xi$ in the first case, by definition, the total sum of $k_p$ of all breakpoints $p$ in the interior of $\xi$ is larger than $2n$. Therefore, the total number of first case super intervals on $\calB_s$ is $O(n)$. The lemma thus follows. 
\end{proof}

It is not difficult to see that the super intervals of $\calB_s$ can be easily computed in $O(n^2)$ time, assuming that all breakpoints and their induced vertices are already available. The breakpoints can be obtained by computing the visibility polygons $\vis(u)$ of all obstacle vertices $u\in V$. As computing $\vis(u)$ for each $u$ can be done in $O(n\log n)$ time~\cite{ref:HeffernanAn95}, we conclude that all super intervals of $\calB_s$ can be computed in $O(n^2\log n)$ time. 

\begin{lemma}
    For any super interval $e_s$ of $\calB_s$, $\calL(F)[e_s,e_t]$ has $O(n^{2+\epsilon})$ vertices and all vertices can be computed in $O(n^{2+\epsilon})$ time. 
\end{lemma}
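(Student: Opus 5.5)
Restricted to $s\in e_s$ and $t\in e_t$, only $O(n)$ elementary functions of $F$ are relevant. Once that is shown, the result of Halperin--Sharir and Sharir~\cite{ref:HalperinNe94,ref:SharirAl94,ref:SharirDa95} on lower envelopes of constant-sized bivariate algebraic surface patches gives both the bound $O(n^{2+\epsilon})$ and the running time.

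To see that only $O(n)$ elementary pieces appear, fix an obstacle vertex $u$. The elementary functions of $f_u$ restricted to $e_s\times e_t$ are indexed by pairs $(\sigma_s,\sigma_t)$. Here $\sigma_s$ ranges over the pieces of $\visb(u)\cap e_s$, whose number is at most $1+b_u$, where $b_u$ is the number of breakpoints induced by $u$ in the interior of $e_s$. Likewise $\sigma_t$ ranges over the pieces of $\spmb(u)\cap e_t$, whose number is at most $1+k_u$. Thus $u$ contributes at most $(1+b_u)(1+k_u)=1+b_u+k_u+b_uk_u$ pieces. Summing over all $u\in V$:
\begin{itemize}
\item the term $\sum_u 1$ is $O(n)$;
\item $\sum_u k_u=n-1$;
\item $\sum_u b_u$ is at most the number of breakpoints inside $e_s$;
\item $\sum_u b_uk_u=\sum_{p} k_p\le 3n$, summing over breakpoints $p$ in the interior of $e_s$, by Lemma~\ref{lem:superinterval}.
\end{itemize}

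The main obstacle is $\sum_u b_u$. The partition of $\calB_s$ bounds only the $k_p$-weighted count, so a breakpoint induced by a vertex $u$ with $k_u=0$ costs nothing in the construction. Such a $u$ nevertheless contributes $1+b_u$ pieces, all of them with the \emph{same} $\sigma_t$. I would handle this by restricting $f_u$ to $e_t$ and collapsing these pieces into one partially defined function $d_{u,v}(s,t)$. Its $s$-domain is $\visb(u)\cap e_s$, a union of intervals, and this is the real technical point. Within the single segment $e_s$, which lies on one obstacle edge, $\visb(u)\cap e_s$ may be several intervals. Their complement, however, consists of gaps where $f_u=\infty$. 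The cleanest fix is to define on all of $e_s\times\sigma_t$ the extended function $g_u(s,t)=|su|+d(u,v)+|vt|$, which is constant-degree and constant-sized. This is justified because for $s$ not visible to $u$ we have $g_u(s,t)\ge d(s,t)$: indeed $|su|\ge d(s,u)$, so the path through $u$ and $v$ is at least the geodesic distance. Hence $g_u$ never dips strictly below the true envelope value $d(s,t)$, which the genuine functions attain whenever $s$ and $t$ are not mutually visible. I would have to check this replacement carefully. Ties would create spurious envelope vertices, and visible pairs $(s,t)$ need care; the latter I would treat by adding the single function $|st|$ on the visible region, or by the general position assumption. If the replacement is valid, every $u$ contributes at most $(1+k_u)$ extended pieces plus $b_uk_u$ genuine ones, for $O(n)$ in total.

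With $O(n)$ constant-complexity bivariate algebraic surface patches over $e_s\times e_t$, the lower envelope has $O(n^{2+\epsilon})$ complexity and can be computed in $O(n^{2+\epsilon})$ time by~\cite{ref:SharirAl94,ref:SharirDa95}. By Observation~\ref{obser:threefun}, its vertices include all vertices of $\calL(F)[e_s,e_t]$. Identifying the relevant $O(n)$ pieces for each pair $(e_s,e_t)$ takes $O(n)$ time given the precomputed breakpoint lists, so this step does not dominate.
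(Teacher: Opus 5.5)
\textbf{The gap: extending $f_u$ across invisibility gaps is not valid.} You extend $f_u$ across the gaps of $\visb(u)\cap e_s$ and justify this by $|su|\ge d(s,u)$. That inequality runs the other way. A path in $\calP$ is never shorter than the straight segment, so $d(s,u)\ge |su|$, and the difference can be large when $s$ does not see $u$.

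Hence $g_u(s,t)=|su|+d(u,v)+|vt|=|su|+d(u,t)$ is not the length of any path in $\calP$, and it need not dominate $d(s,t)$. For example, if $\overline{su}$ is blocked by a long thin obstacle and $t$ is near $u$, then $g_u(s,t)\approx |su|$, while $d(s,t)\approx d(s,u)$ is much larger.

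So the extended patches can dip strictly below $\calL(F)$, and the envelope you would compute is a different surface. Genuine vertices of $\calL(F)[e_s,e_t]$ can be buried under some $g_u$, so your closing claim that its vertices include those of $\calL(F)[e_s,e_t]$ fails. Adding $|st|$ on visible pairs does not help, because the defect concerns visibility between $s$ and $u$, not between $s$ and $t$.

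Without the extension, $\sum_u b_u$ is not bounded. The grouping rule on $\calB_s$ charges nothing for breakpoints induced by vertices with $k_u=0$, so a super interval may contain far more than $O(n)$ breakpoints. Each visible segment of such a $u$ is a separate genuine patch.

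\textbf{Your diagnosis is correct, and the paper's proof does not resolve it either.} The paper sweeps $e_s$ and asserts that crossing a breakpoint $p$ changes at most $k_p$ functions. But when $s$ enters a new visible segment of $\visb(u)$, all $1+k_u$ pieces of $f_u$ over $e_t$ appear, so the correct count is $k_p+1$. The number of breakpoints inside $e_s$ is never bounded.

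The repair is combinatorial, not analytic. When forming super intervals on $\calB_s$, also close the current one once it contains $n$ breakpoints, exactly as is done on $\calB_t$. Then:
\begin{itemize}
\item since $\calB_s$ has $O(n^2)$ breakpoints, this adds only $O(n)$ super intervals, so the overall $O(n^{4+\epsilon})$ bound is unchanged;
\item Lemma~\ref{lem:superinterval} still holds, since the extra rule only shortens super intervals;
\item now $\sum_u b_u\le n$.
\end{itemize}
Your own count then gives $\sum_u(1+b_u)(1+k_u)\le n+n+(n-1)+3n=O(n)$ genuine constant-size patches, with no extension needed. The rest of your argument (the Halperin--Sharir/Sharir envelope bound and algorithm, and the $O(n)$-time collection of pieces) then goes through unchanged.
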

\begin{proof}
Recall that $e_s$ lies on a single obstacle edge, so does $e_t$. 
To prove the lemma, we will show that the number of elementary functions $f_u(s,t)$ defined on $s\in e_s$ and $t\in e_t$ is $O(n)$. Recall that each elementary function is a bivariate algebraic function of constant degree and constant size. Consequently, applying the result of \cite{ref:HalperinNe94,ref:SharirAl94,ref:SharirDa95} on these $O(n)$ elementary functions can give $|\calL(F)[e_s,e_t]|=O(n^{2+\epsilon})$ and compute all vertices of $\calL(F)[e_s,e_t]$ in $O(n^{2+\epsilon})$ time. 

We now argue that the number of elementary functions $f_u(s,t)$ defined on $s\in e_s$ and $e_t$ is $O(n)$. For any subsets $e_s'\subseteq e_s$ and $e_t'\subseteq e_t$, let $F[e_s',e_t']$ denote the set of elementary functions $f_u(s,t)$ defined on $s\in e_s'$ and $t\in e_t'$. Our goal is to prove $|F[e_s,e_t]|=O(n)$. 

We order the elementary intervals of $e_s$ from one end to the other. We do the same for $e_t$. 
%Recall that $e_t$ has at most $n$ elementary intervals. 
Consider the first elementary interval $I_s$ of $e_s$ and the first elementary interval $I_t$ of $e_t$. Since $I_s$ is an elementary interval of $\calB_s$ and $I_t$ is an elementary interval of $\calB_t$, we have $|F[I_s,I_t]|\leq n$ as there are $n$ obstacle vertices. Let $m=|F[I_s,I_t]|$. Let $I_t'$ be the elementary interval of $e_t$ adjacent to $I_t$. By definition, $I_t$ and $I_t'$ are two line segments on the same obstacle edge separated by a breakpoint of $\calB_t$. Hence, comparing $F[I_s,I_t']$ to $F[I_s,I_t]$, at most one function of $F[I_s,I_t]$ is changed to a new function in $F[I_s,I_t']$ while all other functions of $F[I_s,I_t]$ are still in $F[I_s,I_t']$ (i.e., each of these functions is originally defined on $t\in I_t$ but we can extend its domain to $t\in I_t\cup I_t'$). As such, $F[I_s,I_t\cup I_t']$ has $m+1$ elementary functions. If we continue the same analysis for the remaining elementary intervals of $e_t$ one by one, we can obtain that $F[I_s,e_t]$ has at most $m+n\leq 2n$ elementary functions since $e_t$ has at most $n$ elementary intervals. 

Now consider the elementary interval $I_s'$ of $e_s$ adjacent to $I_s$. Let $p$ be the breakpoint that separates $I_s$ and $I_s'$. We can follow a similar analysis to the above. When we move from $I_s$ to $I_s'$, at most $k_p$ functions are changed from $F[I_s,e_t]$ to $F[I_s',e_t]$ while the other functions of $F[I_s,e_t]$ are still in $F[I_s',e_t]$ (each of them is originally defined on $s\in I_s$ but we can extend their domain to $s\in I_s\cup I_s'$). Hence, the number of functions in $F[I_s\cup I_s',e_t]$ is at most $k_p+|F[I_s,e_t]|\leq k_p+2n$. If we continue the same analysis for the rest of the elementary intervals of $e_s$, then since the sum of $k_p$ for all breakpoints $p$ in the interior of $e_s$ is at most $3n$ by Lemma~\ref{lem:superinterval}, we can obtain that $|F[e_s,e_t]|=O(n)$.

Note that following the above analysis, we can easily find all functions of $F[e_s,e_t]$ in $O(n)$ time, after which we can apply the algorithm of \cite{ref:HalperinNe94,ref:SharirAl94,ref:SharirDa95} to compute all vertices of $\calL(F)[e_s,e_t]$ in $O(n^{2+\epsilon})$ time. The lemma thus follows. 
\end{proof}

With the above lemma, since $\calB_s$ has $O(n)$ super intervals, it follows that $\calL(F)[\calB_s,e_t]$ has $O(n^{3+\epsilon})$ vertices and they can be computed in $O(n^{3+\epsilon})$ time. Furthermore, as $\calB_t$ has $O(n)$ super intervals $e_t$, we conclude that $\calL(F)[\calB_s,\calB_t]$, which is $\calL(F)$, has $O(n^{4+\epsilon})$ vertices and they can be computed in $O(n^{4+\epsilon})$ time. We thus have the following theorem. 

\begin{theorem}
The lower envelope $\calL(F)$ has $O(n^{4+\epsilon})$ vertices and all these vertices can be computed in $O(n^{4+\epsilon})$ time.
\end{theorem}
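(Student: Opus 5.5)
The plan is to assemble the theorem from the two-level partition already built: super intervals on $\calB_t$, then super intervals on $\calB_s$ relative to each $e_t$. First, fix the partition of $\calB_t$ into $O(n)$ super intervals $e_t$, each lying on a single obstacle edge and containing at most $n$ elementary intervals. For each such $e_t$, build the partition of $\calB_s$ into super intervals $e_s$. By the lemma bounding their number, there are $O(n)$ of them. By Lemma~\ref{lem:superinterval}, each has total breakpoint weight at most $3n$.

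Next, argue that the $st$-plane, restricted to where functions of $F$ are defined, is covered by the $O(n^2)$ products $e_s\times e_t$, which have disjoint interiors. Within a product the envelope is exactly $\calL(F)[e_s,e_t]$, so every vertex of $\calL(F)$ is one of three things:
\begin{itemize}
\item a vertex of some $\calL(F)[e_s,e_t]$;
\item a point on the boundary of some product (a breakpoint line $s=\mathrm{const}$ or $t=\mathrm{const}$);
\item a boundary vertex of an elementary patch.
\end{itemize}
Boundary vertices are counted by the lemma anyway, since the result of \cite{ref:HalperinNe94,ref:SharirAl94,ref:SharirDa95} counts all vertices of the envelope of partially defined patches, including their boundary features. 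Points on product boundaries are also covered: each product boundary is a portion of a boundary of elementary domains inside that product, so they are counted there too. In particular, every vertex that is a common intersection of three elementary functions lies in at least one closed product $e_s\times e_t$. It is a vertex of the restricted envelope there, because the restricted envelope coincides with $\calL(F)$ over that product.

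Summing the preceding lemma over all products gives $O(n)\cdot O(n)\cdot O(n^{2+\epsilon})=O(n^{4+\epsilon})$ vertices. For the time bound, the preprocessing costs $O(n^2\log n)$: shortest path maps and visibility polygons of all vertices, the breakpoints, and the $\calB_t$ super intervals. For each $e_t$, computing the $k_u$ values and the $\calB_s$ super intervals costs $O(n^2)$, for $O(n^3)$ in total. Each product then costs $O(n)$ to list $F[e_s,e_t]$ plus $O(n^{2+\epsilon})$ for the envelope algorithm. The total is $O(n^{4+\epsilon})$, and we finish by discarding duplicate vertices reported on shared product boundaries, for example by sorting.

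The main subtlety is the covering/consistency step. The $\calB_s$ partition depends on $e_t$, so the products form a ``brick'' pattern rather than a grid. I will handle this by noting that for a fixed $e_t$ the $e_s$ still partition $\calB_s$, so the products are disjoint and cover everything. I also need that restricting to a product does not create spurious vertices or lose genuine ones. Spurious vertices along artificial cuts can only add to the count, and their number is already within the $O(n^{2+\epsilon})$ bound per product. Genuine vertices survive because the restricted envelope coincides with $\calL(F)$ over the product.
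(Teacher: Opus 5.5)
Your proposal is correct and follows the paper's proof. Both use the $O(n)$ super intervals on $\calB_t$, then for each $e_t$ the $O(n)$ weight-bounded super intervals on $\calB_s$, then the per-product bound $|F[e_s,e_t]|=O(n)$ giving $O(n^{2+\epsilon})$ vertices and time, summed over the $O(n^2)$ products $e_s\times e_t$. Your explicit argument that these products cover $\calB_s\times\calB_t$ with disjoint interiors despite the $e_t$-dependent ``brick'' pattern, and that vertices on product boundaries are neither lost nor overcounted asymptotically, fills in a step the paper leaves implicit.
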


Following our earlier discussion and Observation~\ref{obser:eventvertex}, we have the following result. 

\begin{corollary}\label{coro:psibb}
The combinatorial complexity of $\Psibb$ is $O(n^{4+\epsilon})$ and  $\Psibb$ can be computed in $O(n^{4+\epsilon})$ time. 
\end{corollary}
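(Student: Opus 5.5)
The corollary follows by combining Observation~\ref{obser:eventvertex} with the preceding theorem on $\calL(F)$; the only real work is turning the computed envelope vertices into the decomposition itself.

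\emph{Combinatorial bound.} By Observation~\ref{obser:eventvertex}, $|\Psibb| = O(n^2 + |\calL(F)|)$. The $O(n^2)$ term accounts for the vertices of $\Psisptb$. The second term holds because every event point in the interior of a segment of $\Psisptb$ maps injectively to a vertex of $\calL(F)$. The theorem gives $|\calL(F)| = O(n^{4+\epsilon})$, so $|\Psibb| = O(n^{4+\epsilon})$.

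\emph{Algorithm.} First compute $\Psisptb$ in $O(n^2\log n)$ time, as described in Section~\ref{sec:pre}. Next compute all vertices of $\calL(F)$ in $O(n^{4+\epsilon})$ time by the theorem. The event points must then be extracted from these vertices.

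A vertex $(s,t,z)$ of $\calL(F)$ is an intersection of three elementary functions $d_{u_i,v_i}$. It is a genuine event point precisely when $z = d(s,t)$ and the three paths $\overline{su_i}\cup\pi(u_i,v_i)\cup\overline{v_it}$ are topologically distinct shortest paths. One point needs care: when $s$ sees $t$ directly, the envelope may lie above $d(s,t) = |st|$, and such vertices must be discarded. I would test whether $s$ is visible to $t$, or more simply check $z$ against the straight-line distance whenever $\overline{st}\subseteq\calP$. Each such test could use a ray-shooting structure with $O(\log n)$ query time after $O(n\log n)$ preprocessing.

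Alternatively, each candidate vertex could be kept as a potential breakpoint. This only refines $\Psibb$ by at most $O(n^{4+\epsilon})$ extra points, so the size bound is unaffected. The refinement can afterwards be coarsened, or left as is if the downstream applications tolerate it.

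\emph{Assembly.} Project the accepted vertices onto the $s$-axis. Sort these projections together with the vertices of $\Psisptb$ along each obstacle edge, which takes $O(n^{4+\epsilon}\log n) = O(n^{4+\epsilon})$ time after adjusting $\epsilon$. The sorted points give the segments of $\Psibb$.

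\emph{Main obstacle.} The step requiring justification is completeness. Every event point, of either event type, must appear among these vertices; Observation~\ref{obser:eventvertex} already guarantees this. Conversely, adding spurious breakpoints only refines the decomposition, which is harmless for the size bound.
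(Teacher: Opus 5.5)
Your proposal is correct and follows the paper's route. The size bound is exactly Observation~\ref{obser:eventvertex} combined with the $O(n^{4+\epsilon})$ bound on the vertices of $\calL(F)$, and the algorithm computes $\Psisptb$ and the vertices of $\calL(F)$ and projects the relevant ones onto the $s$-axis. Your filtering step spells out what the paper leaves implicit: discard vertices where $s$ sees $t$, and vertices not defined by three elementary functions, such as artificial ones on the boundaries of the rectangles $e_s\times e_t$. The filter is the version to use, since your unfiltered fallback only yields a refinement of $\Psibb$. One minor slip: the $O(\log n)$-query visibility test for polygonal domains that the paper cites (Section~\ref{sec:query}) uses $O(n^2)$ preprocessing, not $O(n\log n)$, but this still fits comfortably within the $O(n^{4+\epsilon})$ budget.
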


\section{The decomposition $\boldsymbol{\Psibp}$}
\label{sec:psibp}

In this section, we discuss the decomposition $\Psibp$. In particular, we will use $\Psibp$ to handle two-point shortest path queries with $s\in \calB$ and $t\in \calP$ in Section~\ref{sec:query}. 
We first present an algorithm to compute $\Psibp$ in $O(|\Psibp|\cdot \log n)$ time and then prove an $O(n^5)$ upper bound for its combinatorial complexity $|\Psibp|$. 

\subsection{Algorithm for computing $\boldsymbol{\Psibp}$}
%We follow a similar algorithm to that for $\Psibb$. 
Recall that $\Psibp$ is a refinement of $\Psibb$, which is a refinement of $\Psisptb$. We start with computing $\Psisptb$, which takes $O(n^2\log n)$ time. Consider an edge $e_s$ of $\Psisptb$. 

\subsubsection{Event points}
Suppose that we move $s$ on $e_s$. Since $e_s$ is a segment of $\Psisptb$, as $s$ moves on $e_s$, there are two cases where combinatorial changes on $\spm(s)$ will happen (the location of $s$ where a combinatorial change of $\spm(s)$ happens is referred to as an {\em event point} for $s$). First (called the {\em boundary case}), a combinatorial change of $\spm(s)$ may happen on $\calB$; in this case, the event point is also an event point for $\Psibb$, which is the same as discussed in Section~\ref{sec:psibb}. 
Second (called the {\em interior case}), a combinatorial change of $\spm(s)$ may happen in the interior of $\calP$. 
In this case, as discussed in~\cite{ref:ChiangTw99}, a bisector connecting two triple points contracts into a single point $t$ that has four shortest paths from $s$ (see Fig.~\ref{fig:bpevent10}). After the contraction, a new bisector is generated. 

\begin{figure}[t]
\begin{minipage}[t]{\textwidth}
\begin{center}
\includegraphics[height=1.8in]{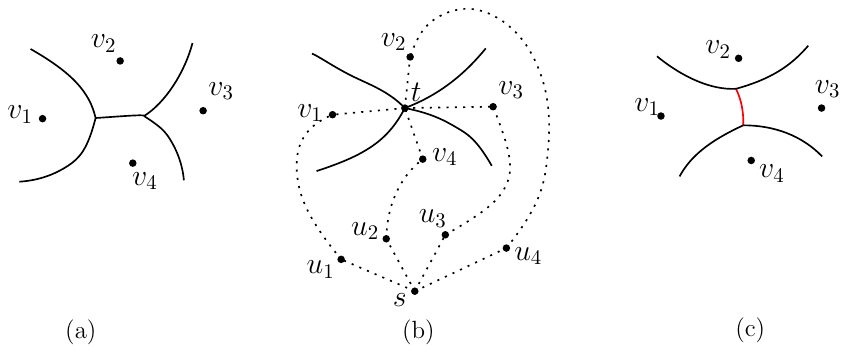}
\caption{ The solid curves are bisector curves. $v_i$, $1\leq i\leq 4$, are four obstacle vertices defining these bisector curves. (a) Before $s$ crosses an event point. (b) $s$ is at the event point: there are four shortest paths from $s$ to $t$, where a bisector contracts. (c) After $s$ passes the event point: the red curve is a new bisector curve.}
\label{fig:bpevent10}
\end{center}
\end{minipage}
\vspace{-0.1in}
\end{figure}

\subsubsection{Algorithm}
We can compute all event points on $e_s$ as follows. Suppose initially we have $\spm(s)$ available when $s$ is at one endpoint of $e_s$. As $s$ moves on $e_s$, we maintain certain information to detect the event points of the two cases.
%the following two types of information. The first type is for detecting the events in the above first case while the second type is for the second case. 

\paragraph{The boundary case.}
To compute the boundary case event points, for each bisector curve $\gamma$ of $\spm(s)$ that has an endpoint $p$ on an obstacle edge $e$, we maintain the following information. Let $u$ and $v$ be the two obstacle vertices that define $\gamma$, i.e., $u$ and $v$ are the roots of the two cells of $\spm(s)$ bounded by $\gamma$; see Fig.~\ref{fig:bbeventalgo}. 

As $s$ moves on $e_s$, $p$ will move on $e$. Note that unless $u$ and $v$ are the two endpoints of $e$, during the moving, $p$ cannot become collinear with $u$ and $v$, since otherwise the shortest path tree $\spt(s)$ would change combinatorially, contradicting that $e_s$ is an edge of $\Psisptb$. For a similar reason, $p$ will not cross either endpoint of $e$ during the moving of $s$. The position of $p$ on $e$ can be parameterized by that of $s$ on $e_s$. Let $p_1$ be a neighboring bisector curve endpoint of $p$ on $e$ (if any), and let $\gamma_1$ be the bisector curve containing $p_1$ (one of the two defining obstacle vertices of $\gamma_1$ is in $\{u,v\}$); see Fig.~\ref{fig:bbeventalgo}. As $s$ moves on $e_s$, we calculate the position of $s$ (if any) where $p_1$ will meet $p$. To this end, one needs to solve an equation of algebraic functions of constant degree to obtain $O(1)$ positions for $s$. We take the smallest value larger than the current position of $s$, i.e., this is the next position for $s$ when $p$ meets $p_1$; let $s^1_p$ denote this position. If $p$ has another neighboring bisector curve endpoint $p_2\in e$, then we calculate the corresponding position $s^2_p$ for $s$. In addition, let $p'$ be the other endpoint of $\gamma$. If $p'$ is a triple point of $\spm(s)$ (see Fig.~\ref{fig:bbeventalgo}), then $p'$ also moves as $s$ moves, but its position can also be represented a function of $s$. We calculate the next position of $s$ (if any) so that $p'$ meets $p\in e$ (i.e., the bisector curve connecting $p$ and $p'$ contracts; see the red curve $\gamma$ in Fig.~\ref{fig:bbeventalgo}); let $s_p^0$ denote the position. Among all three positions $s_p^0$, $s_p^1$, and $s_p^2$, we only maintain the closer one to the current position of $s$, denoted by $s_p$ and called a {\em candidate event} of $s$ for $p$. We add $s_p$ to an {\em event heap} $H_s$ (with the position of $s_p$ as the {\em key}). 

\begin{figure}[t]
\begin{minipage}[t]{\textwidth}
\begin{center}
\includegraphics[height=1.2in]{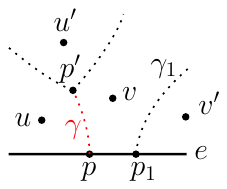}
\caption{ $u$, $v$, $u'$, and $v'$ are obstacle vertices. $p'$ is a triple point defined by $u$, $v$, and $u'$.}
\label{fig:bbeventalgo}
\end{center}
\end{minipage}
\vspace{-0.1in}
\end{figure}

\paragraph{The interior case.}
To compute the interior case event points, we main the following information. For each bisector curve of $\spm(s)$ whose both endpoints are triple points, we calculate the next position of $s$ on $e_s$ (if any) where it will contract to a single point as a candidate event point and add it to $H_s$. 

\paragraph{Main loop.}
The main loop of the algorithm works as follows. In each iteration, we extract the first candidate event $s_p$ from the event heap $H_s$ and process it as follows. First, we report $s_p$ as a ``true'' event point. Then, depending on whether the event belongs to the boundary case or the interior case, we update the events in $H_s$ related to $p$ as follows. 

In the boundary case, $s_p$ is defined by a bisector curve endpoint on $\calB$. We do the following. Depending on whether $s_p$ is $s_p^0$, there are two cases. If $s_p$ is $s_p^0$, then after $s$ moves across $s_p$, $p$ will be replaced by two new bisector endpoints on $e$ (see Fig.~\ref{fig:bbeventalgo10}). Specifically, let $v,u,w$ be the three obstacle vertices defining $p$ such that $u$ is in the middle when $s$ is at $s_p$. Then, after $s$ crosses $s_p$, the bisector curve defined by $w$ (resp., $v$) and $u$ has a new endpoint at $e$. For each such endpoint $p'$, we compute its candidate event and add it to $H_s$. Further, we update the candidate event for the neighboring bisector event of $p'$, i.e, if $p'$ has a neighboring bisector curve endpoint $p_1$ on $e$, then we delete $p_1$'s candidate event from $H_s$, and recompute a new candidate event and insert it to $H_s$. 
If $s_p$ is either $s_p^1$ or $s_p^2$, we do the following. Without loss of generality, we assume that $s_p=s_p^1$.
%, i.e., $p$ meets $p_1$ when $s$ is at $s_p$. 
Hence, when $s$ is at $s_p$, $p$ has three anchors $v,u,w$ (see Fig.~\ref{fig:bbeventalgo20}). 
%Without loss of generality, let $u$ be between in the middle of $v$ and 
%$p$ and $p_1$ have a common anchor, say, $u$. Let $v$ (resp., $w$) be the other anchor defining $p$ (resp., $p_1$). 
After $s$ moves across $s_p$, a new bisector curve arises that is defined by two of $\{v,u,w\}$, say, $v$ and $w$, and the bisector curve has $p$ as one of its endpoint. Using $v$ and $w$, we recompute the candidate event of $s$ for this new endpoint $p$ and insert it to $H_s$. Further, for each of $p$'s neighboring bisector curve endpoint on $e$, we delete its candidate from $H_s$ and recompute a new candidate event and insert it to $H_s$. 

\begin{figure}[t]
\begin{minipage}[t]{\textwidth}
\begin{center}
\includegraphics[height=1.2in]{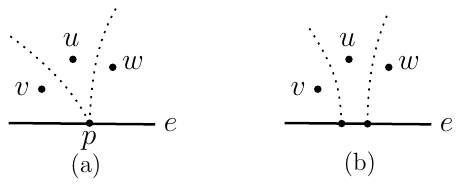}
\caption{ (a) $s$ is at $s_p$; (b) after $s$ crosses $s_p$.}
\label{fig:bbeventalgo10}
\end{center}
\end{minipage}
\vspace{-0.1in}
\end{figure}

\begin{figure}[t]
\begin{minipage}[t]{\textwidth}
\begin{center}
\includegraphics[height=1.2in]{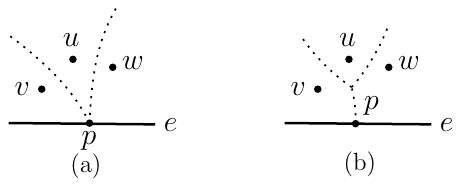}
\caption{ (a) $s$ is at $s_p$; (b) after $s$ crosses $s_p$.}
\label{fig:bbeventalgo20}
\end{center}
\end{minipage}
\vspace{-0.1in}
\end{figure}

In the interior case, $s_p$ is defined by a bisector curve of $\spm(s)$ connecting two triple points (see Fig.~\ref{fig:bpevent10}). We update $\spm(s)$ by removing the bisector curve and adding the new generated bisector. This is only a local change to $\spm(s)$ and costs $O(1)$ time since the degree of each vertex of $\spm(s)$ is at most $4$ by our general position assumption. After the update of $\spm(s)$, for each bisector curve that has one of its incident triple points updated (including the new generated bisector), we update its candidate event in $H_s$. Again, only $O(1)$ bisector curves are affected. 

The algorithm stops once $s$ reaches the other endpoint of $e_s$, at which time all event points on $e_s$ have been computed.

\paragraph{Time analysis.}
For the time analysis, first notice that $|H_s|$ is always bounded by $O(n)$ since the size of $\spm(s)$ is $O(n)$. Handling each event takes $O(\log n)$ time as each event only involves local changes on $O(1)$ bisector curves. As such, each iteration can be performed in $O(\log n)$ time. Hence, the runtime for computing all event points on $e_s$ is $O(\kappa(e_s)\cdot \log n)$, where $\kappa(e_s)$ is the number of event points on $e_s$. If we construct $\spm(s)$ initially when $s$ is at an endpoint of $e_s$, which takes $O(n\log n)$ time, then the total time for computing the event points on $e_s$ is $O(n\log n+ \kappa(e_s)\cdot \log n)$. Applying the algorithm to all segments $e_s\in\Psisptb$ will construct $\Psibp$ in total $O(n^3\log n+|\Psibp|\log n)$ time since $\Psisptb$ has $O(n^2)$ segments. 

We can further reduce the factor $O(n^3\log n)$ to $O(n^2\log n)$ as follows. Notice that the factor is dominated by the time for computing $\spm(s)$ initially for each $e_s\in \Psisptb$. Instead of doing this for each segment of $\Psisptb$, we process all segments of $\Psisptb$ on the same obstacle edge $e$ all together. Specifically, initially, we compute $\spm(s)$ when $s$ is at an endpoint of $e$. Then, we process the first segment $e_s$ of $\Psisptb$ on $e$. After $e_s$ is processed, we already have $\spm(s)$ available for the current position $s$, which is at the common endpoint of $e_s$ and the next segment $e_s'$. As such, before we process $e_s'$, we do not have to recompute the $\spm(s)$ again. In this way, we only need to compute $\spm(s)$ from scratch once for each obstacle edge, which takes $O(n^2\log n)$ time in total. Therefore, the overall runtime of the algorithm can be bounded by $O(n^2\log n+|\Psibp|\log n)$. 
%Since $|\Psibp|=\Omega(n^2)$ in the worst case as discussed in Section~\ref{sec:pre}, we conclude as follows. 

\begin{theorem}\label{theo:algopsibp}
There is an algorithm that can compute $\Psibp$ in $O((n^2+|\Psibp|)\cdot \log n)$ time.     
\end{theorem}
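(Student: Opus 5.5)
The theorem collects the sweeping algorithm just described, so the proof has three parts. First, every event point is reported. Second, every reported point is a true event and nothing is missed. Third, the running time is as claimed.

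We work segment by segment on $\Psisptb$. Fix a segment $e_s$ of $\Psisptb$ and move $s$ along it. Since $\spt(s)$ is fixed on $e_s$, the following stay fixed: the set of roots, the visibility of $s$ to its anchors, and the extension segments. Hence every edge and vertex of $\spm(s)$ moves continuously, as an algebraic function of constant degree in the position of $s$.

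A combinatorial change of $\spm(s)$ must therefore occur in one of two ways:
\begin{itemize}
\item Some edge of $\spm(s)$ shrinks to length zero. If the edge is a bisector with both endpoints triple points, this is the interior case. If it is a bisector piece ending on $\calB$ whose other end is a triple point, this gives $s_p^0$.
\item Two consecutive features on an obstacle edge $e$ meet. This gives $s_p^1$ and $s_p^2$.
\end{itemize}
Crossing an endpoint of $e$, or becoming collinear with the two defining vertices, is impossible inside $e_s$; otherwise $\spt(s)$ would change, as argued above. By the general position assumption, each such meeting involves $O(1)$ edges and vertices of bounded degree.

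The main invariant is that $H_s$ always holds, for every current ``adjacent pair'' of features, the next position of $s$ at which that pair collides. Then the minimum key of $H_s$ is exactly the next true event, by the standard kinetic-data-structure argument. Before that position no collision occurs, so the combinatorial structure is constant. After processing the event, only the $O(1)$ features incident to the changed part of $\spm(s)$ have new neighbors, so only their certificates need recomputation. This restores the invariant.

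I expect the hardest step to be the completeness claim: the local updates in the boundary and interior cases must exhaust all ways the map can change. Here I would rely on the characterization from \cite{ref:ChiangTw99}. An interior change is always a bisector contracting to a point with four shortest paths, and boundary changes are exactly the two cases of Section~\ref{sec:psibb}.

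For the time bound:
\begin{itemize}
\item $|H_s|=O(n)$ because $|\spm(s)|=O(n)$.
\item Each event costs $O(1)$ certificate computations, each of which solves a constant-degree system, plus $O(1)$ heap operations. So each event takes $O(\log n)$ time.
\item Every processed event is a genuine vertex of $\Psibp$ on $e_s$. Counting co-located multiple changes with multiplicity, as in Observation~\ref{obser:eventvertex}, the events on $e_s$ number $\kappa(e_s)$.
\end{itemize}
Next, chain the segments of $\Psisptb$ on each obstacle edge. Then $\spm(s)$ is built from scratch only $O(n)$ times, at $O(n\log n)$ each. Computing $\Psisptb$ itself costs $O(n^2\log n)$.

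Two further costs must be charged. First, when $s$ crosses a vertex of $\Psisptb$, the tree $\spt(s)$ changes, and the heap must be repaired. I would charge this $O(\log n)$ work per affected feature to the $O(n^2)$ vertices of $\Psisptb$, treating each crossing as an event with $O(1)$ local change, so the total is $O(n^2\log n)$. Second, heap initialization costs $O(n\log n)$ per obstacle edge. Summing everything gives $O(n^2\log n+\sum_{e_s}\kappa(e_s)\log n)=O((n^2+|\Psibp|)\log n)$.
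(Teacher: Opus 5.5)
Your route is the paper's: a kinetic sweep of $s$ over each segment of $\Psisptb$ with an event heap of $O(n)$ candidate events and $O(\log n)$ work per genuine event. You also chain the segments on each obstacle edge so that $\spm(s)$ is built from scratch only once per edge. The gap is in the one step you add beyond the paper, namely repairing $H_s$ when $s$ crosses a vertex of $\Psisptb$.

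Such a crossing is an $O(1)$ change to the planar structure of $\spm(s)$, but not to the certificates stored in $H_s$. Suppose $s$ crosses the line through obstacle vertices $w$ and $x$, with $w$ between $s$ and $x$. Then $x$ changes from a child of $s$ to a child of $w$ in $\spt(s)$. For every descendant $z$ of $x$, the formula for $d(s,z)$ changes from $|sx|+d(x,z)$ to $|sw|+|wx|+d(x,z)$. Some candidate events involve a moving feature (a bisector endpoint on $\calB$ or a triple point) defined by one of these roots and a root with a different anchor. All such events were computed from the old formula and must be recomputed. Their number is the boundary complexity of the union of the cells of that subtree, which can be $\Theta(n)$ at a single crossing. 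So charging $O(\log n)$ per crossing to the $O(n^2)$ vertices of $\Psisptb$ does not follow.

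The paper's proof is silent on this point as well. It simply continues from the $\spm(s)$ available at the shared endpoint of consecutive segments and never accounts for re-initializing or repairing the heap. Your argument is therefore no weaker than the paper's, but the justification you give does not close the gap.

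A safe fix is to rebuild $H_s$ from the current $\spm(s)$ at each vertex of $\Psisptb$. That costs $O(n)$: $O(n)$ certificates at $O(1)$ each, plus heapify. Over the $O(n^2)$ segments this gives $O(n^3+|\Psibp|\log n)$ in total. That is enough for the $O(n^5\log n)$ corollary, but it is not literally the stated bound. Getting the $n^2\log n$ term would need an amortized bound on the total number of invalidated certificates over all crossings.
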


\subsection{Upper bound of $\boldsymbol{|\Psibp|}$}
We prove that $|\Psibp|$ is bounded by $O(n^5)$ in the following theorem, by a simple brute-force argument. Similar ideas have been used elsewhere in the previous work~\cite{ref:BaeTh13,ref:MitchellSh96,ref:WangOn18}.

\begin{theorem}\label{theo:sizepsibp}
The combinatorial complexity of $\Psibp$ is $O(n^5)$. 
\end{theorem}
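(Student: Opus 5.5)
We bound the vertices of $\Psibp$ by counting event points. Since $\Psibp$ refines $\Psisptb$, which has $O(n^2)$ vertices, it suffices to bound the event points lying in the interior of segments $e_s$ of $\Psisptb$. By the discussion in Section~\ref{sec:psibp}, each event point is of the boundary case or the interior case. Boundary-case event points are event points of $\Psibb$, so Corollary~\ref{coro:psibb} bounds them by $O(n^{4+\epsilon})$. For a self-contained argument, the brute-force $O(n^5)$ bound from Section~\ref{sec:psibb} also suffices. It remains to show that interior-case event points number $O(n^5)$.

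At an interior-case event point $s$, there is a point $t$ in the interior of $\calP$ with four topologically different shortest $s$-$t$ paths $\overline{su_i}\cup\pi(u_i,v_i)\cup\overline{v_it}$, $1\le i\le 4$. The distinct final vertices $v_1,\ldots,v_4$ are exactly the roots of the four cells of $\spm(s)$ meeting at $t$. So $(s,t)$ solves the system
\[
d_{u_1,v_1}(s,t)=d_{u_2,v_2}(s,t)=d_{u_3,v_3}(s,t)=d_{u_4,v_4}(s,t),
\]
with $s$ constrained to an obstacle edge and $t\in\bbR^2$. This gives three equations in three unknowns. Under the general position assumption the system has $O(1)$ solutions, in the same way as Observation~\ref{obser:threefun}.

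The key to saving factors is not to enumerate the $u_i$ and $v_i$ independently, which would give $n^8$ tuples. Instead I would enumerate as in the warm-up bound for $\Psibb$. Fix an obstacle edge $e$ containing $s$ and a quadruple $U=\{u_1,\ldots,u_4\}$ of obstacle vertices, giving $O(n\cdot n^4)=O(n^5)$ choices. For $t$ with $\pi(s,t)$ passing through $u_i$ first, the last vertex $v_i$ is determined by $t$: it is the root of the cell of $\spm(u_i)$ containing $t$. So for fixed $U$, overlay $\spm(u_1),\ldots,\spm(u_4)$.

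This overlay has $O(n^2)$ cells in general, not $O(n)$. That is the main obstacle, since naively it yields $O(n^7)$.

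To get around it, I would charge differently. The event occurs where a bisector edge of $\spm(s)$ contracts. Its two endpoints are triple points, so the four roots are determined by two adjacent triple points. I would first fix $e_s$'s edge $e$ and three of the vertices, $u_1,u_2,u_3$. As $s$ varies on a fixed segment of $\Psisptb$, the anchors $u_i$ and distances $d(u_i,\cdot)$ are combinatorially fixed. The triple points of $\spm(s)$ formed by first-vertices $u_1,u_2,u_3$ then trace a curve parameterized by $s$ for each choice of the corresponding roots.

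Alternatively, and this is what I would try first, I would count pairs $(s,t)$ with $t$ a vertex of $\spm(s)$ where four paths meet, as vertices of a lower envelope of trivariate functions $f_u(s,t)=|su|+d(u,v_t)$, with $s\in\calB$ and $t\in\calP$. Each $f_u$ restricted to $s\in e$ splits into $O(n)$ pieces, from $\vis(u)$ and $\spm(u)$. Fixing an edge $e$ and the quadruple $(u_1,\ldots,u_4)$, a solution requires $t$ to lie in a common cell of the four maps $\spm(u_i)$. The complexity of the restricted region of the overlay actually visited by the zero-dimensional solution set must be argued to be $O(1)$ amortized per quadruple.

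If this fails, I would fall back on enumerating $(e,u_1,u_2,v_1,v_2)$ together with one triple of adjacent cells. This gives $O(n^5)$ tuples, each yielding $O(1)$ events, since a contracting bisector's four roots consist of a pair sharing the contracting bisector plus two neighbors whose identity is determined by local structure.
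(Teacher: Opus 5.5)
\textbf{There is a genuine gap, in counting the interior-case events.} Your handling of the $O(n^2)$ vertices of $\Psisptb$ and of the boundary-case events (via $\Psibb$) matches the paper. The system $d_{u_1,v_1}(s,t)=\dots=d_{u_4,v_4}(s,t)$ you write for an interior-case event is also the right one.

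Your main enumeration fixes the edge $e$ and the $s$-side anchors $u_1,\dots,u_4$. This forces the $t$-side anchors $v_i$ to be read off from the overlay of $\spm(u_1),\dots,\spm(u_4)$ over the two-dimensional domain of $t$. As you note, that costs an extra $O(n^2)$ and gives only $O(n^7)$. None of your workarounds closes this:
\begin{itemize}
\item The triple-point tracing is never carried out.
\item The ``$O(1)$ amortized'' claim is exactly what would need proof. You have already spent $O(n^5)$ on the pairs $(e,U)$, so you would need $O(1)$ events per fixed edge and fixed quadruple of $s$-side anchors, and nothing supports that.
\item In the fallback, the two remaining roots adjacent to the contracting bisector are not determined by $(e,u_1,u_2,v_1,v_2)$. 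They depend on where $s$ lies on $e$, so the number of possible completions is precisely the quantity you need to bound.
\end{itemize}

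\textbf{The missing idea is to swap the roles of the two endpoints.} Enumerate the four $t$-side anchors $v_1,\dots,v_4$ instead ($O(n^4)$ choices), and let the position of $s$ determine the $u_i$. Since $\overline{su_i}\cup\pi(u_i,v_i)$ is a shortest path from $s$ to $v_i$, $u_i$ is the root of the cell of $\spm(v_i)$ containing $s$. Because $s\in\calB$, this is read off from $\spmb(v_i)$, which has only $O(n)$ segments.

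Overlaying $\spmb(v_1),\dots,\spmb(v_4)$ gives $O(n)$ segments of $\calB$, so no separate enumeration of the edge $e$ is needed. On each segment all $u_i$ are fixed, and the system, with $s$ on that segment and $t\in\bbR^2$, has three equations in three unknowns and hence $O(1)$ solutions. The total is $O(n^4)\cdot O(n)\cdot O(1)=O(n^5)$ interior-case events, which is the paper's argument. The point is to place the overlay on the one-dimensional side ($s\in\calB$) rather than the two-dimensional side ($t$ in the interior of $\calP$).
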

\begin{proof}
According to our previous discussion, when $s$ moves on a segment $e_s$ of $\Psisptb$, there are two type of events: (1) The event is caused by a combinatorial change of $\spm(s)$ on $\calB$; (2) the event is caused by a combinatorial change of $\spm(s)$ in the interior of $\calP$. 

The first type of events is actually included in $\Psibb$, whose complexity is $O(n^{4+\epsilon})$ by Corollary~\ref{coro:psibb}. Below, we show that the number of the second type events is $O(n^5)$.

For every four (not necessarily distinct) obstacle vertices $v_i$, $1\leq i\leq 4$, we do the following. Recall that $\spmb(v_i)$ has $O(n)$ segments on $\calB$. We overlay $\spmb(v_i)$, for all $1\leq i\leq 4$, resulting in $O(n)$ segments on $\calB$. For each such segment $e$, for each $1\leq i\leq 4$, $e$ is contained in a single cell of $\spm(v_i)$ and let $u_i$ be the root of the cell. Consider the equation $d_{u_1,v_1}(s,t)=d_{u_2,v_2}(s,t)=d_{u_3,v_3}(s,t)=d_{u_4,v_4}(s,t)$, for $s\in e$ and $t\in\bbR^2$. Observe that each second type event with $s\in e$ corresponds to a pair $(s,t)$ with $t$ in the interior of $\calP$ such that there are four shortest \st\ paths (see Fig.~\ref{fig:bpevent10}(b)). If the anchors of $t$ in the four shortest paths are $v_i$, $1\leq i\leq 4$, respectively, then we say that $s$ is {\em defined by} $v_i$, $1\leq i\leq 4$, and $(s,t)$ must be a solution to the above equation. Since the equation has $O(1)$ solutions, $e$ contains $O(1)$ second type events for $s$ defined by $v_i$, $1\leq i\leq 4$. As the above overlay has $O(n)$ segments, there are $O(n)$ second type events defined by $v_i$, $1\leq i\leq 4$. Enumerating all combinations of four obstacle vertices gives the $O(n^5)$ upper bound for the total number of second type events. 
\end{proof}

Combining Theorems \ref{theo:algopsibp} and \ref{theo:sizepsibp} leads to the following result. 
\begin{corollary}
The decomposition $\Psibp$ can be computed in $O(n^5\log n)$ time. 
\end{corollary}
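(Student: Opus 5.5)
The corollary follows by plugging the bound of Theorem~\ref{theo:sizepsibp} into the running time of Theorem~\ref{theo:algopsibp}. The plan has three steps.

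First, I invoke Theorem~\ref{theo:algopsibp}. The sweeping algorithm computes $\Psibp$ in $O((n^2+|\Psibp|)\cdot\log n)$ time. Here $|\Psibp|$ denotes the worst-case combinatorial size of the decomposition, as stipulated in Section~\ref{sec:intro}.

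Second, I substitute $|\Psibp|=O(n^5)$ from Theorem~\ref{theo:sizepsibp}. This gives $O((n^2+n^5)\log n)=O(n^5\log n)$ time.

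Third, I check that the running time of the sweep is governed by the number of event points it actually processes. Each iteration of the main loop extracts one candidate event from $H_s$, reports it as a true event point, and performs $O(1)$ heap updates at $O(\log n)$ cost each. Every extracted event is a genuine event point of $\Psibp$: either a boundary-case event, which is counted in $\Psibb$ and hence in $\Psibp$, or an interior-case event. The number of iterations therefore equals the number of event points, which is $O(n^5)$ by Theorem~\ref{theo:sizepsibp}. The $O(n^2\log n)$ term covers computing $\Psisptb$ and building one initial $\spm(s)$ per obstacle edge.

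I expect no real obstacle. The only subtle point is that co-located events must each be charged to a distinct vertex or event. The proof of Theorem~\ref{theo:sizepsibp} already does this counting: it bounds interior events by enumerating 4-tuples of vertices and boundary events via Observation~\ref{obser:eventvertex}. So the total work is $O(n^5\log n)$.
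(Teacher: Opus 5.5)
Your proposal is correct and matches the paper, which obtains the corollary simply by substituting $|\Psibp|=O(n^5)$ from Theorem~\ref{theo:sizepsibp} into the $O((n^2+|\Psibp|)\log n)$ running time of Theorem~\ref{theo:algopsibp}. Your third step only re-checks event accounting that the proof of Theorem~\ref{theo:algopsibp} already contains, so it is not needed.
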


\section{The decomposition $\boldsymbol{\Psipp}$}
\label{sec:psipp}

This section considers the most general decomposition $\Psipp$, which is $\Psi$. 
%We discuss the general decomposition $\Psi$, which is $\Psipp$. 
While we do not have a better upper bound than $O(n^{10})$ in~\cite{ref:ChiangTw99} on its combinatorial complexity, we give an algorithm of $O(n^{7.73}+|\Psi|\log n)$ time for constructing $\Psi$. Note that an $O(n^{10}\log n)$-time algorithm is presented in \cite{ref:ChiangTw99}, but the runtime does not depend on $|\Psi|$. Also, the algorithm for computing $\Psibp$ in Section~\ref{sec:psipb} is a simplified version of the algorithm in this section. 

For each obstacle vertex $u$, we define $f_u(s,t)$ in the same way as in Definition~\ref{def:fun} but with respect to $s\in \vis(u)$ and 
$t\in \calP$. As such, $f_u(s,t)$ is a 4-variate function defining an algebraic surface patch in $\bbR^5$. Similarly, we can decompose $f_u(s,t)$ into $O(n^2)$ {\em elementary functions} as follows. 
Let $\vis_{\triangle}(u)$ be a triangulation of $\vis(u)$. For each triangle $\triangle\in \vis_{\triangle}(u)$, for each cell $\sigma\in \spm(u)$, $f_u(s,t)$ on $s\in \triangle$ and $t\in \sigma$ is a constant-sized function since $f_u(s,t)=d_{u,v}(s,t)$, where $v$ is the root of the cell $\sigma$,  
and we call it an {\em elementary function} of $f_u(s,t)$. As $\vis_{\triangle}(u)$ has $O(n)$ triangles and $\spm(u)$ has $O(n)$ cells, $f_u(s,t)$ has $O(n^2)$ elementary functions. We refer to the original $f_u(s,t)$ as a {\em compound function}. 

For simplicity, we call the first two coordinates of $\bbR^5$ the {\em $s$-plane}, where the domain of $s$ lies; similarly, the third and fourth coordinates form the {\em $t$-plane}. The $s$-plane and $t$-plane together form the {\em $st$-space}. For any point $p$ in $\bbR^5$, we use $\eta_s(p)$ (resp., $\eta_t(p)$) to denote the projection of $p$ onto the $s$-plane (resp., $t$-plane). By slightly abusing the notation, for a set $\gamma$ of points in $\bbR^5$ (e.g., $\gamma$ is a curve), let $\eta_s(\gamma)=\{\eta_s(p)\ |\ p\in \gamma\}$; $\eta_t(\gamma)$ is defined similarly.

Let $F$ denote the set of functions for all obstacle vertices $u$. 
As such, $F$ has $O(n)$ compound functions with a total of $O(n^3)$ elementary functions.  
Let $\calL(F)$ denote the lower envelope of $F$ in $\bbR^5$. Denote by $\calA(F)$ the arrangement of the surface patches in $\bbR^5$ defined by the functions of $F$. 

%For any pair of points $s,t\in \calP$, let $z(s,t)$ be the (unique) point of $\calL(F)$ that intersect the vertical line (i.e., the line parallel to the $5$-th coordinate) through the point in $\bbR^5$ with the first four coordinates $(x(s),y(s),x(t),y(t))$. Observe that the $5$-th coordinate of $z(s,t)$ is equal to $d(s,t)$. 

\paragraph{Main idea.}
Roughly speaking, each interior vertex of $\calL(F)$ is the common intersection of five elementary functions of $F$ while each interior edge of $\calL(F)$ is the common intersection of four elementary functions. Let $e$ be an interior edge of $\calL(F)$. By our definition of the functions of $F$, $\eta_s(e)$ is a curve in $\calP$. Further, when $s$ moves across $\eta_s(e)$ in $\calP$, $\spm(s)$ changes topologically. Indeed, $\eta_s(e)$ belongs to a so-called {\em topological curve} defined in \cite{ref:ChiangTw99}. The reverse is also true, i.e., when $\spm(s)$ changes topologically during the moving of $s$, $s$ must cross a topological curve in $\calP$~\cite{ref:ChiangTw99}, and thus cross $\eta_s(e)$ for an interior edge $e$ of $\calL(F)$. 
%by the definition of functions of $F$. 
As such, $\Psi$ is the decomposition of $\calP$ by the curves of $E$, where $E$ is the set of projections of all interior edges of $\calL(F)$ on the $s$-plane. Once $E$ is available, $\Psi$ can be computed in $O((|E|+|\Psi|)\log n)$ time, e.g, by a plane sweeping algorithm. 

\paragraph{Remark.} Chiang and Mitchell's algorithm~\cite{ref:ChiangTw99} computes $\Psi$ by first computing a set of $O(n^6)$ topological curves and then computing the decomposition of $\calP$ by these curves. Their algorithm complexity does not depend on $|\Psi|$ because each topological curve may contain portions that are not projections of edges of $\calL(F)$ and thus those portions are redundant for $\Psi$. Hence,  their algorithm computes a refinement of $\Psi$. In contrast, our algorithm will identify the non-redundant portions of these topological curves and use them to decompose $\calP$ to obtain $\Psi$ in an output-sensitive way. 
\medskip

\paragraph{Combinatorial complexity of $\boldsymbol{\calL(F)}$.}
As $F$ has $O(n^3)$ $4$-variate elementary functions, applying the results of~\cite{ref:SharirAl94,ref:SharirDa95} directly would give an $O(n^{12+\epsilon})$ upper bound on the complexity of $\calL(F)$. In the following, we first show in Lemma~\ref{lem:psilesize} that $|\calL(F)|=O(n^7)$ by a simple brute-force method. Then, we present an $O(n^{7.73})$-time algorithm to compute all vertices and edges of $\calL(F)$, and thus the set $E$ (whose size is $O(n^7)$) can be obtained as well. Consequently, as discussed above, $\Psi$ can be constructed in additional $O((n^7+|\Psi|)\log n)$ time. 

\begin{lemma}\label{lem:psilesize}
The combinatorial complexity of $\calL(F)$ is bounded by $O(n^7)$. 
\end{lemma}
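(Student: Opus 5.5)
We will prove the bound by the same brute-force enumeration used for $|\Psibb|$ and in Theorem~\ref{theo:sizepsibp}, counting vertices and edges of $\calL(F)$ separately. By the general position assumption, an interior vertex of $\calL(F)$ is a common intersection of five elementary functions, and an interior edge is a connected piece of the common intersection of four elementary functions. Edges lying on the boundary of the domain are handled analogously with one fewer function and one extra constraint. It therefore suffices to bound the number of tuples of elementary functions that can actually be realized on the envelope. Since the complexity of $\calL(F)$ is proportional to its vertices plus edges plus lower-dimensional boundary features, the main work is to bound the vertices by $O(n^7)$ and the edges by $O(n^7)$.

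For vertices, fix five (not necessarily distinct) obstacle vertices $v_1,\ldots,v_5$, which will be the anchors of $t$. Overlay $\spm(v_1),\ldots,\spm(v_5)$. Each map has $O(n)$ size, so the overlay of five of them naively has $O(n^2)$ cells. For each cell $c$ and each $i$, $c$ lies in a single cell of $\spm(v_i)$ with root $u_i$; here $u_i$ is the anchor of $s$, and $t$ is visible to $v_i$. We then solve the system
\[
d_{u_1,v_1}(s,t)=\cdots=d_{u_5,v_5}(s,t),\qquad s\in\bbR^2,\ t\in c .
\]
This is four equations in four unknowns, so by general position it has $O(1)$ solutions. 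Every vertex $(s,t,d(s,t))$ of $\calL(F)$ whose five shortest paths have $t$-anchors $v_1,\ldots,v_5$ arises this way. We use the symmetric roles of $s$ and $t$: $s$ lies in the cell of $\spm(v_i)$ rooted at $u_i$ because $\pi(v_i,s)$ reversed ends at $s$ through $u_i$. So the enumeration gives $O(n^5)$ choices of tuples times $O(n^2)$ overlay cells, for $O(n^7)$ vertices.

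Edges are handled the same way with four anchors $v_1,\ldots,v_4$. The overlay of four maps again has $O(n^2)$ cells, and in each cell the system of three equations defines an algebraic curve of constant degree with $O(1)$ connected pieces. Edges of $\calL(F)$ are subarcs of these curves cut by vertices and by overlay boundaries, which gives $O(n^6)$ curve pieces plus $O(n^7)$ subdivisions from vertices. Boundary features (e.g.\ $s$ or $t$ on $\calB$, or on a triangle edge of $\vis_{\triangle}(u)$) are counted likewise, with one fewer function and an extra linear constraint, and they come out no larger.

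The main obstacle is justifying that $s$ is indeed confined by $v_i$'s map: the cell containing $s$ must be determined by $\spm(v_i)$, not merely by $\vis(u_i)$ triangles. Put differently, the overlay we take is of the five maps with $s$ ranging over the plane, and we need $O(n^2)$ pairs of cells and not $O(n^{10})$ (an overlay in $s$ times an overlay in $t$). I would first try the following. For a fixed tuple $(v_1,\ldots,v_5)$, the point $t$ lies in a region where the $v_i$ are fixed. The $u_i$ are then determined only by $s$'s location in the overlay of $\spm(v_i)$, which has $O(n^2)$ faces, so we enumerate over $s$-faces with $t$ free. This gives $O(n^5\cdot n^2)=O(n^7)$. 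Symmetrically, we enumerate the $t$-anchors $v_i$ and take the overlay for the $u_i$ given the $v_i$. The point to check carefully is that fixing one endpoint's anchors and overlaying the other endpoint's maps suffices, i.e., that the two-sided constraint is not needed for counting.
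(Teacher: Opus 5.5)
Your argument is the paper's: fix the five $t$-anchors $v_1,\ldots,v_5$ and overlay $\spm(v_1),\ldots,\spm(v_5)$, so that the overlay cell containing $s$ determines the $s$-anchors $u_i$; each of the $O(n^5)\cdot O(n^2)$ resulting systems then contributes $O(1)$ vertices. Your displayed system should read $s\in c$, $t\in\bbR^2$, as your next sentence and last paragraph indicate, because $t$ lies in the cell of $\spm(v_i)$ rooted at $v_i$ itself. The worry in your last paragraph is moot, since an upper bound only needs a necessary condition and dropping the constraints on $t$ merely enlarges each $O(1)$ solution set. The remaining differences are minor: the paper bounds boundary vertices via $|\Psibb|$, $|\Psibp|$ and Lemma~\ref{lem:psipblesize} rather than your direct (cruder but still $O(n^7)$) count, and it simply asserts that bounding vertices suffices instead of counting edges separately.
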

\begin{proof}
We prove it by a simple brute-force argument, somewhat similar to the proof of Theorem~\ref{theo:sizepsibp}.
%was used before~\cite{ref:BaeTh13,ref:MitchellSh96}.
It suffices to bound the number of vertices of $\calL(F)$. 

Since $\calL(F)$ is in $\bbR^5$, every interior vertex is bounded by five elementary functions. As such, we can bound the number of interior vertices as follows. For every five (not necessarily distinct) obstacle vertices $v_i$, $1\leq i\leq 5$, we overlay $\spm(v_i)$, for all $1\leq i\leq 5$. The overlay has $O(n^2)$ cells since each shortest path map is of size $O(n)$. For each cell $\sigma$ of the overlay, for each $1\leq i\leq 5$, $\sigma$ is contained in a single cell of $\spm(v_i)$ and let $u_i$ denote the root of the cell. Consider the equation $d_{u_1,v_1}(s,t)=d_{u_2,v_2}(s,t)=d_{u_3,v_3}(s,t)=d_{u_4,v_4}(s,t)=d_{u_5,v_5}(s,t)$, for $s\in \sigma$ and $t\in \bbR^2$. 
Consider an interior vertex $p$ of $\calL(F)$. Let $s=\eta_s(p)$ and $t=\eta_t(p)$. 
Since $p$ is the common intersection of five elementary functions, there are five shortest \st\ paths. Suppose the anchors of $t$ in the five shortest paths are $v_i$, $1\leq i\leq 5$, respectively, and $s$ is in the cell $\sigma$; we say that $p$ is {\em defined by} $v_i$, $1\leq i\leq 5$, and the cell $\sigma$. Then, $(s,t)$ must satisfies the above equation. As such, since the equation has $O(1)$ solutions, the cell $\sigma$, along with $v_i$, $1\leq i\leq 5$, can define $O(1)$ interior vertices of $\calL(F)$. As the overlay has $O(n^2)$ cells, the five obstacle vertices $v_i$, $1\leq i\leq 5$, define $O(n^2)$ interior vertices for $\calL(F)$. Enumerating all $O(n^5)$ possible combinations of five obstacle vertices gives the $O(n^7)$ upper bound on the number of interior vertices of $\calL(F)$. 

Next we discuss the boundary vertices of $\calL(F)$. Each boundary vertex $p$ has $s$ or/and $t$ on $\calB$, where $s=\eta_s(p)$ and $t=\eta_t(p)$.
%are the projections of $p$ onto the $s$-plane and $t$-plane, respectively. 
%Boundary vertices correspond to $(s,t)$ with at least one of $s$ and $t$ on $\calB$. 

\begin{itemize}
\item 
If both $s$ and $t$ are on $\calB$, then observe that $(s,t)$ corresponds to an event point of the decomposition $\Psibb$. Since $\Psibb=O(n^{4+\epsilon})$ by Corollary~\ref{coro:psibb}, the number of such boundary vertices $p$ is $O(n^{4+\epsilon})$. 
\item 
If $s\in \calB$ and $t\in \calP\setminus\calB$, then observe that $(s,t)$ corresponds to an event point of the decomposition $\Psibp$. Since $\Psibp=O(n^{5})$ by Theorem~\ref{theo:sizepsibp}, the number of such boundary vertices $p$ is $O(n^{5})$. 
\item 
If $s\in \calP\setminus\calB$ and $t\in \calB$, then as will be discussed in Section~\ref{sec:psipb}, $(s,t)$ corresponds to a vertex in the lower envelope of the functions defined for the decomposition $\Psipb$, which has $O(n^5)$ vertices by  Lemma~\ref{lem:psipblesize}. Hence, the number of such boundary vertices $p$ is $O(n^5)$. 
\end{itemize}

The lemma thus follows. 
\end{proof}

%\subsection{Computing the vertices and edges of $\calL(F)$ and $\calE$}
\subsection{Computing the vertices and edges of $\boldsymbol{\calL(F)}$}
In the following, we present an algorithm to compute the vertices and edges of $\calL(F)$. It suffices to compute the edges of $\calL(F)$ as the endpoints of edges are vertices of $\calL(F)$.  
%and the projections of these edges on the $s$-plane is $\calE$. 

Since $F$ has $O(n^3)$ $4$-variate elementary  functions, applying the algorithm of \cite{ref:AgarwalCo97} directly can compute all edges of $\calL(F)$ in $O(n^{12+\epsilon})$ expected time by a randomized algorithm. In what follows, we give a more efficient (deterministic) algorithm of $O(n^{7.73})$ time.

Below, we focus on computing the interior edges of $\calL(F)$.  
For the boundary edges of $\calL(F)$, they are actually edges in the lower envelope of the functions for the decomposition $\Psipb$ in Section~\ref{sec:psipb}, and the algorithm there, which essentially solves the problem in one dimension lower and computes these boundary edges in $O(n^5\log^2 n)$ time (the number of such edges is $O(n^5)$), is a simplified version of the algorithm given below for computing the interior edges of $\calL(F)$. 

Consider a set $V$ of four (not necessarily distinct) obstacle vertices $u_i$, $1\leq i\leq 4$. 
%The four vertices may not be distinct. 
We name other vertices as $u_5,u_6,\ldots,u_n$ (if $u_1,u_2,u_3,u_4$ are not distinct, then we can make $n$ larger so that the number of other obstacle vertices is always $n-4$). 
Below, we compute the edges of $\calL(F)$ that are defined by the common intersections of functions of $f_{u_i}(s,t)$, $1\leq i\leq 4$. 

We compute the overlay of $\spm(u_i)$, $1\leq i\leq 4$, denoted by $\calO_V$. 
%We further compute a vertical decomposition of every cell of $\calO_V$ (e.g., by introducing vertical segments from vertices as well as from tangents of curves) so that each cell is of constant size. The size of the new $\calO_V$ is still $O(n^2)$ size. 
For each cell $\sigma\in \calO_V$, for each $1\leq i\leq 4$, $\sigma$ is contained in a single cell of $\spm(u_i)$ and let $v_i$ denote the root of the cell. Solve the following equation $d_{u_1,v_1}(s,t)=d_{u_2,v_2}(s,t)=d_{u_3,v_3}(s,t)=d_{u_4,v_4}(s,t)$ for $s\in \bbR^2$ and $t\in \bbR^2$ gives a curve $\gamma_{\sigma}$ in $\bbR^5$, 
%(if the solution of the equation contains a plane, then we ignore it), 
which may contain edges of $\calL(F)$ that are common intersections of the functions $f_{u_i}$, $1\leq i\leq 4$. Our main goal is to identify these edges. 

Recall that $\eta_s(\gamma_{\sigma})$ (resp, $\eta_t(\gamma_{\sigma})$) denote the projection of $\gamma_{\sigma}$ onto the $s$-plane (resp., the $t$-plane), which is a curve.

\paragraph{Remark.}
In Chiang and Mitchell's algorithm for computing $\Psi$~\cite{ref:ChiangTw99}, $\eta_s(\gamma_{\sigma})$ is called a {\em topological curve}. Since $\calO_V$ has $O(n^2)$ cells, we can obtain $O(n^2)$ topological curves defined by $V$. Enumerating all subsets $V$ of four obstacle vertices produces $O(n^6)$ topological curves. Then, a decomposition of $\calP$ by these topological curves (along with other topological curves for the case where at least one of $s$ and $t$ is on $\calB$) is computed and returned as $\Psi$. They proved that the number of intersections of those $O(n^6)$ topological curves is bounded by $O(n^{10})$~\cite{ref:ChiangTw99}. 
\medskip

We say that a point $p\in \gamma_{\sigma}$ is {\em valid} if $(\eta_s(p),\eta_t(p))$ satisfies the following conditions: (1) $\eta_s(p)$ is visible to $u_i$, for each $1\leq i\leq 4$; (2) $\eta_t(p)$ is in the cell $\sigma$. A sub-curve of $\gamma_{\sigma}$ is {\em valid} if all its points are valid. We first have the following observation. 

\begin{observation}\label{obser:sufficient}
Each valid sub-curve of $\gamma_{\sigma}$ consists of a sequence of edges of $\calA(F)$, i.e., every valid point of $\gamma_{\sigma}$ is on an edge of $\calA(F)$. 
\end{observation}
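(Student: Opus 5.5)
The plan is to unfold the definitions of validity and of the elementary functions, and to check directly that a valid point of $\gamma_\sigma$ lies on the graphs of four distinct elementary functions of $F$ simultaneously. Such a point then belongs to the one-dimensional common intersection of four surface patches in $\bbR^5$, i.e., to an edge of $\calA(F)$.

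Fix a valid point $p\in\gamma_\sigma$, and write $s=\eta_s(p)$ and $t=\eta_t(p)$. By condition (1), $s\in\vis(u_i)$ for each $1\leq i\leq 4$, so $s$ lies in some triangle $\triangle_i$ of $\vis_{\triangle}(u_i)$. By condition (2), $t\in\sigma$, and $\sigma$ lies inside the cell of $\spm(u_i)$ whose root is $v_i$. Hence the elementary function of $f_{u_i}$ defined on $\triangle_i\times(\text{that cell})$ is exactly $d_{u_i,v_i}$, and its value at $(s,t)$ is $d_{u_i,v_i}(s,t)$. Since $p\in\gamma_\sigma$, these four values coincide, and they equal the fifth coordinate of $p$. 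Therefore $p$ lies on the surface patches of four elementary functions, one for each $u_i$. When the $u_i$ are not distinct, the pairs $(u_i,v_i)$ still differ, so we again get four different elementary functions, as in the proof of Observation~\ref{obser:threefun}.

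Next, I use the general position assumption to argue that the common intersection of four distinct $4$-variate elementary functions, restricted to a neighborhood of $p$, is a curve. This curve is part of $\gamma_\sigma$ itself, because $\gamma_\sigma$ is defined by the same three equalities. A point of $\calA(F)$ lying on exactly four patches sits on an edge; if it lies on a fifth patch, or on a patch boundary, it is a vertex of $\calA(F)$ and is an endpoint of such edges. Consequently a valid sub-curve of $\gamma_\sigma$, being connected, is a concatenation of edges of $\calA(F)$ separated by vertices, which are finitely many by the argument of Observation~\ref{obser:threefun} extended to five functions.

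The main obstacle is the boundary subtlety. When $s$ lies on an edge of a triangle of $\vis_{\triangle}(u_i)$, or $t$ lies on $\partial\sigma$, two adjacent elementary functions share the point $p$, and the "four patches" are then ambiguous. I would handle this by noting that adjacent elementary functions of the same compound function agree there, since both are $d_{u_i,v_i}$ on the $s$-side. Such points are therefore just patch-boundary vertices of $\calA(F)$ splitting $\gamma_\sigma$ into consecutive edges, which is consistent with the claim that the sub-curve "consists of a sequence of edges."
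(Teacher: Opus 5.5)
Your core argument is the paper's. Validity guarantees that each $d_{u_i,v_i}$ is a genuine elementary function of $f_{u_i}$ at $(\eta_s(p),\eta_t(p))$, so $p$ lies in the common intersection of four patches. That intersection is the curve $\gamma_\sigma$, so $p$ lies on an edge of $\calA(F)$, or on a vertex bounding such edges. Your remarks on patch boundaries and on the finiteness of vertices fill in what the paper leaves implicit.

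One sentence is wrong, though. When the $u_i$ are not distinct, the pairs $(u_i,v_i)$ do \emph{not} still differ. If $u_i=u_j$, then $\sigma$ lies in a single cell of the common map $\spm(u_i)=\spm(u_j)$, which forces $v_i=v_j$ and $d_{u_i,v_i}\equiv d_{u_j,v_j}$. The system then has at most two independent equations, its solution set is two-dimensional, and $\gamma_\sigma$ is not a curve. That case simply has to be excluded; the paper does so tacitly by asserting that $\gamma_\sigma$ is a curve. Citing Observation~\ref{obser:threefun} runs the implication backwards: there, distinctness of the functions is the hypothesis and distinctness of the pairs is the conclusion.

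A smaller point concerns $t\in\partial\sigma$. When that boundary is a bisector of $\spm(u_i)$, the neighbouring patch is $d_{u_i,v_i'}$, not $d_{u_i,v_i}$. It agrees with $d_{u_i,v_i}$ there because $f_{u_i}(s,t)=|su_i|+d(u_i,t)$ is continuous in $t$, not for the $s$-side reason you give.
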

\begin{proof}
Consider a valid point $p$ of $\gamma_{\sigma}$. We argue that $p$ must be on an edge of $\calA(F)$. Indeed, by definition, $p$ is in the common intersection of the elementary functions $d_{u_i,v_i}(s,t)$, $1\leq i\leq 4$. Since the equation $d_{u_1,v_1}(s,t)=d_{u_2,v_2}(s,t)=d_{u_3,v_3}(s,t)=d_{u_4,v_4}(s,t)$ gives a curve $\gamma_{\sigma}$ in $\bbR^5$, each point in the common intersection of the above elementary functions must belong to an edge of $\calA(F)$. As such, $p$ is on an edge of $\calA(F)$.
\end{proof}

On the other hand, the following observation shows that each edge of $\calL(F)$ is contained in a valid sub-curve.

\begin{observation}\label{obser:necessary}
For any point $p$ in an interior edge of $\calL(F)$, there must be a set $V$ of four obstacle vertices $u_i$, $1\leq i\leq 4$, such that $p$ is a valid point on the curve $\gamma_{\sigma}$ defined by a cell $\sigma\in \calO_V$.   
\end{observation}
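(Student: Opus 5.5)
Take a point $p$ in the relative interior of an interior edge $e$ of $\calL(F)$, and write $s=\eta_s(p)$ and $t=\eta_t(p)$. Since $e$ is interior, both $s$ and $t$ lie in $\calP\setminus\calB$. Since $p$ lies on $\calL(F)$, its fifth coordinate is $d(s,t)$. By the general position assumption and the dimension count in $\bbR^5$, $p$ lies on exactly four elementary functions attaining the minimum. Each of them is some $d_{u_i,v_i}(s,t)$ coming from the compound function $f_{u_i}$. So there are four shortest $s$-$t$ paths $\overline{su_i}\cup\pi(u_i,v_i)\cup\overline{v_it}$, $1\leq i\leq 4$, whose lengths all equal $d(s,t)$. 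We take $V=\{u_1,u_2,u_3,u_4\}$, allowing repetitions: two of the paths may share the first vertex $u_i$ and differ only in $v_i$, and the paper's convention of non-distinct vertices covers this.

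Next we check the two validity conditions.

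\textbf{Condition (1).} Each $f_{u_i}(s,t)$ is finite at $p$, because $f_{u_i}=\infty$ whenever $s\notin\vis(u_i)$ and $p$ lies on the finite lower envelope. Hence $s$ is visible to every $u_i$.

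\textbf{Condition (2).} By Definition~\ref{def:fun}, the elementary piece of $f_{u_i}$ that is active at $p$ has $v_i$ equal to the root of the cell of $\spm(u_i)$ containing $t$. Let $\sigma$ be the cell of the overlay $\calO_V$ containing $t$. Then $\sigma$ lies in the cell of $\spm(u_i)$ with root $v_i$, for every $i$. Therefore the curve $\gamma_\sigma$ is built from exactly the four functions $d_{u_i,v_i}$ that are equal at $p$, so $p\in\gamma_\sigma$, and $\eta_t(p)=t\in\sigma$.

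Alternatively, one can argue directly from shortest paths. Because $\pi(s,u_i)$ followed by the rest of the path is a shortest path, the subpath from $u_i$ to $t$ is a shortest $u_i$-$t$ path ending in $\overline{v_it}$. Hence $t$ lies in the cell of $\spm(u_i)$ rooted at $v_i$, which again gives the claim.

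\textbf{Main obstacle: boundaries of the overlay.} The delicate step is the case where $t$ lies on the boundary of a cell of $\spm(u_i)$, for instance on a bisector curve of $\spm(u_i)$, where the root is ambiguous. Here I would rely on two facts. First, the edge $e$ is a one-dimensional piece of the four-fold intersection, so by general position its relative interior meets such boundaries only at finitely many points. Second, cells are taken to be closed. So we choose $\sigma$ to be a closed cell of $\calO_V$ whose roots match the active $v_i$. Such a cell exists because, as $t$ is approached from inside the appropriate cell of $\spm(u_i)$, the active root remains $v_i$. Taking cells closed makes the statement hold at every point of the edge, not just at generic ones, and this completes the argument.
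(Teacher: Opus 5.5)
Your main line is the paper's proof. The four elementary functions $d_{u_i,v_i}$ active at $p$ give $s\in\vis(u_i)$ and place $t$ in the cell of $\spm(u_i)$ rooted at $v_i$. So the cell $\sigma$ of $\calO_V$ containing $t$ yields exactly these four functions, and $p$ is valid on $\gamma_\sigma$. Reading (1) and (2) off the domains of the elementary functions, as you do, is the sturdier route. Your first paragraph claims that the fifth coordinate of $p$ is $d(s,t)$, and so there are four shortest $s$-$t$ paths; this also underlies your ``alternative'' argument, which is how the paper argues. That claim holds only when $s$ does not see $t$, and an interior edge of $\calL(F)$ may lie over pairs with $s$ seeing $t$. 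Your (1)--(2) do not need it.

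The step that fails is your handling of two active functions with the same first vertex, $d_{u,v}$ and $d_{u,v'}$ with $v\neq v'$. Putting $u_1=u_2=u$ into $V$ does not realize them. Every cell of $\calO_V$, open or closed, lies in one cell of $\spm(u)$ and so gets the same root for indices $1$ and $2$. Hence no $\sigma$ ``whose roots match the active $v_i$'' exists. For every $\sigma$ the system collapses to two equations with a two-dimensional solution set. That set does contain $p$ for a closed $\sigma$ adjacent to the bisector, but it is not the curve the statement and the algorithm are about.

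Moreover, $d_{u,v}=d_{u,v'}$ does not involve $s$; it says $t$ lies on the bisector of $\spm(u)$ between the $v$- and $v'$-cells. So along the whole edge $t$ stays on that bisector, contrary to your claim that the relative interior meets such boundaries only finitely often. The configuration is genuine and not excluded by the general position assumptions: it is a bisector of $\spm(s)$ contracting at a point where two of the four cells are reached from $s$ through the same vertex $u$. The paper's proof passes over it silently, since its cell $\sigma$ with root $v_i$ in $\spm(u_i)$ for every $i$ cannot exist here either. Covering it requires letting $\sigma$ also range over one-dimensional faces of $\calO_V$, namely pieces of that bisector. On such a face the constraint $t\in\sigma$ replaces the equation $d_{u,v}=d_{u,v'}$.
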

\begin{proof}
Recall that each interior edge of $\calL(F)$ belongs to the  common intersection of four elementary functions of $F$. Let $e$ be the interior edge of $\calL(F)$ that contains $p$. 
Since $e$ belongs to the common intersection of four elementary functions, say,  
$d_{u_i,v_i}(s,t)$, for some obstacle vertices $u_i$ and $v_i$, $1\leq i\leq 4$, and $e\in \calL(F)$, we have  $d_{u_1,v_1}(s,t)=d_{u_2,v_2}(s,t)=d_{u_3,v_3}(s,t)=d_{u_4,v_4}(s,t)$ and $\overline{su_i}\cup\pi(u_i,v_i)\cup \overline{v_it}$ must be a shortest \st\ path for each $1\leq i\leq 4$. As such, $s$ must be visible to $u_i$ for all $1\leq i\leq 4$, and there must be a cell $\sigma$ in the overlay of $\spm(u_i)$, $1\leq i\leq 4$, such that (1) the root of the cell of $\spm(u_i)$ containing $\sigma$ is $v_i$, for each $1\leq i\leq 4$; (2) $t\in \sigma$. By definition, $p$ must be a valid point on $\gamma_{\sigma}$. 
\end{proof}

\paragraph{Algorithm overview.}
%In light of Observation~\ref{obser:necessary}, 
We first find all valid sub-curves of $\gamma_{\sigma}$. By Observation~\ref{obser:sufficient}, each valid sub-curve comprises a sequence of edges of $\calA(F)$. We next find all vertices of $\calA(F)$ on each valid sub-curve, after which each portion of the curve between two vertices is an edge of $\calA(F)$. In this way, all edges contained in each valid curve are computed. We finally determine for each edge whether it is an edge on $\calL(F)$. If we do this for all cells of $\calO_V$ and for all subsets $V$ of four obstacle vertices, then all edges of $\calL(F)$ are computed by Observation~\ref{obser:necessary}. As such, our algorithm proceeds with three main procedures. First, find all valid sub-curves of $\gamma_{\sigma}$. Second, compute all vertices on all valid sub-curves. Third, for each edge in these valid sub-curves, determine whether it is on $\calL(F)$. 
\medskip

%Then, we process each valid sub-curve $\xi$ as follows. For each obstacle vertex $u_j$ with $j\geq 5$, define $\xi^+_j=\{(s,t) \ |\ f_{u_j}(s,t)> f_{u_1}(s,t)\}$ (we assume $f_{u_j}(s,t)=\infty$ if $s$ is not visible to $u_j$). Then, the common intersection $\bigcup_{j\geq 5}\xi^+_j$ is the portion of $\xi$ over which $\calL(F)$ is attained by the elementary functions $d_{u_i,v_i}(s,t)$, $1\leq i\leq 4$. Doing this for all cells $\sigma$ of $\calO_V$ will determine the edges of $\calL(F)$ defined by the functions $f_{u_i}(s,t)$, $1\leq i\leq 4$. Doing this for all subsets $V$ of four obstacle vertices will compute all interior edges of $\calL(F)$. We next give the details of the algorithm. We first present a preliminary algorithm of $O(n^8\log n)$ time and then improve it to $O(n^{7.73})$ time. 

\paragraph{The first procedure: Finding the valid sub-curves.}
We wish to partition $\gamma_{\sigma}$ into maximal sub-curves whose projections on the $t$-plane are inside $\sigma$; let $\Gamma^*_{\sigma}$ denote the set of such sub-curves. 
%we call these sub-curves {\em candidate-valid sub-curves}. 
We can compute $\Gamma^*_{\sigma}$ as follows. We compute a vertical decomposition of $\sigma$ (i.e., by introducing vertical segments from vertices as well as from locally $x$-extreme points of curves) and we also compute a vertical decomposition for $\bbR^2\setminus \sigma$, i.e., the region of the plane outside $\sigma$; we use $\triangle(\sigma)$ to denote the resulting decomposition of the plane. Since $\sigma$ is of size $O(n)$, $\triangle(\sigma)$ can be easily computed in $O(n\log n)$ time, e.g., by a plane sweeping algorithm. Then, by traversing on $\eta_t(\gamma_{\sigma})$ using $\triangle(\sigma)$, we can compute $\Gamma^*_{\sigma}$ in $O(n)$ time (indeed, since $\eta_t(\gamma_{\sigma})$ is of constant size, it intersects each edge of $\triangle(\sigma)$ a constant number of times). Hence, $|\Gamma^*_{\sigma}|=O(n)$. 

Next, for each $1\leq i\leq 4$, we find the sub-curves of $\gamma_{\sigma}$ whose projections on the $s$-plane are inside $\vis(u_i)$; let $\Gamma^i_{\sigma}$ denote the set of these sub-curves. We can compute $\Gamma^i_{\sigma}$ in $O(n\log n)$ time, e.g., by first triangulating $\vis(u_i)$ as well as $\bbR^2\setminus \vis(u_i)$ and then traversing on $\eta_s(\gamma_{\sigma})$ using the triangulation. As $\eta_s(\gamma_{\sigma})$ is of constant size, it intersects each edge of $\vis(u_i)$ at most $O(1)$ times. Hence, $|\Gamma^i_{\sigma}|=O(n)$. 

Now we compute the common intersection $\Gamma_{\sigma}$ of the sub-curves of $\Gamma^*_{\sigma}$ and $\Gamma^i_{\sigma}$ for all $1\leq i\leq 4$, i.e., a point $p$ is in a sub-curve of $\Gamma_{\sigma}$ if $p$ is in a curve of $\Gamma^*_{\sigma}$ and also in a curve of $\Gamma^i_{\sigma}$ for every $1\leq i\leq 4$. Since the total number of sub-curves of $\Gamma^*_{\sigma}$ and $\Gamma^i_{\sigma}$, $1\leq i\leq 4$, is $O(n)$, $\Gamma_{\sigma}$ has $O(n)$ sub-curves of $\gamma_{\sigma}$. By definition, $\Gamma_{\sigma}$ is exactly the set of valid sub-curves of $\gamma_{\sigma}$. 
We can compute $\Gamma_{\sigma}$ in $O(n\log n)$ time as follows. 

Let $\Gamma$ denote the union of $\Gamma^*_{\sigma}$ and $\Gamma^i_{\sigma}$ for all $1\leq i\leq 4$. 
Our algorithm is based on the observation that a point $p$ is in a curve of $\Gamma_{\sigma}$ if and only if $p$ is covered by exactly $5$ sub-curves of $\Gamma$, i.e., by a sub-curve of $\Gamma^*_{\sigma}$ and a sub-curve of $\Gamma^i_{\sigma}$ for every $1\leq i\leq 4$. 
We first sort the endpoints of the sub-curves of $\Gamma$ on $\gamma_{\sigma}$. Then, we sweep a point $p$ on $\gamma_{\sigma}$ from one end to the other. During the sweeping, we maintain a counter $c$ to record the number of sub-curves that currently cover $p$. An event happens if $p$ meets an endpoint of a sub-curve of $\Gamma$. If $p$ is entering the sub-curve, then we increment $c$ by one. If $c=5$, this means that we are entering a common intersection now; let $q=p$, i.e., $q$ is used to record the first endpoint of a common intersection. If $p$ is leaving the sub-curve, then we do the following. If $c=5$, then we are leaving a common intersection whose right endpoint is $p$, and thus we report the portion of $\gamma_{\sigma}$ from $q$ to $p$ and we also decrement $c$ by one. 
If $c<5$, we just decrement $c$ by one. 

As such, we can compute the set $\Gamma_{\sigma}$ of $O(n)$ valid sub-curves of $\gamma_{\sigma}$ in $O(n\log n)$ time. Doing this for all $O(n^2)$ cells $\sigma$ of the overlay $\calO_V$ computes in $O(n^3\log n)$ time a total of $O(n^3)$ valid sub-curves for $O(n^2)$ curves $\gamma_{\sigma}$. Let $\Gamma(V)$ denote the set of all of these $O(n^3)$ valid curves. 

This finishes the first procedure, which takes $O(n^3\log n)$ time in total. 

\paragraph{The second procedure: Computing the vertices on valid curves.}
We now compute the vertices of $\calA(F)$ on all valid curves of $\Gamma(V)$. Our approach relies on the fact that a vertex on a curve $\gamma_{\sigma}$ must be the intersection between $\gamma_{\sigma}$ and another elementary function $d_{u_j,v_j}(s,t)$ for some $j>4$. 

For each $u_j$, $j>4$, we do the following. For notational convenience, let $j=5$. 
We compute the overlay $\calO_j$ of the five shortest path maps $\spm(u_i)$, $1\leq i\leq 5$, which takes $O(n^2\log n)$ time. For each cell $\sigma'$ of $\calO_j$, for each $1\leq i\leq 5$, $\sigma'$ is contained in a single cell of $\spm(u_i)$ and let $v_i$ denote the root of the cell. Also, $\sigma'$ must be contained in a single cell of $\calO_V$, denoted by $\sigma$. 
We solve the equation $d_{u_1,v_1}(s,t)=d_{u_2,v_2}(s,t)=d_{u_3,v_3}(s,t)=d_{u_4,v_4}(s,t)=d_{u_5,v_5}(s,t)$ for $s\in \bbR^2$ and $t\in \bbR^2$ to obtain $O(1)$ solutions $(s,t)$. Each such solution $(s,t)$ is processed as follows. First, we do a point location to find the cell of $\calO_j$ that contains $t$; if the cell is not $\sigma'$, then we ignore this solution. Second, we check whether $s\in \vis(u_i)$ (which can be done in $O(\log n)$ time by constructing a point location data structure on $\vis(u_i)$), for every $1\leq i\leq 5$. If $s$ is not in $\vis(u_i)$ for any $1\leq i\leq 5$, then we ignore this solution $(s,t)$. If $t\in \sigma'\subseteq \sigma$ and $s\in \vis(u_i)$ for every $1\leq i\leq 5$, then the point $p=(s,t,d_{u_1,v_1}(s,t))\in\bbR^5$ must be on the curve $\gamma_{\sigma}$. We further check whether $p$ is on a valid curve of $\Gamma_{\sigma}$, which can be done in $O(n)$ time by binary search after curves of $\Gamma_{\sigma}$ are sorted along $\gamma_{\sigma}$; if yes, then $p$ is a vertex and we insert it to the valid curve that contains it. 

As $\calO_j$ has $O(n^2)$ cells, the above computes in $O(n^2\log n)$ time $O(n^2)$ vertices for the curves of $\Gamma(V)$. Doing this for all $u_j$ with $j>4$ computing all $O(n^3)$ vertices for the curves of $\Gamma(V)$. 
For each curve of $\Gamma(V)$, by sorting all its vertices, we obtain the edges of $\calA(F)$ contained in the curve. As $|\Gamma(V)|=O(n^3)$, this step takes $O(n^3\log n)$ time in total. Let $E(V)$ denote the set of edges thus computed. 

This finishes the second procedure, which takes $O(n^3\log n)$ time in total. 

\paragraph{The third procedure: Identifying the edges of $\boldsymbol{\calL(F)}$.}
We now find the edges of $E(V)$ that are on $\calL(F)$. Let $e$ be an edge of $E(V)$. To determine whether $e\in \calL(F)$, we utilize the observation that $e\in \calL(F)$ if and only if $p\in \calL(F)$ for any interior point $p$ of $e$. We pick an interior point $p$ on $e$. Let $s=\eta_s(p)$, $t=\eta_t(p)$, and $d(p)$ be the $5$-th coordinate value of $p$. Observe that $p\in \calL(F)$ if and only if $d(p)=d(s,t)$. As such, the problem boils down to computing the geodesic distance $d(s,t)$. 
We compute $d(s,t)$ by a two-point shortest path query.

Suppose there is a two-point shortest path query data structure of $T(n)$ preprocessing time and $Q(n)$ query time.
Then, finding all edges of $E(V)$ that are on $\calL(F)$ can be done in $O(n^3\cdot Q(n))$ time as $|E(V)|=O(n^3)$. 

Doing this for all subsets $V$ of four obstacle vertices can compute all edges of $\calL(F)$ in $O(n^7\log n+T(n)+n^7\cdot Q(n))$ time. The following theorem summarizes our result. 

\begin{theorem}\label{theo:constructpsi}
Suppose there is a two-point shortest path query data structure of $T(n)$ preprocessing time and $Q(n)$ query time. Then, one can compute all edges and vertices of $\calL(F)$ in $O(n^7\log n+T(n)+n^7\cdot Q(n))$ time. The decomposition $\Psi$ of size $O(n^7+|\Psi|)$ can be computed in additional $O((n^7+|\Psi|)\log n)$ time. 
\end{theorem}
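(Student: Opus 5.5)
The plan is to assemble the three procedures described above and then account for their costs, following the structure fixed by Observations~\ref{obser:sufficient} and~\ref{obser:necessary}.

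\emph{Correctness of the edge computation.} Fix a set $V$ of four obstacle vertices. By Observation~\ref{obser:necessary}, every point of an interior edge of $\calL(F)$ is a valid point of some curve $\gamma_\sigma$ with $\sigma\in\calO_V$, for some $V$. The first procedure computes exactly the valid sub-curves $\Gamma_\sigma$. By Observation~\ref{obser:sufficient}, each of these sub-curves is a union of edges of $\calA(F)$. The breakpoints of these edges are intersections with a fifth elementary function $d_{u_j,v_j}$. Every such intersection is found by the second procedure, because the cell $\sigma'$ of $\calO_j$ containing $t$ fixes $v_1,\dots,v_5$ and we keep only solutions that are valid. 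Hence $E(V)$ is precisely the set of edges of $\calA(F)$ lying on valid sub-curves. An edge of $\calA(F)$ lies either entirely on $\calL(F)$ or is disjoint from its relative interior. So testing a single interior point $p$, by comparing its fifth coordinate with $d(\eta_s(p),\eta_t(p))$ via a two-point query, correctly classifies the edge. Ranging over all $V$ therefore yields every interior edge of $\calL(F)$ and nothing else. Boundary edges are handled by the lower-dimensional version (Section~\ref{sec:psipb}) within the same budget. Vertices are the endpoints of these edges.

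\emph{Running time.} For one $V$, the costs are as follows:
\begin{itemize}
\item computing $\calO_V$ takes $O(n^2\log n)$ time;
\item the first procedure takes $O(n^3\log n)$ time, i.e., $O(n\log n)$ per cell;
\item the second procedure takes $O(n^2\log n)$ time per $u_j$, so $O(n^3\log n)$ in total, plus sorting;
\item the third procedure takes $O(n^3 Q(n))$ time, since $|E(V)|=O(n^3)$.
\end{itemize}
Summing over the $O(n^4)$ choices of $V$ gives $O(n^7\log n + n^7 Q(n))$. Adding the one-time preprocessing $T(n)$, together with the $O(n^2\log n)$ cost of all shortest path maps and visibility polygons, gives the claimed bound.

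\emph{Constructing $\Psi$.} Let $E$ be the set of $\eta_s$-projections of the interior edges of $\calL(F)$. Each edge is a constant-degree curve, and $|E|=O(n^7)$ by Lemma~\ref{lem:psilesize}. By the main-idea discussion, citing \cite{ref:ChiangTw99}, $\Psi$ is the decomposition of $\calP$ induced by $E$ (together with $\calB$ and the boundary-case curves). A plane sweep over these curves reports all their intersections and builds the arrangement in $O((|E|+|\Psi|)\log n)=O((n^7+|\Psi|)\log n)$ time.

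The main obstacle is keeping this sweep output-sensitive. Projections of distinct edges may cross at points that are not vertices of $\Psi$, for instance when the crossing is a spurious overlap. We would argue that any crossing of two projected edges is a point where $\spm(s)$ undergoes two changes, and hence is a vertex of $\Psi$. The remaining degenerate coincidences, where the same $s$ is projected from different $t$, are charged to the $O(n^7)$ edges and vertices of $\calL(F)$. The total size is then $O(n^7+|\Psi|)$.
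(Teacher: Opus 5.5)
Your proposal is correct and follows essentially the same route as the paper. You use the same three procedures (valid sub-curves, vertices found via a fifth elementary function over the overlay $\calO_j$, and classifying each edge of $E(V)$ by one two-point query at an interior point), the same $O(n^3\log n + n^3 Q(n))$ cost per set $V$ summed over $O(n^4)$ sets plus $T(n)$, and the same plane sweep over the projected edges to build $\Psi$. Your extra paragraph on why the sweep stays output-sensitive is a point the paper only asserts, and your charging of degree-two points to the $O(n^7)$ features of $\calL(F)$ matches the stated $O(n^7+|\Psi|)$ size.
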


\paragraph{Remark.} We will discuss in Section~\ref{sec:query} that $\Psi$ can be used to construct a two-point shortest path query data structure. Theorem~\ref{theo:constructpsi} is interesting in the sense that it shows that a two-point shortest path query data structure can be used to construct $\Psi$, implying that the two problems can be somehow reduced to each other. 
\medskip

As discussed in~\cite{ref:BaeTh13}, using the preprocessing-query trade-off developed by Chiang and Mitchell~\cite{ref:ChiangTw99}, we can have $T(n)=O(n^{7.73})$ and $Q(n)=O(n^{0.73})$, or $T(n)=O(n^{5})$ and $Q(n)=O(h+\log n)$. Consequently, we have the following. 

\begin{corollary}\label{coro:psipp}
One can compute all edges and vertices of $\calL(F)$ in $O(\min\{n^{7.73},n^7\cdot (h+\log n)\})$ time. The decomposition $\Psi$ of size $O(n^7+|\Psi|)$ can be computed in additional $O((n^7+|\Psi|)\log n)$ time. 
\end{corollary}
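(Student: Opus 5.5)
This corollary follows by plugging concrete two-point shortest path query structures into Theorem~\ref{theo:constructpsi}. That theorem says all edges and vertices of $\calL(F)$ can be computed in $O(n^7\log n+T(n)+n^7\cdot Q(n))$ time, given a structure with preprocessing $T(n)$ and query time $Q(n)$. It also says $\Psi$ is then obtained in additional $O((n^7+|\Psi|)\log n)$ time. So the only remaining task is to choose $T(n)$ and $Q(n)$ to balance the terms $T(n)$ and $n^7\cdot Q(n)$.

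I will use the two trade-off points from Chiang and Mitchell~\cite{ref:ChiangTw99}, as pointed out in~\cite{ref:BaeTh13}. The first is $T(n)=O(n^{7.73})$ and $Q(n)=O(n^{0.73})$. The second is $T(n)=O(n^5)$ and $Q(n)=O(h+\log n)$.
\begin{itemize}
\item With the first choice, the running time is $O(n^7\log n+n^{7.73}+n^7\cdot n^{0.73})=O(n^{7.73})$.
\item With the second choice, it is $O(n^7\log n+n^5+n^7(h+\log n))=O(n^7(h+\log n))$.
\end{itemize}

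Both structures can be built, so we run whichever gives the smaller bound; this requires only knowing $h$ and $n$ in advance. That yields $O(\min\{n^{7.73},n^7(h+\log n)\})$ for computing $\calL(F)$.

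For $\Psi$, the size bound $O(n^7+|\Psi|)$ and the additional sweep time are inherited directly from Theorem~\ref{theo:constructpsi}. They rely on $|E|=O(n^7)$, which follows from Lemma~\ref{lem:psilesize}.

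The main obstacle is to check that the cited trade-off really gives these exponents. In particular, the $n^{0.73}$ query time at $n^{7.73}$ preprocessing comes from Chiang--Mitchell's parameterized trade-off with its exponent optimized. I would take this value as stated in~\cite{ref:BaeTh13} rather than rederive it. I would also check that the boundary edges of $\calL(F)$, handled via Section~\ref{sec:psipb} in $O(n^5\log^2 n)$ time, are dominated by either bound; they are.
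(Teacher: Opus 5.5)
Your proposal is correct and matches the paper's argument. Like the paper, you plug the two Chiang--Mitchell trade-offs cited via Bae--Korman--Okamoto, namely $T(n)=O(n^{7.73}),\,Q(n)=O(n^{0.73})$ and $T(n)=O(n^5),\,Q(n)=O(h+\log n)$, into Theorem~\ref{theo:constructpsi}, take the better of the two resulting bounds, and inherit the $\Psi$ construction bound directly from that theorem.
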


De Berg, Miltzow, and Staals~\cite{ref:deBergTo24} also provided a preprocessing-query trade-off, with a randomized preprocessing time. Using their result, one could obtain a randomized time with a slightly smaller term than $O(n^{7.73})$.

%\paragraph{Remark.} For comparison, Chiang and Mitchell's approach~\cite{ref:ChiangTw99} first computes a collection of topological curves and then computes the decomposition of $\calP$ by these topological curves, which was proved to be of size $O(n^{10})$. Their topological curve is something similar to our curve $\xi_s$ and therefore may be a compound set of the projection of the edges of $\calL(F)$ on the $s$-plane. As such, their algorithm is not out-sensitive. In particular, a smaller upper bound of $|\Psi|$ would not immediately lead to an improvement of their algorithm. 

\section{The decomposition $\boldsymbol{\Psipb}$}
\label{sec:psipb}

In this section, we discuss the decomposition $\Psipb$. Recall that $\Psipb$ is the decomposition of $\calP$ into cells such that $\spmb(s)$ is combinatorially the same for all points $s$ in the same cell.
While we do not have a better upper bound than the bound $O(n^{10})$ in~\cite{ref:ChiangTw99}, we give an algorithm of $O((n^{5}+|\Psipb|)\log n)$ time for constructing $\Psipb$. Note that an $O(n^{10}\log n)$-time algorithm is presented in \cite{ref:ChiangTw99}, but its runtime does not depend on $|\Psipb|$.

The algorithm in this section is a simplified version of that in Section~\ref{sec:psipp}. We follow the structure and notation from there.

For each obstacle vertex $u$, we define $f_u(s,t)$ in the same way as in Definition~\ref{def:fun} but with respect to $s\in \vis(u)$ and 
$t\in \calB$. As such, $f_u(s,t)$ is a 3-variate function defining an algebraic surface patch in $\bbR^4$. Similarly, we can decompose $f_u(s,t)$ into $O(n^2)$ {\em elementary functions} as follows. 
Let $\vis_{\triangle}(u)$ be a triangulation of $\vis(u)$. For each triangle $\triangle\in \vis_{\triangle}(u)$, for each segment $\sigma\in \spmb(u)$, $f_u(s,t)$ on $s\in \triangle$ and $t\in \sigma$ is a constant-sized function since $f_u(s,t)=d_{u,v}(s,t)$, where $v$ is the root of the cell of $\spm(u)$ containing $\sigma$,  
and we call it an {\em elementary function} of $f_u(s,t)$. As $\vis_{\triangle}(u)$ has $O(n)$ triangles and $\spmb(u)$ has $O(n)$ segments, $f_u(s,t)$ has $O(n^2)$ elementary functions. 
We refer to the original function $f_u(s,t)$ as a {\em compound function}. 

For simplicity, we call the first two coordinates of $\bbR^4$ the {\em $s$-plane}, where the domain of $s$ lies; we refer to the third coordinate as the {\em $t$-axis} in the same way as defined in Section~\ref{sec:psibb}. 
%The $s$-plane and $t$-plane together form the {\em $st$-space}. 
For any point $p$ in $\bbR^4$, we use $\eta_s(p)$ (resp., $\eta_t(p)$) to denote the projection of $p$ onto the $s$-plane (resp., $t$-axis). By slightly abusing the notation, for a set $\gamma$ of points in $\bbR^4$ (e.g., $\gamma$ is a curve), let $\eta_s(\gamma)=\{\eta_s(p)\ |\ p\in \gamma\}$; $\eta_t(\gamma)$ is defined similarly.

Let $F$ denote the set of functions $f_u$ for all obstacle vertices $u$. 
As such, $F$ has $O(n)$ compound functions with a total of $O(n^3)$ elementary functions.  
Let $\calL(F)$ denote the lower envelope of $F$ in $\bbR^4$. Denote by $\calA(F)$ the arrangement of the surface patches in $\bbR^4$ defined by the functions of $F$. 

\paragraph{Main idea.}
Roughly speaking, each interior vertex of $\calL(F)$ is the common intersection of four elementary functions of $F$ while each interior edge of $\calL(F)$ is the common intersection of three elementary functions. Let $e$ be an interior edge of $\calL(F)$. By our definition of the functions of $F$, $\eta_s(e)$ is a curve in $\calP$. Further, when $s$ moves across $\eta_s(e)$ in $\calP$, $\spmb(s)$ changes topologically. Indeed, $\eta_s(e)$ belongs to a so-called {\em topological curve} defined in \cite{ref:ChiangTw99}. The reverse is also true, i.e., when $\spmb(s)$ changes topologically during the moving of $s$, $s$ must cross a topological curve in $\calP$~\cite{ref:ChiangTw99}, and thus cross $\eta_s(e)$ for an interior edge $e$ of $\calL(F)$. As such, $\Psipb$ is the decomposition of $\calP$ by the curves of $E$, where $E$ is the set of projections of all interior edges of $\calL(F)$ on the $s$-plane. Once $E$ is available, $\Psi$ can be computed in $O((|E|+|\Psipb|)\log n)$ time, e.g, by a plane sweeping algorithm. 

\paragraph{Combinatorial complexity of $\boldsymbol{\calL(F)}$.}
As $F$ has $O(n^3)$ $3$-variate elementary functions, applying the results of~\cite{ref:SharirAl94,ref:SharirDa95} directly would give an $O(n^{9+\epsilon})$ upper bound on the complexity of $\calL(F)$. In the following, we first show in Lemma~\ref{lem:psipblesize} that $|\calL(F)|=O(n^5)$ by a simple brute-force method. Then, we present an $O(n^{5}\log n)$-time algorithm to compute all vertices and edges of $\calL(F)$, and thus the set $E$ (whose size is $O(n^5)$) can be obtained as well. Consequently, as discussed above, $\Psipb$ can be constructed in additional $O((n^5+|\Psipb|)\log n)$ time. 

\begin{lemma}\label{lem:psipblesize}
The combinatorial complexity of $\calL(F)$ is bounded by $O(n^5)$. 
\end{lemma}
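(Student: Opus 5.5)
We follow the brute-force argument of Lemma~\ref{lem:psilesize}, one dimension lower. It suffices to bound the vertices of $\calL(F)$, since in $\bbR^4$ each edge and face of constant complexity is charged to $O(1)$ vertices or to $O(1)$ features of the boundary. We treat separately the \emph{interior} vertices, with $s\in\calP\setminus\calB$, and the \emph{boundary} vertices, with $s\in\calB$. Note that $t\in\calB$ always.

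\emph{Interior vertices.} A vertex $p$ of $\calL(F)$ in $\bbR^4$ with $s=\eta_s(p)$ in the interior of $\calP$ is the common intersection of four elementary functions $d_{u_i,v_i}(s,t)$, $1\leq i\leq 4$. Here $u_i$ is the anchor of $s$ and $v_i$ the anchor of $t$ in four shortest \st\ paths, and since $p\in\calL(F)$ these are genuine shortest paths. As in the proof of Theorem~\ref{theo:sizepsibp}, we fix four (not necessarily distinct) obstacle vertices $u_1,\dots,u_4$. We overlay $\spmb(u_i)$ for $1\leq i\leq 4$, which gives $O(n)$ segments on $\calB$. For each segment $e$ of the overlay and each $i$, the root $v_i$ of the cell of $\spm(u_i)$ containing $e$ is fixed. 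Every interior vertex defined by $u_1,\dots,u_4$ with $t\in e$ solves
\[
d_{u_1,v_1}(s,t)=d_{u_2,v_2}(s,t)=d_{u_3,v_3}(s,t)=d_{u_4,v_4}(s,t),\qquad s\in\bbR^2,\ t\in e .
\]
This is three independent equations in three unknowns, two coordinates of $s$ and one of $t$. By the general position assumption, as in Observation~\ref{obser:threefun}, it has $O(1)$ isolated solutions. Hence each segment contributes $O(1)$ vertices and each quadruple contributes $O(n)$. Summing over $O(n^4)$ quadruples gives $O(n^5)$.

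\emph{Boundary vertices.} If $s\in\calB$, the vertex lies over the restriction of the envelope to $s\in\calB$, $t\in\calB$. That restriction is exactly the lower envelope studied in Section~\ref{sec:psibb}, so its vertices number $O(n^{4+\epsilon})=O(n^5)$ by the theorem there. Other vertices arise from the boundaries of the domains of elementary functions: the triangle edges of $\vis_{\triangle}(u)$ in the $s$-plane, or breakpoints of $\spmb(u)$ on the $t$-axis. For these we would run the same counting with one fewer defining function and one extra constraint, namely $s$ on a triangle edge or $t$ at a breakpoint. For example, fix three vertices $u_1,u_2,u_3$, a segment of their overlay, and one of the $O(n)$ constraint features. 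This gives $O(n^3)\cdot O(n)\cdot O(n)=O(n^5)$ solutions again.

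\emph{Main obstacle.} The main obstacle is making sure that these artificial vertices are counted within $O(n^5)$. An edge of $\vis_{\triangle}(u)$ that is a triangulation diagonal is not a real discontinuity of $f_u$. So we would argue, or else adopt as the convention, that vertices on such diagonals are not genuine features of the envelope. Genuine boundary features come only from edges of $\vis(u)$, of which there are $O(n)$ per $u$, together with breakpoints, so the count $O(n^3)$ (triples and overlay segment) $\times\, O(n)$ (features) stays $O(n^5)$. We must also confirm that the solution set of the system above is zero-dimensional. The degenerate case $u_i=u_j$ with $v_i\neq v_j$ cancels $|su_i|$ and reduces the system to equations in $t$ alone. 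General position assumption (4), unique shortest paths between obstacle vertices, rules out the coincidences that would make that reduced system degenerate.
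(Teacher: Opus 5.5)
Your proposal is correct and matches the paper's proof. Interior vertices are counted by fixing four anchors $u_1,\dots,u_4$ of $s$ and overlaying the maps $\spmb(u_i)$, so that each of the $O(n)$ overlay segments fixes the roots $v_i$ and contributes $O(1)$ solutions, for $O(n^5)$ overall; vertices with $s\in\calB$ are charged to the $O(n^{4+\epsilon})$ bound of Section~\ref{sec:psibb}. Your extra accounting for vertices created by the domain boundaries of the elementary functions goes beyond what the paper's proof addresses, and the same overlay-style counting keeps those within $O(n^5)$ as well.
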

\begin{proof}
We prove it by a simple brute-force argument. Similar methods were used before~\cite{ref:BaeTh13,ref:ChiangTw99,ref:WangOn18}.
It suffices to bound the number of vertices of $\calL(F)$. 

Since $\calL(F)$ is in $\bbR^4$, every interior vertex is bounded by four elementary functions. Hence, we can bound the number of interior vertices as follows. For every four obstacle vertices $u_i$, $1\leq i\leq 4$, we consider the overlay of $\spmb(u_i)$, for all $1\leq i\leq 4$. The overlay has $O(n)$ segments on $\calB$ since each $\spmb(u_i)$ has $O(n)$ segments. For each segment $\sigma$ of the overlay, for each $1\leq i\leq 4$, $\sigma$ is contained in a single cell of $\spm(u_i)$ and let $v_i$ be the root of the cell. Consider the equation $d_{u_1,v_1}(s,t)=d_{u_2,v_2}(s,t)=d_{u_3,v_3}(s,t)=d_{u_4,v_4}(s,t)$, for $t\in \sigma$ and $s\in \bbR^2$. 
On the other hand, consider an interior vertex $p$ of $\calL(F)$. Let $s=\eta_s(p)$ and $t=\eta_t(p)$. 
Since $p$ is the common intersection of four elementary functions, there are four shortest \st\ paths. Suppose the anchors of $s$ in the four shortest paths are $u_i$, $1\leq i\leq 4$, respectively, and $t\in \sigma$; we say that $p$ is {\em defined by} $u_i$, $1\leq i\leq 4$, and the segment $\sigma$. Then, $(s,t)$ must satisfy the above equation. As such, since the equation has $O(1)$ solutions, the segment $\sigma$, along with $u_i$, $1\leq i\leq 4$, can define $O(1)$ interior vertices of $\calL(F)$. As the overlay has $O(n)$ segments, the four obstacle vertices $u_i$, $1\leq i\leq 4$, define $O(n)$ interior vertices for $\calL(F)$. Enumerating all $O(n^4)$ possible combinations of four obstacle vertices gives the $O(n^5)$ upper bound on the number of interior vertices of $\calL(F)$. 

For the boundary vertices of $\calL(F)$, each such vertex $p$ has both $s=\eta_s(p)$ and $t=\eta_t(p)$ on $\calB$. As such, $(s,t)$ corresponds to an event point of the decomposition $\Psibb$. Since $|\Psibb|=O(n^{4+\epsilon})$ by Corollary~\ref{coro:psibb}, the number of boundary vertices $p$ is $O(n^{4+\epsilon})$. 

In summary, the number of vertices and thus the combinatorial size of $\calL(F)$ is $O(n^5)$.
\end{proof}

\subsection{Computing the vertices and edges of $\boldsymbol{\calL(F)}$}
In the following, we present an algorithm to compute the vertices and edges of $\calL(F)$. It suffices to compute the edges of $\calL(F)$ as the endpoints of edges are vertices of $\calL(F)$.  
%and the projections of these edges on the $s$-plane is $\calE$. 

Since $F$ has $O(n^3)$ $4$-variate elementary  functions, applying the algorithm of \cite{ref:AgarwalCo97} directly can compute all edges of $\calL(F)$ in $O(n^{9+\epsilon})$ expected time by a randomized algorithm. In what follows, we give a more efficient (deterministic) algorithm of $O(n^{5}\log n)$ time.

Below, we focus on computing the interior edges of $\calL(F)$.  
As discussed in the proof of Lemma~\ref{lem:psipblesize}, points $p$ in the boundary edges of $\calL(F)$ have $\eta_s(p)$ and $\eta_t(p)$ on $\calB$. 
%Thus, boundary edges correspond to the edges of the decomposition $\Psibb$ (more specifically, the projection of each boundary edge on either the $s$-plane or the $t$-axis is an edge of $\Psibb$). Hence, boundary edges can be obtained from the edges of $\Psibb$; we omit the details. 
As such, boundary edges can be computed by a similar (but simpler) algorithm, which solves the problem in the space one dimension lower. 

Consider a set $V$ of three (not necessarily distinct) obstacle vertices $u_i$, $1\leq i\leq 3$. 
We name other vertices as $u_4,u_5,\ldots,u_n$. 
%(if $u_1,u_2,u_3$ are not distinct, then we can make $n$ larger so that the number of other obstacle vertices is always $n-3$). 
Below, we compute the edges of $\calL(F)$ that are defined by the common intersections of functions of $f_{u_i}(s,t)$, $1\leq i\leq 3$. 

We compute the overlay of $\spmb(u_i)$, $1\leq i\leq 3$, denoted by $\calO_V$. 
For each segment $\sigma\in \calO_V$, for each $1\leq i\leq 3$, $\sigma$ is contained in a single cell of $\spm(u_i)$ and let $v_i$ denote the root of the cell. The following equation $d_{u_1,v_1}(s,t)=d_{u_2,v_2}(s,t)=d_{u_3,v_3}(s,t)$ for $s\in \bbR^2$ and $t\in \bbR$ gives a $3$-dimensional curve $\gamma_{\sigma}$ in $\bbR^4$, 
%(if the solution of the equation contains a plane, then we ignore it), 
which may contain edges of $\calL(F)$ that are common intersections of the functions $f_{u_i}$, $1\leq i\leq 3$. Our goal is to identify these edges. 

Recall that $\eta_s(\gamma_{\sigma})$ (resp, $\eta_t(\gamma_{\sigma})$) denote the projection of $\gamma_{\sigma}$ onto the $s$-plane (resp., the $t$-axis).

We say that a point $p\in \gamma_{\sigma}$ is {\em valid} if $(\eta_s(p),\eta_t(p))$ satisfies the following conditions: (1) $\eta_s(p)$ is visible to $u_i$, for each $1\leq i\leq 3$; (2) $\eta_t(p)\in \sigma$. A sub-curve of $\gamma_{\sigma}$ is {\em valid} if all its points are valid. We first have the following observation. 

\begin{observation}\label{obser:sufficientpsipb}
Each valid sub-curve of $\gamma_{\sigma}$ consists of a sequence of edges of $\calA(F)$ i.e., every valid point of $\gamma_{\sigma}$ is on an edge of $\calA(F)$.  
\end{observation}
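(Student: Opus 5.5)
The plan mirrors the proof of Observation~\ref{obser:sufficient}, adapted to $\bbR^4$ with three functions instead of four. Fix a valid point $p\in\gamma_{\sigma}$ and write $s=\eta_s(p)$ and $t=\eta_t(p)$. I will first show that $p$ lies on each of the three elementary functions $d_{u_i,v_i}(s,t)$, $1\leq i\leq 3$, as genuine surface patches of $F$. I will then argue that the common intersection of these three patches is one-dimensional near $p$, so that $p$ lies on an edge of $\calA(F)$.

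\textbf{Step 1: $p$ lies on the three patches.} Validity condition (1) gives that $s$ is visible to $u_i$, so $s$ lies in some triangle $\triangle_i$ of $\vis_{\triangle}(u_i)$. Validity condition (2) gives $t\in\sigma$. Since $\sigma$ is a segment of the overlay $\calO_V$, it is contained in a single segment $\sigma_i$ of $\spmb(u_i)$, and the root of the corresponding cell of $\spm(u_i)$ is $v_i$. Hence the elementary function of $f_{u_i}$ defined on $s\in\triangle_i$ and $t\in\sigma_i$ is exactly $d_{u_i,v_i}(s,t)$. Its graph therefore contains the point $(s,t,d_{u_i,v_i}(s,t))$, and this point equals $p$ because $p\in\gamma_{\sigma}$ satisfies $d_{u_1,v_1}(s,t)=d_{u_2,v_2}(s,t)=d_{u_3,v_3}(s,t)$.

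\textbf{Step 2: dimension count.} The common intersection of the three patches is the solution set of two independent equations, restricted to the three-dimensional $(s,t)$ domain. This set is the curve $\gamma_{\sigma}$. By the general position assumption it is one-dimensional, exactly as in Observation~\ref{obser:threefun} and the proof of Observation~\ref{obser:sufficient}. A point of $\calA(F)$ that lies on a one-dimensional intersection of three surface patches in $\bbR^4$ belongs to an edge of $\calA(F)$, or to a vertex where a fourth patch crosses. Either way, $p$ lies on the closure of an edge.

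\textbf{Step 3: sub-curves.} Every point of a valid sub-curve is valid, so the sub-curve is covered by such edges. It is therefore a sequence of edges of $\calA(F)$, with the breaks occurring at the vertices where some other patch $d_{u_j,v_j}$ meets $\gamma_{\sigma}$.

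The only delicate point is the non-degeneracy in Step 2. The three equalities must not define a higher-dimensional set, for instance when some pairs $(u_i,v_i)$ coincide. I would handle this the same way as the earlier observation: three different elementary functions have distinct pairs $(u_i,v_i)$, and the general position assumption then gives a curve, as the paper implicitly assumes when it calls $\gamma_{\sigma}$ a curve.
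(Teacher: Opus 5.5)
Your proposal is correct and takes the same route as the paper. A valid point lies on the three elementary patches $d_{u_i,v_i}$, and because their common intersection is the curve $\gamma_{\sigma}$, the point lies on (the closure of) an edge of $\calA(F)$. Your Step~1 just makes explicit, through the validity conditions and the overlay $\calO_V$, the patch membership that the paper asserts ``by definition''. Your caveat about coinciding pairs $(u_i,v_i)$ matches the paper's implicit assumption that $\gamma_{\sigma}$ is a curve.
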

\begin{proof}
Consider a valid point $p$ of $\gamma_{\sigma}$. We argue that $p$ must be on an edge of $\calA(F)$. Indeed, by definition, $p$ is the common intersection of the elementary functions $d_{u_i,v_i}(s,t)$, $1\leq i\leq 3$. Since the equation $d_{u_1,v_1}(s,t)=d_{u_2,v_2}(s,t)=d_{u_3,v_3}(s,t)$ gives a curve $\gamma_{\sigma}$, each point in the common intersection of the above elementary functions must belong to an edge of $\calA(F)$. As such, $p$ is on an edge of $\calA(F)$.
\end{proof}

On the other hand, the following lemma shows that all edges of $\calL(F)$ are contained in valid sub-curves.

\begin{observation}\label{obser:necessarypsipb}
For any point $p$ in an interior edge of $\calL(F)$, there must be a set $V$ of three obstacle vertices $u_i$, $1\leq i\leq 3$, such that $p$ is a valid point on the curve $\gamma_{\sigma}$ defined by a segment $\sigma\in \calO_V$.   
\end{observation}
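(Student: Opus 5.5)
The argument mirrors the proof of Observation~\ref{obser:necessary}, one dimension lower. Let $p$ be a point in the relative interior of an interior edge $e$ of $\calL(F)$, and set $s=\eta_s(p)$ and $t=\eta_t(p)$, so $s\in\calP$ and $t\in\calB$. Since $\calL(F)$ lies in $\bbR^4$ and the functions are $3$-variate, an interior edge is contained in the common intersection of exactly three elementary functions of $F$; this uses our general position assumption. Call them $d_{u_i,v_i}(s,t)$, $1\leq i\leq 3$, where $u_i$ is the obstacle vertex of the compound function $f_{u_i}$, and $v_i$ is the root of the cell of $\spm(u_i)$ containing the $\spmb(u_i)$-segment on which that elementary function is defined. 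We take $V=\{u_1,u_2,u_3\}$. The $u_i$ need not be distinct, which is allowed by the statement.

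The first step is to show that each $\overline{su_i}\cup\pi(u_i,v_i)\cup\overline{v_it}$ is a shortest $s$-$t$ path. Since $p\in\calL(F)$, its last coordinate is the minimum over $F$ at $(s,t)$. Every function value $f_u(s,t)$ is the length of an actual path in $\calP$, and $s$ is visible to $u$ by the definition of the domain. Hence the minimum is at least $d(s,t)$. It also equals $d(s,t)$, as in the discussion of $\calL(F)$ in Section~\ref{sec:psibb}. The one place needing care is the case where $s$ sees $t$ directly. There I would note that a point on an interior edge requires three distinct combinatorial paths of equal length, so the envelope value there still realizes $d(s,t)$ through the vertex paths; alternatively, this case can be treated exactly as in Observation~\ref{obser:necessary}. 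Consequently $d_{u_1,v_1}(s,t)=d_{u_2,v_2}(s,t)=d_{u_3,v_3}(s,t)$, and $s$ is visible to every $u_i$, because otherwise $f_{u_i}(s,t)=\infty$. This gives validity condition~(1).

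The second step is to produce the segment $\sigma$. For each $i$, $t$ lies on a segment of $\spmb(u_i)$ whose enclosing cell of $\spm(u_i)$ has root $v_i$; this is exactly how the elementary function was defined. Consider the overlay $\calO_V$ of $\spmb(u_1),\spmb(u_2),\spmb(u_3)$ and let $\sigma$ be the segment containing $t$. By construction, the root of the cell of $\spm(u_i)$ containing $\sigma$ is $v_i$ for every $i$. Therefore the curve $\gamma_\sigma$ defined by the equation $d_{u_1,v_1}=d_{u_2,v_2}=d_{u_3,v_3}$ is precisely the curve built from these three elementary functions. The point $p=(s,t,d(s,t))$ satisfies this equation, so $p\in\gamma_\sigma$. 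Together with $t\in\sigma$ (condition~(2)) and the visibility established above, $p$ is valid.

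The main obstacle is the subtle case where $t$ lies exactly on a breakpoint of some $\spmb(u_i)$, so that $\sigma$ is ambiguous. This is resolved by taking $p$ in the relative interior of $e$ and, if necessary, choosing the closed overlay segment consistent with the elementary functions that define $e$; by continuity along $e$, the same triple $(v_1,v_2,v_3)$ applies. Beyond this, the argument is bookkeeping identical to Observation~\ref{obser:necessary}.
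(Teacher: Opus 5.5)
Your proposal is correct and follows the paper's proof essentially step for step: take the three elementary functions $d_{u_i,v_i}$ whose common intersection contains the edge, get visibility of $s$ to each $u_i$, and take the segment of the overlay of $\spmb(u_1),\spmb(u_2),\spmb(u_3)$ that contains $t$, on which the roots are $v_1,v_2,v_3$. The detour through shortest paths is not actually needed. Visibility follows from $f_{u_i}(s,t)<\infty$, and the segment and root conditions follow from the definition of the elementary functions. This matters because your remark that the envelope still equals $d(s,t)$ when $s$ sees $t$ is false in general: in that case $f_u(s,t)=|su|+d(u,t)>|st|$ unless $u\in\overline{st}$.
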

\begin{proof}
%Recall that that each interior edge of $\calL(F)$ belongs to the common intersection of four elementary functions of $F$. 
Let $e$ be the interior edge of $\calL(F)$ that contains $p$. 
Since $e$ belongs to the common intersection of three elementary functions, say,  
$d_{u_i,v_i}(s,t)$, for some obstacle vertices $u_i$ and $v_i$, $1\leq i\leq 3$, and $e\in \calL(F)$, we have $d_{u_1,v_1}(s,t)=d_{u_2,v_2}(s,t)=d_{u_3,v_3}(s,t)$ and $\overline{su_i}\cup\pi(u_i,v_i)\cup \overline{v_it}$ must be a shortest \st\ path for each $1\leq i\leq 3$. As such, $s$ must be visible to $u_i$ for all $1\leq i\leq 3$, and there must be a segment $\sigma$ in the overlay of $\spm(u_i)$, $1\leq i\leq 3$, such that (1) the root of the cell of $\spm(u_i)$ containing $\sigma$ is $v_i$, for each $1\leq i\leq 3$; (2) $t\in \sigma$. By definition, $p$ must be a valid point on $\gamma_{\sigma}$. 
\end{proof}

\paragraph{Algorithm overview.}
With Observations~\ref{obser:sufficientpsipb} and \ref{obser:necessarypsipb}, as in Section~\ref{sec:psipp}, 
%we first find all valid sub-curves of $\gamma_{\sigma}$. By Observation~\ref{obser:sufficientpsipb}, each valid sub-curve contains a sequence of edges of $\calA(F)$. We next find all vertices of $\calA(F)$ on each valid sub-curve, after which each portion of the curve between two vertices is an edge of $\calA(F)$. In this way, we find all edges contained in each valid curve. We finally determine for each edge whether it is an edge on $\calL(F)$. If we do this for all cells of $\calO_V$ and for all subsets $V$ of four obstacle vertices, then all edges of $\calL(F)$ are computed by Observation~\ref{obser:necessarypsipb}. As such, 
our algorithm has three procedures. First, find all valid sub-curves of $\gamma_{\sigma}$. Second, compute all vertices on valid sub-curves. Third, for each edge in these valid sub-curves, determine whether it is on $\calL(F)$. 
\medskip

%Then, we process each valid sub-curve $\xi$ as follows. For each obstacle vertex $u_j$ with $j\geq 5$, define $\xi^+_j=\{(s,t) \ |\ f_{u_j}(s,t)> f_{u_1}(s,t)\}$ (we assume $f_{u_j}(s,t)=\infty$ if $s$ is not visible to $u_j$). Then, the common intersection $\bigcup_{j\geq 5}\xi^+_j$ is the portion of $\xi$ over which $\calL(F)$ is attained by the elementary functions $d_{u_i,v_i}(s,t)$, $1\leq i\leq 4$. Doing this for all cells $\sigma$ of $\calO_V$ will determine the edges of $\calL(F)$ defined by the functions $f_{u_i}(s,t)$, $1\leq i\leq 4$. Doing this for all subsets $V$ of four obstacle vertices will compute all interior edges of $\calL(F)$. We next give the details of the algorithm. We first present a preliminary algorithm of $O(n^8\log n)$ time and then improve it to $O(n^{7.73})$ time. 

\paragraph{The first procedure: Finding the valid sub-curves.}
Since $\sigma$ is a segment on $\calB$, $\gamma_{\sigma}$ has $O(1)$ sub-curves whose projections on the $t$-axis are in $\sigma$. Let $\Gamma^*_{\sigma}$ denote the set of these sub-curves, which can be computed in $O(1)$ time. 
%we call these sub-curves {\em candidate-valid sub-curves}. 
%This can be done in the following way. We compute a vertical decomposition of $\sigma$ (i.e., by introducing vertical segments from vertices as well as from tangents of curves) and we also compute a vertical decomposition for $\bbR^2\setminus \sigma$, i.e., the region of the plane outside $\sigma$; we use $\triangle(\sigma)$ to denote the resulting decomposition of the plane. Since $\sigma$ is of size $O(n)$, $\triangle(\sigma)$ can be easily computed in $O(n\log n)$ time, e.g., by a plane sweeping algorithm. Then, by traversing on $\eta_t(\gamma_{\sigma})$ using $\triangle(\sigma)$, we can compute $\Gamma^*_{\sigma}$ in $O(n)$ time (indeed, since $\gamma_{\sigma}$ is of constant size, it intersects each edge of $\triangle(\sigma)$ a constant number of times). Note that $\Gamma^*_{\sigma}=O(n)$. 

Next, for each $1\leq i\leq 3$, we find the sub-curves of $\gamma_{\sigma}$ whose projections on the $s$-plane are inside $\vis(u_i)$; let $\Gamma^i_{\sigma}$ denote the set of these sub-curves. Computing $\Gamma^i_{\sigma}$ can be done in $O(n\log n)$ time, in the same way as in Section~\ref{sec:psipp}. 
%e.g., by first triangulating $\vis(u_i)$ as well as $\bbR^2\setminus \vis(u_i)$ and then traversing on $\eta_s(\gamma_{\sigma})$ using the triangulation. As $\eta_s(\gamma_{\sigma})$ is of constant size, it intersects each edge of $\vis(u_i)$ at most $O(1)$ times. 
%Hence, we can obtain the set $\Gamma^i_{\sigma}$ of $O(n)$ disjoint sub-curves of $\gamma_{\sigma}$. 
Note that $|\Gamma^i_{\sigma}|=O(n)$.

Now we compute the common intersection $\Gamma_{\sigma}$ of the sub-curves of $\Gamma^*_{\sigma}$ and $\Gamma^i_{\sigma}$ for all $1\leq i\leq 3$, i.e., a point $p$ is in a sub-curve of $\Gamma_{\sigma}$ if $p$ is in a curve of $\Gamma^*_{\sigma}$ and also in a curve of $\Gamma^i_{\sigma}$ for every $1\leq i\leq 3$. Since the total number of sub-curves of $\Gamma^*_{\sigma}$ and $\Gamma^i_{\sigma}$, $1\leq i\leq 3$, is $O(n)$, $\Gamma_{\sigma}$ has $O(n)$ sub-curves of $\gamma_{\sigma}$. By definition, $\Gamma_{\sigma}$ is exactly the set of valid sub-curves of $\gamma_{\sigma}$. 
We can compute $\Gamma_{\sigma}$ in $O(n\log n)$ time in the same way as in Section~\ref{sec:psipp}. 

Doing this for all $O(n)$ segments $\sigma$ of the overlay $\calO_V$ computes in $O(n^2\log n)$ time a total of $O(n^2)$ valid sub-curves for $O(n)$ curves $\gamma_{\sigma}$. Let $\Gamma(V)$ denote the set of all of these $O(n^2)$ valid curves. 

This finishes the first procedure, which takes $O(n^2\log n)$ time in total. 

\paragraph{The second procedure: Computing the vertices on valid curves.}
We now compute the vertices of $\calA(F)$ on all valid curves of $\Gamma(V)$. Our approach is based on the fact that a vertex on a curve $\gamma_{\sigma}$ must be the intersection between $\gamma_{\sigma}$ and another elementary function $d_{u_j,v_j}(s,t)$ for some $j>3$. 

For each $u_j$, $j>3$, we do the following. For notational convenience, let $j=4$. 
We compute the overlay $\calO_j$ of $\spmb(u_i)$'s, $1\leq i\leq 4$. For each segment $\sigma'$ of $\calO_j$, for each $1\leq i\leq 4$, $\sigma'$ is contained in a single cell of $\spm(u_i)$ and let $v_i$ denote the root of the cell. 
%(note that $\spm(u_i)$ can be computed in the preprocessing, which takes $O(n^2\log n)$ in total for all obstacle vertices $u_i$). 
Also, $\sigma'$ must be contained in a single segment of $\calO_V$, denoted by $\sigma$. 
We solve the equation $d_{u_1,v_1}(s,t)=d_{u_2,v_2}(s,t)=d_{u_3,v_3}(s,t)=d_{u_4,v_4}(s,t)$ for $s\in \bbR^2$ and $t\in \sigma'$ to obtain $O(1)$ solutions $(s,t)$. For each such solution $(s,t)$, we process it as follows. 
%First, we do a point location to find the cell $\calO_j$ that contains $t$; if the cell is not $\sigma'$, then we ignore this solution. Second, 
We check whether $s\in \vis(u_i)$, for every $1\leq i\leq 4$. If $s$ is not in $\vis(u_i)$ for any $1\leq i\leq 4$, then we ignore this solution $(s,t)$. Otherwise, the point $p=(s,t,d_{u_1,v_1}(s,t))$ must be on the curve $\gamma_{\sigma}$. We further check whether $p$ is on a valid curve of $\Gamma_{\sigma}$ (again in $O(\log n)$ time by binary search). If yes, then $p$ is a vertex and we insert it to the valid curve of $\Gamma_{\sigma}$ that contains it. 

As $\calO_j$ has $O(n)$ segments, the above computes in $O(n\log n)$ time $O(n)$ vertices for the curves of $\Gamma(V)$. Doing this for all $u_j$ with $j>3$ computing all $O(n^2)$ vertices for the curves of $\Gamma(V)$. 
For each curve of $\Gamma(V)$, by sorting all its vertices, we obtain the edges of $\calA(F)$ contained in the curve. As $|\Gamma(V)|=O(n^2)$, this step takes $O(n^2\log n)$ time. Let $E(V)$ denote the set of edges thus computed. 

This finishes the second procedure, which takes $O(n^2\log n)$ time in total. 

\paragraph{The third procedure: Identifying the edges of $\boldsymbol{\calL(F)}$.}
To find the edges of $E(V)$ that are on $\calL(F)$, for each edge $e$ of $E(V)$, pick an interior point $p$ on $e$. Let $s=\eta_s(p)$, $t=\eta_t(p)$, and $d(p)$ be the $4$-th coordinate value of $p$. 
%As argued in Section~\ref{sec:psipp}, 
Observe that $e\in \calL(F)$ if and only if $d(p)=d(s,t)$. 
We compute $d(s,t)$ by a special two-shortest shortest path query in which one of the query point is on $\calB$. For such case, we will show in Corollary~\ref{coro:querybp} in Section~\ref{sec:query} that there is  a data structure of $O(n^5\log n)$ preprocessing time and $O(\log^2 n)$ query time. 
Using that data structure, finding all edges of $E(V)$ that are on $\calL(F)$ can be done in $O(n^2\log^2 n)$ time (excluding the preprocessing time of the data structure). 

Doing this for all subsets $V$ of three obstacle vertices can compute all edges of $\calL(F)$ in $O(n^5\log^2 n)$ time. The following theorem summarizes our result. 

\begin{theorem}\label{theo:algopsipb}
One can compute all edges and vertices of $\calL(F)$ in $O(n^5\log^2 n)$ time. After that, the decomposition $\Psipb$ of size $O(n^5+|\Psipb|)$ can be computed in additional $O((n^5+|\Psipb|)\log n)$ time. 
\end{theorem}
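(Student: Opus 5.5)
The plan is to assemble the three procedures described above for a fixed triple $V$ of obstacle vertices, sum over all $O(n^3)$ triples, and then convert the computed interior edges (together with the boundary edges) into $\Psipb$ by a sweep. For a fixed $V=\{u_1,u_2,u_3\}$, I would first build the overlay $\calO_V$ of $\spmb(u_i)$, $1\leq i\leq 3$. This has $O(n)$ segments and takes $O(n\log n)$ time once all $\spm(u)$ and $\vis(u)$ are precomputed; that preprocessing costs $O(n^2\log n)$ overall. The first procedure then yields $\Gamma(V)$ with $O(n^2)$ valid sub-curves in $O(n^2\log n)$ time. The second procedure, run over the $O(n)$ vertices $u_j$ with $j>3$, inserts all vertices of $\calA(F)$ lying on these sub-curves and produces $E(V)$ with $|E(V)|=O(n^2)$, again in $O(n^2\log n)$ time. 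By Observation~\ref{obser:sufficientpsipb}, every element of $E(V)$ is an edge of $\calA(F)$. By Observation~\ref{obser:necessarypsipb}, every interior edge of $\calL(F)$ defined by $f_{u_1},f_{u_2},f_{u_3}$ appears in $E(V)$, since it is a union of valid points whose endpoints are vertices of $\calA(F)$.

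The third procedure tests each $e\in E(V)$ with one distance query at an interior point $p$, using the fact that $e\subseteq\calL(F)$ iff $d(p)=d(\eta_s(p),\eta_t(p))$. This is correct because membership in the lower envelope is constant along an edge of the arrangement. Since $\eta_t(p)\in\calB$, I would use the $\calB$-$\calP$ query structure of Corollary~\ref{coro:querybp}, which has $O(n^5\log n)$ preprocessing and $O(\log^2 n)$ query time. Hence each $V$ costs $O(n^2\log^2 n)$. Summing over $O(n^3)$ triples gives $O(n^5\log^2 n)$, plus the one-time $O(n^5\log n)$ preprocessing. The boundary edges of $\calL(F)$ have both projections on $\calB$, so I would compute them with the one-dimension-lower analogue of the same procedure. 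That analogue uses pairs of vertices, valid sub-curves on $\calB_s\times\calB_t$ and the same queries, and it is no more expensive. This proves the first sentence of the theorem.

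For the second sentence, I would project all interior edges of $\calL(F)$ onto the $s$-plane to obtain the curve set $E$. By Lemma~\ref{lem:psipblesize}, $|E|=O(n^5)$, and each curve is a constant-degree algebraic arc. I would then add the obstacle edges and run a standard plane sweep, or a vertical-decomposition arrangement construction, over these arcs. This produces the arrangement in $O((|E|+I)\log n)$ time, where $I$ is the number of intersections. Each intersection is a vertex of $\Psipb$, so the total size is $O(n^5+|\Psipb|)$ and the time is $O((n^5+|\Psipb|)\log n)$. The correctness claim that this arrangement equals $\Psipb$ is the ``main idea'' equivalence: $\spmb(s)$ changes topologically exactly when $s$ crosses the projection of an interior edge of $\calL(F)$. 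This follows from Chiang and Mitchell's topological-curve characterization restricted to $t\in\calB$.

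I expect the main obstacle to be this last correctness point. Projections of envelope edges may coincide or overlap, and some projected arcs could bound no actual change, for example where two edges project onto the same arc from opposite sides. The first thing I would try is to argue that any such spurious arcs only refine the decomposition by adding $O(|E|)$ extra edges. That would keep the stated size bound $O(n^5+|\Psipb|)$, which is exactly why the theorem states the size in that form rather than as $|\Psipb|$ alone. If an exact $\Psipb$ is required, I would follow the sweep with a merge step that removes arcs across which the combinatorial $\spmb(s)$ is unchanged. That can be checked locally from the defining triples $(u_i,v_i)$ of the edge.
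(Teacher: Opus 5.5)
Your proposal follows the paper's proof essentially step for step. It uses the same three procedures for each triple $V$: valid sub-curves in $O(n^2\log n)$ time, arrangement vertices via each further $u_j$ in $O(n^2\log n)$ time, and envelope membership via one $O(\log^2 n)$ query per edge using Corollary~\ref{coro:querybp}. These are summed over $O(n^3)$ triples, boundary edges are handled by the one-dimension-lower analogue, and a plane sweep over the $O(n^5)$ projected edges gives $\Psipb$.

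The worry in your last paragraph is not part of the paper's argument. The paper takes the correspondence between projected interior edges of $\calL(F)$ and changes of $\spmb(s)$ directly from Chiang and Mitchell's topological-curve characterization under general position. Your suggested remedy would not work as stated in any case. A spurious arc can cross many other arcs, so such arcs need not add only $O(|E|)$ edges. A merge step after the sweep also cannot recover the time already spent on their intersections.
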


Using a topology curve approach as in \cite{ref:ChiangTw99}, we prove a bound for $|\Psipb|$ in the following. 

\begin{theorem}\label{theo:boundpsipb}
    $|\Psipb|=O(n^8)$. 
\end{theorem}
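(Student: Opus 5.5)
We bound $|\Psipb|$ by bounding the arrangement of all topological curves that can bound its cells. By the main idea above, $\Psipb$ is the decomposition of $\calP$ induced by the set $E$ of projections onto the $s$-plane of the interior edges of $\calL(F)$, together with $O(n^2)$ visibility and SPT-type boundaries. Rather than use $E$ directly, we replace each projected edge by the full topological curve containing it. These are the curves $\eta_s(\gamma_\sigma)$ from the algorithm, one for each triple $V=\{u_1,u_2,u_3\}$ of obstacle vertices and each segment $\sigma\in\calO_V$. Adding the edges of $\Psispt$ gives the $O(n^4)$ boundaries of the cells in which $\spt(s)$, and hence each $\vis(u)$ membership, is fixed.

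Each curve $\eta_s(\gamma_\sigma)$ is an algebraic curve of constant degree and constant complexity. The elementary functions $d_{u_i,v_i}(s,t)$ are constant-degree in $(s,t)$, so $\gamma_\sigma$ is a constant-degree curve in $\bbR^4$, and projecting it onto the $s$-plane keeps bounded degree. There are $O(n^3)$ triples $V$, and each overlay $\calO_V$ has $O(n)$ segments, so the family $C$ of topological curves has size $|C|=O(n^4)$.

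Every edge of $\Psipb$ lies on a curve of $C$ or on an edge of $\Psispt$. This holds because each projected edge in $E$ is contained in some $\eta_s(\gamma_\sigma)$ by Observation~\ref{obser:necessarypsipb}, and any remaining combinatorial changes occur where $\spt(s)$ changes. Hence $\Psipb$ is a coarsening of the arrangement of $C$ together with the $O(n^4)$-size decomposition $\Psispt$, and its complexity is at most the complexity of that arrangement.

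Two constant-degree curves cross in $O(1)$ points (Bézout, using general position to exclude common components). The curves of $C$ therefore produce $O(|C|^2)=O(n^8)$ intersections among themselves. The curves of $C$ meet the $O(n^4)$ edges of $\Psispt$, each of constant degree, in another $O(n^4\cdot n^4)=O(n^8)$ points. The arrangement is planar, so its faces and edges are bounded linearly by its vertices plus $O(|C|+n^4)$. This gives $|\Psipb|=O(n^8)$.

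The main obstacle is justifying that every topological change of $\spmb(s)$ is captured by this family. This requires handling the degenerate cases carefully: anchor changes when $s$ crosses a visibility boundary, where $\spt(s)$ changes, are already covered by $\Psispt$, and events with $s$ on $\calB$ do not create two-dimensional cell boundaries beyond $\calB$ itself, which contributes $O(n)$ edges. I would also double-check the constant-degree claim for the projection $\eta_s(\gamma_\sigma)$, so that pairwise intersections are indeed $O(1)$.
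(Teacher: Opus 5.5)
Your proposal follows essentially the same approach as the paper's proof: both bound $|\Psipb|$ by the arrangement of $O(n^4)$ constant-degree topology curves (one for each triple of obstacle vertices and each of the $O(n)$ segments of the overlay of their $\spmb$'s), and charge the complexity to the $O(n^8)$ pairwise intersections. You also include the edges of $\Psispt$ explicitly, adding at most $O(n^8)$ further crossings, which makes precise a point the paper's proof leaves implicit; note that neither version covers the two-path events where a bisector endpoint on $\calB$ passes the point where an extension segment of $\spm(s)$ meets $\calB$, but these lie on only $O(n^4)$ further constant-degree curves, so the $O(n^8)$ bound is unaffected.
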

\begin{proof}
We follow the approach of \cite{ref:ChiangTw99} using topology curves. Note that the complexity of $\Psipb$ was not explicitly studied in \cite{ref:ChiangTw99}. One may consider this proof a straightforward follow-up of the work in \cite{ref:ChiangTw99}.

Suppose $s$ moves inside $\calP$. As $s$ crosses a topology curve, a combinatorial change will happen to $\Psipb$ when there is a point $t\in \calB$ that has three shortest paths from $s$. In the following, we show that there are $O(n^4)$ topology curves each of which is a constant degree algebraic curves of constant size. 

For every three obstacle vertices $v_i$, $1\leq i\leq 3$, we do the following. Recall that $\spmb(v_i)$ has $O(n)$ segments on $\calB$. We overlay $\spmb(v_i)$, for all $1\leq i\leq 3$, resulting in $O(n)$ segments on $\calB$. For each such segment $e$, for each $1\leq i\leq 3$, $e$ is contained in a single cell of $\spm(v_i)$ and let $u_i$ be the root of the cell. Consider the equation $d_{u_1,v_1}(s,t)=d_{u_2,v_2}(s,t)=d_{u_3,v_3}(s,t)$, for $s\in e$ and $t\in\bbR^2$. This equation and $s\in e$ provide three independent constraints in the four variables that are the coordinates of $s$ and $t$, resulting in one degree of freedom; this yields a constant size algebraic curve of constant degree, which we call a {\em topology curve}~\cite{ref:ChiangTw99}. 
%Observe that each second type event with $s\in e$ corresponds to a pair $(s,t)$ with $t$ in the interior of $\calP$ such that there are four shortest \st\ paths (see Fig.~\ref{fig:bpevent10}(b)). If the anchors of $t$ in the four shortest paths are $v_i$, $1\leq i\leq 4$, respectively, then we say that $s$ is {\em defined by} $v_i$, $1\leq i\leq 4$, and $(s,t)$ must be a solution to the above equation. Since the equation has $O(1)$ solutions, $e$ contains $O(1)$ second type events for $s$ defined by $v_i$, $1\leq i\leq 4$. 
As the above overlay has $O(n)$ segments, there are $O(n)$ topology curves defined by $v_i$, $1\leq i\leq 3$. Enumerating all combinations of three obstacle vertices gives the $O(n^4)$ upper bound on the total number of topology curves.     

Since each topology curve is of constant size and constant degree, every two curves have $O(1)$ intersections. Hence, there are $O(n^8)$ intersections between all topology curves, which leads to $|\Psipb|=O(n^8)$.  
\end{proof}

\section{Two-point shortest path queries}
\label{sec:query}
In this section, we study the two-point shortest path query problem. We refer to it as the {\em $\calB$-$\calB$ case} if both query points $s$ and $t$ are required to be on $\calB$, the {\em $\calB$-$\calP$ case} if only one of the query point is required to be on $\calB$, and the {\em general case} otherwise. 

Note that if $s$ is visible to $t$, then $\pi(s,t)$ is $\overline{st}$. It is possible to construct a data structure in $O(n^2)$ space and preprocessing time for $\calP$ such that whether $s$ is visible to $t$ can be determined in $O(\log n)$ time by a ray-shooting query~\cite{ref:PocchiolaGr90,ref:ChenVi15}. In the following, we focus on the case in which $s$ is not visible to $t$.
%\footnote{Note that the visibility between two points $s$ and $t$ does not affect the definitions of the SPM-equivalence decompositions. For example, $\Psipp$ is defined with respect to the combinatorial changes of $\spm(s)$, which is not affected by visibility. In fact, $\Psipp$ is a refinement of {\em visibility-equivalence decomposition}, a subdivision of $P$ into cells in which the visibility polygon of all points in the same cell are combinatorially constant. In applications of the SPM-equivalence decompositions, we sometimes need to handle the visibility in a special (often easier) way such as this two-point shortest path query problem. %For example, when answering two-shortest path queries, we can use a ray-shooting data structure~\cite{ref:PocchiolaGr90,ref:ChenVi15} to handle the case where $s$ and $t$ are visible to each other.}

\subsection{The $\calB$-$\calB$ case}
For the $\calB$-$\calB$ case, we start with constructing the SPM-equivalent decomposition $\Psibb$ in $O(n^{4+\epsilon})$ time by Corollary~\ref{coro:psibb}. Then, for each segment $\sigma$ of $\Psibb$, we construct the shortest path map $\spmb(s)$ for a point $s\in\sigma$. By definition, when $s$ moves in $\sigma$, $\spmb(s)$ does not change combinatorially. Specifically, $\spmb(s)$ partitions $\calB$ into a set $Q_s$ of segments; each segment is in a single cell of $\spm(s)$ and we associate the root of the cell with the segment (called the {\em anchor} of the segment; indeed, it is the anchor of $t$ in the shortest \st\ path for any point $t$ on the segment). As $s$ moves in $\sigma$, the segment endpoints of $\spmb(s)$ may be moving on $\calB$ but each endpoint always stays at the same obstacle edge and the anchor of each segment of $\spmb(s)$ does not change. The endpoint of each segment of $\spmb(s)$ can be represented as a function of the position of $s$ on $\sigma$. In addition, for each obstacle vertex $v$, let $u$ be the anchor of $s$ in the shortest path from $s$ to $v$ on $\spm(s)$. Then, when $s$ moves in $\sigma$, $d(s,v)$ can be represented as a function of the position of $s\in \sigma$, and specifically, $d(s,v)=|su|+d(u,v)$. As such, we can represent $\spmb(s)$ by a parameterized binary search tree, so that given a pair of query points $(s,t)$ with $s\in \sigma$ and $t\in \calB$, we can determine the segment of $\spmb(s)$ containing $t$ in $O(\log n)$ time and then using the anchor of the segment to further compute $d(s,t)$. As such, each such query can be answered in $O(\log n)$ time (an actual shortest path can be output in additional time linear in the number of edges of the path; this can be done by explicitly maintaining the shortest path tree $\spt(u)$ for every obstacle vertex $u$, so that if we know $u$ and $v$ are anchors of $s$ and $t$ in $\pi(s,t)$, respectively, then $\pi(u,v)$ can be output using $\spt(u)$). Note that the idea of building a parameterzied shortest path map was already used by Chiang and Mitchell~\cite{ref:ChiangTw99} for the general case. 

If we maintain a parameterized shortest path map as above for each segment $\sigma$ of $\Psibb$, denoted by $\spmb(\sigma)$, then each $\calB$-$\calB$ case two-point shortest path query can be answered in $O(\log n)$ time. Since each $\spmb(\sigma)$ can be computed in $O(n\log n)$ time and $O(n)$ space, the total preprocessing time and space are $O(|\Psibb|\cdot n \log n)$ and $O(|\Psibb|\cdot n)$, respectively. 
%(the $O(\log n)$ factor is absorbed by $O(n^{\epsilon})$).

We can further improve the preprocessing by a linear factor using persistent trees. Indeed, consider an obstacle edge $e$. For any two adjacent segments $\sigma_1$ and $\sigma_2$ of $\Psibb$ on $e$, observe that $\spmb(\sigma_1)$ and $\spmb(\sigma_2)$ differ by only $O(1)$ changes. Hence, suppose we have a parameterzied binary search tree for the map $\spmb(\sigma_1)$; then $\spmb(\sigma_2)$ can be obtained from $\spmb(\sigma_1)$ with $O(1)$ updates (insertions and deletions). Using the (partially) persistent binary search trees, we can obtain $\spmb(\sigma_2)$ from $\spmb(\sigma_1)$ in $O(\log n)$ time and $O(1)$ space while still keeping $\spmb(\sigma_1)$ for queries~\cite{ref:DriscollMa89} (but no updates are needed in future for $\spmb(\sigma_1)$ 
and thus a partially persistent tree is sufficient). As such, for the edge $e$, we first explicitly construct $\spmb(\sigma_1)$ for the first segment $\sigma_1$ of $\Psibb$ on $e$, which takes 
$O(n\log n)$ time and $O(n)$ space, and then we obtain $\spmb(\sigma)$'s for other segments $\sigma$ of $\Psibb$ on $e$ one by one in the incremental way as described above. Processing all obstacle edges as such takes $O((n^2+|\Psibb|)\log n)$ time and $O(n^2+|\Psibb|)$ space in total. Since $|\Psibb|=\Omega(n^2)$ in the worst case, we have the following result. 

%In this way, the preprocessing time and space can be bounded by $O(|\Psibb|\cdot \log n)$ and $O(|\Psibb|)$, respectively. We thus obtain the following. 

\begin{theorem}
Assuming that $\Psibb$ is available, we can preprocess $\calP$ into a data structure of $O(|\Psibb|)$ space in $O(|\Psibb|\log n)$ time so that for any two query points $s,t\in \calB$, their shortest path length can be computed in $O(\log n)$ time and an actual shortest $s$-$t$ path can be output in additional time linear in the number of edges of the path. 
\end{theorem}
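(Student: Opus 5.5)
The plan is to build, on each obstacle edge, a single partially persistent parameterized search tree. Its versions correspond to the consecutive segments of $\Psibb$ on that edge. We then answer a query $(s,t)$ by locating the version for $s$ and searching it with $t$.

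First, for every obstacle vertex $u$ we precompute $\spm(u)$ and $\spt(u)$ in $O(n^2\log n)$ time and $O(n^2)$ space. From these we tabulate all distances $d(u,v)$ and can report any path $\pi(u,v)$ in time linear in its size. We also build the $O(n^2)$-space ray-shooting structure so that visible pairs $(s,t)$ are detected in $O(\log n)$ time and answered with $|st|$. None of this exceeds the claimed bounds, because $|\Psibb|=\Omega(n^2)$ in the worst case and the bounds are stated as worst-case functions.

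Second, fix an obstacle edge $e$ and sort the segments $\sigma_1,\sigma_2,\ldots$ of $\Psibb$ on $e$. For a point $s\in\sigma_i$, $\spmb(s)$ is a fixed sequence of segments of $\calB$. Each segment has a fixed anchor $v$, a fixed obstacle edge, and endpoints that are constant-degree algebraic functions of the position of $s$. In addition, $d(s,v)=|su|+d(u,v)$, where $u$ is the fixed anchor of $v$ determined by the cell of $\Psisptb$ containing $\sigma_i$. We store this sequence as a balanced search tree keyed by position along $\calB$, with edges ordered as in Section~\ref{sec:psibb}. To compare $t$ with an endpoint, we evaluate the endpoint's function at the given $s$; this takes $O(1)$ time. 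Since the combinatorial order of the endpoints is invariant while $s$ stays in $\sigma_i$, the tree is a valid search tree for every $s\in\sigma_i$. For $\sigma_1$ we build the tree from scratch in $O(n\log n)$ time, and this is paid once per edge.

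Third, moving from $\sigma_i$ to $\sigma_{i+1}$ changes $\spmb(s)$ by $O(1)$ insertions and deletions. These are the two event types of Section~\ref{sec:psibb}, and vertices of $\Psisptb$ behave similarly, with $O(1)$ anchor relabelings. We perform these updates on a partially persistent tree~\cite{ref:DriscollMa89}, which costs $O(\log n)$ time and $O(1)$ amortized space per update. Summed over all edges, this gives $O((n^2+|\Psibb|)\log n)=O(|\Psibb|\log n)$ time and $O(|\Psibb|)$ space. The roots of the versions, sorted along each edge, form an array.

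Fourth, to answer a query $(s,t)$ we binary search the array for the edge of $s$ to find $\sigma_i$, in $O(\log n)$ time. We then search version $i$ with $t$, evaluating endpoint functions at $s$; on a persistent tree this takes $O(\log n)$ time per search. This gives the segment containing $t$ and its anchor $v$, and we return $d(s,v)+|vt|$. The path $\overline{su}\cup\pi(u,v)\cup\overline{vt}$ can then be output via $\spt(u)$.

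The main obstacle is the third step: justifying that consecutive segments of $\Psibb$ differ by only $O(1)$ tree updates, and that the stored functions remain valid throughout. The stored quantities are the endpoint functions and the anchors of $s$ used in $d(s,v)$. The anchors of $s$ change only at vertices of $\Psisptb$, where the endpoint functions must be updated as well. I would try first to argue that each such crossing alters only $O(1)$ keys by general position. Failing that, I would store the functions implicitly through the anchor pair $(u_v,v)$ in each node, so that a relabeling touches only the nodes whose defining vertex changes.
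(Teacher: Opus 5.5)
Your construction is the paper's proof: a parameterized search tree for $\spmb(s)$ on each segment of $\Psibb$, made partially persistent along each obstacle edge (one explicit $O(n\log n)$ build per edge, then $O(1)$ updates per segment), a binary search for the segment containing $s$ followed by a search with $t$, and path reporting through $\spt(u)$. The step you single out is exactly the one the paper does not argue. It simply asserts that for adjacent segments $\sigma_1,\sigma_2$ of $\Psibb$ on an edge, $\spmb(\sigma_1)$ and $\spmb(\sigma_2)$ ``differ by only $O(1)$ changes,'' and it never treats vertices of $\Psisptb$ separately. So the paper does not supply the argument you are missing.

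Your concern is real, and neither fallback resolves it as stated. At a vertex of $\Psisptb$, the first vertex of $\pi(s,x)$ switches for some obstacle vertex $x$, say from $w_1$ to $w_2$. Hence the anchor $u$ in $d(s,v)=|su|+d(u,v)$ switches for \emph{every} obstacle vertex $v$ in the subtree of $x$ in $\spt(s)$. This is forced by the tree structure, not a degeneracy, so general position cannot make it $O(1)$. Storing $(u_v,v)$ in the nodes does not help either, because you must then touch every node whose anchor lies in that subtree.

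What does hold is that a breakpoint between two anchors $v_1,v_2$ of that subtree keeps its function, since $d(s,v_1)-d(s,v_2)=d(x,v_1)-d(x,v_2)$ on both sides. Only the breakpoints separating cells rooted in the subtree from the other cells change, plus $O(1)$ extension endpoints at $x$. Nothing bounds these by $O(1)$, though: the boundary between the region reached through $x$ and the rest may cross many small holes, each contributing such breakpoints on $\calB$.

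A safe repair is to allow $O(n)$ updates (or a rebuild) at each of the $O(n^2)$ vertices of $\Psisptb$. This costs an extra $O(n^3\log n)$ time and $O(n^3)$ space, which is absorbed by the $O(n^{4+\epsilon})$ bound of the corollary that follows. However, stating the bounds literally as $O(|\Psibb|)$ and $O(|\Psibb|\log n)$ would then also require $|\Psibb|=\Omega(n^3)$ in the worst case. The paper does not establish this; it only uses $\Omega(n^2)$.
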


With our algorithm of $O(n^{4+\epsilon})$ time to compute $\Psibb$, we obtain the following. 

\begin{corollary}
We can preprocess $\calP$ into a data structure of $O(n^{4+\epsilon})$ space in $O(n^{4+\epsilon})$ time so that for any two query points $s,t\in \calB$, their shortest path length can be computed in $O(\log n)$ time and an actual shortest $s$-$t$ path can be output in additional time linear in the number of edges of the path.     
\end{corollary}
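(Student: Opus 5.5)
This corollary follows by composing the preceding theorem with Corollary~\ref{coro:psibb}. I would run the algorithm of Corollary~\ref{coro:psibb} to build $\Psibb$ explicitly in $O(n^{4+\epsilon})$ time; its output has combinatorial size $|\Psibb|=O(n^{4+\epsilon})$. I would then hand this decomposition to the construction of the preceding theorem. That construction has the following steps:
\begin{itemize}
\item compute $\spt(u)$ and $\spm(u)$ for every obstacle vertex $u$ in $O(n^2\log n)$ time;
\item for each obstacle edge $e$, build the parameterized search tree $\spmb(\sigma_1)$ explicitly for the first segment $\sigma_1$ of $\Psibb$ on $e$;
\item obtain the trees for the remaining segments on $e$ by $O(1)$ partially persistent updates each.
\end{itemize}

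For the preprocessing bound, the preceding theorem gives space $O(|\Psibb|)$ and time $O(|\Psibb|\log n)$ once $\Psibb$ is available. Substituting the bound $O(n^{4+\epsilon})$ gives $O(n^{4+\epsilon}\log n)$ time. The $\log n$ factor is absorbed by choosing a slightly smaller $\epsilon'<\epsilon$ in Corollary~\ref{coro:psibb}, since $n^{4+\epsilon'}\log n=O(n^{4+\epsilon})$. The additive $O(n^2\log n)$ for the shortest path maps and the ray-shooting structure for visible pairs are dominated. So the total preprocessing time and space are $O(n^{4+\epsilon})$.

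For queries, I would first handle the case where $s$ sees $t$, using a ray-shooting query in $O(\log n)$ time. Otherwise:
\begin{itemize}
\item locate the segment $\sigma$ of $\Psibb$ containing $s$ by binary search over the sorted segments of $\Psibb$ on $s$'s obstacle edge, in $O(\log n)$ time;
\item search the persistent version $\spmb(\sigma)$, evaluating the parameterized endpoints at $s$, to find the segment containing $t$ and its anchor $v$;
\item return $|su|+d(u,v)+|vt|$, where $u$ is the anchor of $s$ on the path to $v$.
\end{itemize}
The path itself is reported from $\spt(u)$ in time linear in its number of edges.

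The only point needing care is the $\epsilon$ bookkeeping in the first step. I do not see any genuine obstacle.
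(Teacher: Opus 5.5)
Your proposal is correct and matches the paper's argument. The paper obtains this corollary by plugging the $O(n^{4+\epsilon})$-time construction and $O(n^{4+\epsilon})$ size bound for $\Psibb$ from Corollary~\ref{coro:psibb} into the preceding theorem's $O(|\Psibb|)$-space, $O(|\Psibb|\log n)$-time persistent structure, with the logarithmic factor absorbed into $n^{\epsilon}$; your recap of the query procedure (ray shooting for visible pairs, binary search for $\sigma$ on $s$'s edge, then a search in the persistent version of $\spmb(\sigma)$) is exactly the construction behind that theorem.
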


% \paragraph{The two-obstacle case.}
% For the two obstacle case, where $s$ is required to be on the boundary of an obstacle $P_s$ and $t$ is also required to be on the boundary of an obstacle $P_t$ ($P_s=P_t$ is allowed), we can basically follow the similar approach and obtain a bound analogous to the above theorem with $|\Psibb|$ replaced by $\Psi_{\partial P_1}(\partial P_2)$. Using our upper bound for $\Psi_{\partial P_1}(\partial P_2)$, we have the following result. 

% \begin{corollary}
% With respect to two obstacles $P_s$ and $P_t$ of $\calP$, we can preprocess $\calP$ into a data structure of $O(n^{3+\epsilon})$ space in $O(n^{3+\epsilon})$ time so that for any two query points $s\in \partial P_1$,$t\in \partial P_2$, their shortest path length can be computed in $O(\log n)$ and an actual shortest $s$-$t$ path can be output in additional time linear in the number of edges of the path.     
% \end{corollary}

\subsection{The $\boldsymbol{\calB}$-$\boldsymbol{\calP}$ case}
For the $\calB$-$\calP$ case, we start with constructing the SPM-equivalent decomposition $\Psibp$ in $O((n^2+|\Psibp|)\cdot \log n)$ time by Theorem~\ref{theo:algopsibp}. Recall that $\Psibp$ is a decomposition of $\calB$ into segments so that $\spm(s)$ is topologically constant for all points $s$ on the same segment. 

For each segment $\sigma$ of $\Psibp$, following the method of Chiang and Mitchell~\cite{ref:ChiangTw99}, we construct a parameterized shortest path map $\spm(\sigma)$ for points $s\in \sigma$, which can answer each $\calB$-$\calP$ case two-point shortest path query in $O(\log n)$ time for two query points $s\in \sigma$ and $t\in \calP$ (note that we also need to build a parameterized point location data structure on $\spm(\sigma)$ as discussed in~\cite{ref:ChiangTw99}). As such, after building a parameterized shortest path map for each segment of $\Psibp$, each $\calB$-$\calP$ case two-point shortest path query can be answered in $O(\log n)$ time. 
Constructing a parameterized shortest path map can be done in $O(n\log n)$ time and $O(n)$ space~\cite{ref:ChiangTw99}. 
Hence, the total preprocessing time and space are bounded by $O(|\Psibp|\cdot n\log n)$ and $O(|\Psibp|\cdot n)$ space, respectively. The $O(\log n)$ factor can be reduced by observing that there are only $O(1)$ changes between two shortest path maps of adjacent cells (hence, if we have the map for one cell, obtaining the map for an adjacent cell can be easily done in $O(n)$ time). 

As in~\cite{ref:ChiangTw99}, using persistent data structures, the preprocessing can be further improved by roughly a linear factor, but the query time grows to $O(\log^2 n)$. We summarize the  result below.

\begin{theorem}
\begin{itemize}   
\item We can preprocess $\calP$ into a data structure of $O(|\Psibp|\cdot n)$ space in $O(n^2\log n+ |\Psibp|\cdot n)$ time so that for any two query points $s,t$ with one of them on $\calB$, their shortest path length can be computed in $O(\log n)$ time and an actual shortest $s$-$t$ path can be output in additional time linear in the number of edges of the path. 
\item 
Alternatively, we can preprocess $\calP$ into a data structure of $O(|\Psibp|\cdot \log n)$ space in $O((n^2+|\Psibp|)\cdot \log n)$ time so that for any two query points $s,t$ with one of them on $\calB$, their shortest path length can be computed in $O(\log^2 n)$ time and an actual shortest $s$-$t$ path can be output in additional time linear in the number of edges of the path. 
\end{itemize}
\end{theorem}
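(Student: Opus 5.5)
We follow the same route as the $\calB$-$\calB$ case, with $\Psibb$ replaced by $\Psibp$ and the parameterized boundary map replaced by Chiang and Mitchell's parameterized shortest path map. First we compute $\Psibp$ in $O((n^2+|\Psibp|)\log n)$ time by Theorem~\ref{theo:algopsibp}. Since the segments of $\Psibp$ lie on $\calB$, query points on the same obstacle edge $e$ are ordered along $e$. For each segment $\sigma$ of $\Psibp$, $\spm(s)$ is topologically fixed for $s\in\sigma$. Its vertices, bisector curves and extension segments are therefore algebraic functions of the position of $s$ on $\sigma$, and the anchor of each cell is fixed. Hence $d(s,t)=|su|+d(u,v)+|vt|$, where $u$ is the anchor of $s$ toward the root $v$ of the cell containing $t$, and both $u$ and $d(u,v)$ are fixed over $\sigma$.

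\textbf{Query structure.} For a query $(s,t)$, we first use the $O(n^2)$-size ray-shooting structure to check whether $t$ is visible from $s$. Otherwise we locate $\sigma\ni s$ by binary search over the segments of $\Psibp$ on the edge containing $s$. We then perform a parameterized point location of $t$ in $\spm(\sigma)$, as in~\cite{ref:ChiangTw99}, and evaluate the formula above. Paths are reported through the stored trees $\spt(u)$.

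\textbf{First bullet.} We build a separate parameterized map $\spm(\sigma)$, with its own point-location structure, for every $\sigma$. Each has size $O(n)$, so the space is $O(|\Psibp|\cdot n)$ and the query time is $O(\log n)$. To get time $O(|\Psibp|\cdot n)$ rather than $O(|\Psibp| n\log n)$, we process the segments of $\Psibp$ on each edge $e$ in order and compute $\spm(s)$ from scratch only once per edge, in $O(n\log n)$ time and $O(n^2\log n)$ total. Moving from $\sigma_1$ to an adjacent $\sigma_2$, the map changes by $O(1)$ local modifications at the event point, namely a contraction or a boundary event as in the algorithm of Theorem~\ref{theo:algopsibp}. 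We copy the map and apply these changes in $O(n)$ time.

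The main obstacle is the point-location structure. A trapezoidal or slab decomposition of $\spm(\sigma)$ has boundaries that move with $s$, so it must be built from a combinatorial description valid for all $s\in\sigma$. I would rely on the Chiang--Mitchell construction, which yields an $O(n)$-size parameterized point-location structure in linear time from a combinatorially fixed map. The point to verify is that this structure can be rebuilt in $O(n)$ time from the updated map, rather than requiring $O(n\log n)$ sorting. This should hold because the $x$-order of the map vertices can be inherited from the adjacent segment with $O(1)$ changes.

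\textbf{Second bullet.} Following~\cite{ref:ChiangTw99}, we keep one partially persistent structure per obstacle edge~\cite{ref:DriscollMa89}. This structure is a persistent sweep-line (slab) point-location structure over the parameterized map; each $O(1)$ topological change of the map between adjacent segments translates into $O(\log n)$ persistent updates, each costing $O(\log n)$ time. Per edge we pay $O(n\log n)$ for the initial version. Summing gives $O((n^2+|\Psibp|)\log n)$ time and $O(|\Psibp|\log n)$ space, under the worst-case convention $|\Psibp|=\Omega(n^2)$. The query first locates the version for $\sigma$ by binary search. Inside that version, each of the $O(\log n)$ levels of the search requires comparing $t$ against a parameterized curve evaluated at $s$, and the resulting nested searches give $O(\log^2 n)$ query time.
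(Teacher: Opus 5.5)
Your overall route is the paper's. You compute $\Psibp$ by Theorem~\ref{theo:algopsibp}, attach a Chiang--Mitchell parameterized $\spm(\sigma)$ with parameterized point location to each segment, get the $O(n)$-per-segment bound by walking along each obstacle edge and patching the $O(1)$ changes, and use persistence per edge for the second bullet. The problems lie in two mechanisms you add beyond what the paper imports from \cite{ref:ChiangTw99}.

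\textbf{The $x$-order is not fixed on a segment.} As $s$ moves inside $\sigma$, bisector endpoints slide along obstacle edges and triple points move. One of them can therefore pass the $x$-coordinate of another vertex, for instance a fixed obstacle vertex, with no topological change of $\spm(s)$. Such swaps are not events of $\Psibp$, so they are not covered by the bound. Hence ``the $x$-order can be inherited with $O(1)$ changes'' is false, and no slab or trapezoidal decomposition is one combinatorial object valid for all $s\in\sigma$.

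In the first bullet this is only a wrong side justification and can be dropped. The point-location structure must be one whose combinatorics is determined by the topology of $\spm(\sigma)$. The $O(n)$ bound then is just the paper's copy-and-patch argument.

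In the second bullet, however, you build on a persistent slab structure, and that step fails. Besides the $x$-order problem, a newly created map edge generally crosses many slabs. In a Sarnak--Tarjan-style structure, whose versions are indexed by sweep position rather than by $\sigma$, that edge would have to be inserted into a range of already built versions. Partial persistence along the sequence of segments does not allow this. So your claim that one topological change costs $O(\log n)$ persistent updates is unsupported.

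\textbf{The time accounting does not give the stated bound.} Even granting that claim, $O(\log n)$ updates at $O(\log n)$ time each is $O(\log^2 n)$ per event, i.e.\ $O(n^2\log n+|\Psibp|\log^2 n)$ in total. The statement needs $O(\log n)$ amortized time per change between consecutive segments.

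The paper does not derive this from a slab structure. It invokes the persistent variant of the Chiang--Mitchell parameterized map, with its $O(\log^2 n)$ query, as given, and you should do the same. Your other ingredients can stay unchanged: binary search for $\sigma$, $d(s,t)=|su|+d(u,v)+|vt|$ with fixed anchors, the ray-shooting visibility test, and path reporting via the trees $\spt(u)$.
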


Using our $O(n^5)$ upper bound for $|\Psibp|$, we obtain the following result. 
\begin{corollary}\label{coro:querybp}
 \begin{itemize}
\item We can preprocess $\calP$ into a data structure of $O(n^6)$ space in $O(n^6)$ time so that for any two query points $s,t$ with one of them on $\calB$, their shortest path length can be computed in $O(\log n)$ time and an actual shortest $s$-$t$ path can be output in additional time linear in the number of edges of the path. 

\item 
Alternatively, we can preprocess $\calP$ into a data structure of $O(n^5 \log n)$ space in $O(n^5\log n)$ time so that for any two query points $s,t$ with one of them on $\calB$, their shortest path length can be computed in $O(\log^2 n)$ time and an actual shortest $s$-$t$ path can be output in additional time linear in the number of edges of the path. 
\end{itemize}   
\end{corollary}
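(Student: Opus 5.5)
The plan is to combine the preceding theorem with our bound $|\Psibp|=O(n^5)$ from Theorem~\ref{theo:sizepsibp}, together with the $O(n^2+|\Psibp|)$-type terms already present there. We first build $\Psibp$ with Theorem~\ref{theo:algopsibp}, which takes $O((n^2+|\Psibp|)\log n)=O(n^5\log n)$ time. Both bullets then follow by substitution into the corresponding bullets of the preceding theorem.

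For the first bullet, the space is $O(|\Psibp|\cdot n)=O(n^6)$. The preprocessing time is $O(n^2\log n+|\Psibp|\cdot n)=O(n^6)$, to which we add the $O(n^5\log n)$ for computing $\Psibp$; this is dominated by $n^6$. The query time stays $O(\log n)$.

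For the second bullet, the space is $O(|\Psibp|\log n)=O(n^5\log n)$. The time is $O((n^2+|\Psibp|)\log n)=O(n^5\log n)$, again including the construction of $\Psibp$. The query time is $O(\log^2 n)$. Path reporting in both bullets is inherited unchanged: we store $\spt(u)$ for every obstacle vertex $u$, which costs $O(n^2)$ space and $O(n^2\log n)$ time and so is absorbed.

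The only point needing care is that $|\Psibp|$ in the earlier theorem denotes the worst-case size. Substituting the upper bound $O(n^5)$ is therefore legitimate, because all the expressions are monotone in $|\Psibp|$. I expect no real obstacle; the main step is just checking that the $O(n^5\log n)$ cost of computing $\Psibp$ is dominated by each stated preprocessing bound, which it is in both cases.
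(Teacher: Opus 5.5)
Your proposal is correct and is exactly the paper's argument: substitute the bound $|\Psibp|=O(n^5)$ from Theorem~\ref{theo:sizepsibp} into the two bullets of the preceding theorem. Your added check that the $O((n^2+|\Psibp|)\log n)=O(n^5\log n)$ cost of building $\Psibp$ via Theorem~\ref{theo:algopsibp} is absorbed in both preprocessing bounds is correct extra bookkeeping.
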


\subsection{The general case}
For the general case, we first build the SPM-decomposition $\Psi$ in $O(n^{7.73}+|\Psi|\log n)$ time by Corollary~\ref{coro:psipp}. Afterwards, following the approach of Chiang and Mitchell~\cite{ref:ChiangTw99}, for each cell of $\Psi$, we construct the parameterized shortest path map. In this way, each general case two-point shortest path query can be answered in $O(\log n)$ time. The total preprocessing time and space is $O((n^7+|\Psi|)\cdot n)$. As in \cite{ref:ChiangTw99}, using the persistent data structure, the preprocessing time (not including the time for constructing $\Psi$) and space can be reduced to $O((n+|\Psi|)\log n)$, but the query time increases to $O(\log^2 n)$. We summarize the result below. 

\begin{theorem}
\begin{itemize}
\item We can preprocess $\calP$ into a data structure of $O((n^7+|\Psi|)\cdot n)$ space in $O((n^7+|\Psi|)\cdot n)$ time so that for any two query points in $\calP$, their shortest path length can be computed in $O(\log n)$ time and an actual shortest $s$-$t$ path can be output in additional time linear in the number of edges of the path. 

\item 
Alternatively, we can preprocess $\calP$ into a data structure of $O((n+|\Psi|)\cdot \log n)$ space in $O(n^{7.73}+|\Psi|\cdot \log n)$ time so that for any two query points in $\calP$, their shortest path length can be computed in $O(\log^2 n)$ time and an actual shortest $s$-$t$ path can be output in additional time linear in the number of edges of the path. 
\end{itemize}
\end{theorem}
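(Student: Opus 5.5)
The plan is to follow the Chiang–Mitchell parameterized shortest path map approach on top of the decomposition $\Psi$ from Corollary~\ref{coro:psipp}. For the first bullet, I would first build $\Psi$ in $O(n^{7.73}+|\Psi|\log n)$ time. Its total complexity (number of cells, edges and vertices) is $O(n^7+|\Psi|)$, because it is cut out by the $O(n^7)$ projected edges in $E$.

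For each cell $C$ of $\Psi$, the map $\spm(s)$ is topologically fixed for $s\in C$. So I would pick a representative $s_C\in C$ and compute $\spm(s_C)$ in $O(n\log n)$ time. Each vertex of the map (bisector endpoint, triple point) is a constant-degree algebraic function of $s$. Each cell root $v$ then has distance $d(s,v)=|su|+d(u,v)$, where $u$ is the fixed anchor. On this I would build a parameterized point-location structure, exactly as in~\cite{ref:ChiangTw99}. The point is that the combinatorial structure, namely the trapezoidal or slab decomposition order, can be fixed for the whole cell; if necessary, one refines $\Psi$ by the $O(1)$-per-feature curves where the order changes, following~\cite{ref:ChiangTw99}.

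A query $(s,t)$ is answered as follows:
\begin{enumerate}
\item locate $s$ in $\Psi$ with a standard point-location structure, in $O(\log n)$ time;
\item instantiate the parameterized structure of that cell at $s$ and locate $t$ in $O(\log n)$ time, obtaining its root $v$ and the anchor $u$ of $s$;
\item return $|su|+d(u,v)+|vt|$.
\end{enumerate}
The case where $s$ and $t$ are visible to each other is handled by the ray-shooting structure. Paths are reported through the stored trees $\spt(u)$. To avoid paying $O(n\log n)$ for each of the $O(n^7+|\Psi|)$ cells, I would traverse the dual graph of $\Psi$. Adjacent cells differ by $O(1)$ local changes, so copying and updating costs $O(n)$ per cell. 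This gives $O((n^7+|\Psi|)n)$ time and space.

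For the second bullet, I would use the same traversal but store the parameterized search structures as partially persistent trees~\cite{ref:DriscollMa89}. I would take a spanning tree (e.g.\ a DFS order) of the dual graph of $\Psi$. Each step along an edge performs $O(1)$ updates at $O(\log n)$ time and $O(1)$ amortized space. The total space is then $O((n+|\Psi|)\log n)$, and the time beyond constructing $\Psi$ is $O(|\Psi|\log n)$. Because persistent versions are accessed along a tree, full persistence or path-copying is needed; path copying costs $O(\log n)$ space per update, which matches the stated space bound. The point location for $t$ uses a two-level structure (slabs, then segments within a slab) both kept persistent. The resulting nested searches with parameterized comparisons give $O(\log^2 n)$ query time.

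The main obstacle is the persistent parameterized point location. The comparisons depend on $s$, and one must ensure that the sorted orders stored in the trees stay valid throughout each cell. I would handle this as in~\cite{ref:ChiangTw99}: within a cell of $\Psi$, the relative orders of the map's features are invariant because the map is topologically fixed. The only remaining concern is $x$-order changes among vertices used for slabs. I would avoid those by using a topology-respecting decomposition, for instance a persistent trapezoid-free search based on the map's planar structure (separator-style), which costs the extra logarithmic factor in the query.
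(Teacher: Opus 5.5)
Your proposal is correct and follows essentially the same route as the paper. The paper likewise builds $\Psi$ via Corollary~\ref{coro:psipp}, attaches a Chiang--Mitchell parameterized shortest path map to each cell (your dual-graph traversal with $O(n)$ work per cell is the device the paper states explicitly in the $\calB$-$\calP$ case), and switches to persistent structures for the second bullet at the cost of $O(\log^2 n)$ query time. One small correction: full persistence is not needed, because an Euler tour of the spanning tree can be handled with partial persistence by undoing the $O(1)$ changes when backtracking; your path-copying alternative also stays within the stated $O((n+|\Psi|)\log n)$ space.
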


\section{Geodesic diameter}
\label{sec:diameter}

In this section, we discuss the geodesic diameter problem. We use the {\em $\calB$-$\calB$ case} to refer to the problem where we are looking for a diametral pair $(s,t)$ with both $s,t\in \calB$, the {\em $\calB$-$\calP$ case} the problem where $s\in \calB$ and $t\in \calP$, and the {\em general} case where both $s,t\in\calP$. 

\paragraph{The $\boldsymbol{\calB}$-$\boldsymbol{\calB}$ case.}
For the $\calB$-$\calB$ case, suppose that $(s,t)$ is a diametral pair with $s,t\in \calB$. If one of $s$ and $t$, say $s$, is a vertex of $\calP$, then since $t$ must be a vertex of $\spm(s)$~\cite{ref:BaeCo15CGTA,ref:BaeTh13}, such a diametral pair can be easily found in $O(n^2\log n)$ time by computing a shortest path map from each vertex of $\calP$. 

In the following, we assume that neither $s$ nor $t$ is a vertex of $\calP$. In this case, 
Bae, Korman, and Okamoto~\cite{ref:BaeTh13} proved that any diametral pair $(s,t)$ has three topologically different shortest \st\ paths. This observation implies that $(s,t)$ must correspond to 
%vertex in the lower envelope $\calL(F)$ of the set $F$ of functions defined for $\Psibb$ in Section~\ref{sec:psibb}, which also 
an event point in the SPM-decomposition $\Psibb$, i.e., $s$ is the event point while $t$ is the point on $\calB$ where $\spmb(s)$ changes. As such, once $\Psibb$ is computed, all $\calB$-$\calB$ case diametral pairs are available. More specifically, if $s$ is an event point $\Psibb$, then $s$ is created due to a combinatorial change at a point $t$ on $\calB$; we keep $(s,t)$ as a {\em candidate} diametral pair (and also keep their geodesic distance $d(s,t)$). We can slightly modify our algorithm for constructing $\Psibb$ to compute all these candidate diametral pairs and their geodesic distances. The total time of the algorithm is still $O(|\Psibb|\log n)$. After having all these candidate diametral pairs, we simply return all pairs whose geodesic distances are the largest. 

%Using our algorithm for constructing $\Psibb$, we immediately have the following result. 

\begin{theorem}
All $\calB$-$\calB$ case diametral pairs can be computed in $O(|\Psibb|\log n)$ time. 
\end{theorem}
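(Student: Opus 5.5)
The plan is to split diametral pairs by whether an endpoint is an obstacle vertex, and to handle the generic case through the event points of $\Psibb$.

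\emph{Vertex case.} Suppose $s$ (or symmetrically $t$) is an obstacle vertex. A farthest point on $\calB$ from $s$ is a vertex of $\spm(s)$ on $\calB$, as cited from~\cite{ref:BaeCo15CGTA,ref:BaeTh13}.

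\begin{itemize}
\item For each obstacle vertex $u$, compute $\spm(u)$ in $O(n\log n)$ time.
\item Evaluate $d(u,q)$ at each of the $O(n)$ vertices $q$ of $\spmb(u)$, using the root of the corresponding cell.
\item Keep the maximum over all $u$ and $q$.
\end{itemize}

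This costs $O(n^2\log n)$ in total, which is $O(|\Psibb|\log n)$ because $|\Psibb|=\Omega(n^2)$ in the worst case.

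\emph{Generic case.} Now suppose neither $s$ nor $t$ is a vertex.

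\begin{enumerate}
\item Invoke the result of~\cite{ref:BaeTh13}: such a diametral pair has at least three topologically different shortest \st\ paths. Geometrically, $t$ is a point of $\calB$ where two bisector-curve endpoints of $\spm(s)$ coincide, or where a bisector endpoint meets a triple point.
\item By the event description in Section~\ref{sec:psibb} and Observation~\ref{obser:eventvertex}, $(s,t,d(s,t))$ is then a vertex of $\calL(F)$ and $s$ is an event point of $\Psibb$, witnessed by $t$.
\item Modify the construction of $\Psibb$ so that each event point $s$ is recorded together with its witness $t$ and the value $d(s,t)$. In the lower-envelope algorithm, each computed vertex of $\calL(F)$ already carries $(s,t)$, and its third coordinate equals $d(s,t)$ for vertices that are genuinely on $\calL(F)$. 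Alternatively, in the sweep formulation, when the event is processed $d(s,t)=|su|+d(u,v)+|vt|$ is read off from the anchors in $O(1)$ time.
\item This gives $O(|\Psibb|)$ candidate pairs with their distances. One linear scan picks out those of maximum distance; compare them with the vertex-case candidates and return all pairs attaining the overall maximum.
\end{enumerate}

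\emph{Main obstacle.} The hard step is making sure recording candidates does not change the running time. The construction of $\Psibb$ enumerates vertices of $\calL(F)$, and some of these are only vertices of the local envelopes $\calL(F)[e_s,e_t]$. The candidate $(s,t)$ must come with the correct geodesic distance, so I would keep a vertex only if it lies on the global envelope.

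My first approach is to use the $\calB$-$\calB$ query structure just built: for each candidate, test in $O(\log n)$ time whether the $z$-value equals $d(s,t)$. Equivalently, apply the same check already used when $\Psibb$ is assembled. Each candidate then costs $O(\log n)$, for $O(|\Psibb|\log n)$ in total. Every true diametral pair appears among the candidates, and every kept candidate has its exact distance, so taking the maximum is correct.
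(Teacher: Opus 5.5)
Your proposal is correct and follows the paper's proof essentially step for step: the vertex case via $\spm(u)$ for every obstacle vertex $u$ in $O(n^2\log n)$ time, and otherwise the three-path property from \cite{ref:BaeTh13}, which places every diametral pair at an event point $s$ of $\Psibb$ with witness $t$, so recording $(s,t,d(s,t))$ while constructing $\Psibb$ and taking the maximum suffices. Your local-envelope worry is unfounded, because $\calL(F)[e_s,e_t]$ is by definition the restriction of $\calL(F)$ to $e_s\times e_t$ (every elementary function defined there lies in $F[e_s,e_t]$); your $O(\log n)$ check is nonetheless harmless and does usefully discard three-function vertices with $s$ visible to $t$, whose height exceeds $d(s,t)=|st|$; also note that, exactly as in the paper, the $O(|\Psibb|\log n)$ bound covers only the work on top of building $\Psibb$, and the paper's construction of $\Psibb$ itself takes $O(n^{4+\epsilon})$ time.
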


Using our bound $|\Psibb|=O(n^{4+\epsilon})$, we have the following. 

\begin{corollary}
All $\calB$-$\calB$ case diametral pairs can be computed in $O(n^{4+\epsilon})$ time.     
\end{corollary}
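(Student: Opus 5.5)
The plan is to split on whether a diametral endpoint is an obstacle vertex, and otherwise to charge each diametral pair to a distinct event of $\Psibb$ that the construction of Corollary~\ref{coro:psibb} already visits.

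First, the vertex case. Suppose $(s,t)$ is a diametral pair with $s,t\in\calB$ and one of them, say $s$, is an obstacle vertex. Then $t$ must be a vertex of $\spm(s)$ \cite{ref:BaeCo15CGTA,ref:BaeTh13}, and since $t\in\calB$, it is in fact a vertex of $\spmb(s)$. We compute $\spm(u)$ for every obstacle vertex $u$, which takes $O(n^2\log n)$ time in total. For each vertex $t$ of $\spmb(u)$ we evaluate $d(u,t)$ in $O(1)$ time from the root of its cell. We keep the largest values found; these are the vertex-case candidates.

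Second, the case where neither $s$ nor $t$ is an obstacle vertex. Here I would invoke the result of Bae, Korman, and Okamoto~\cite{ref:BaeTh13}: any such diametral pair has at least three topologically different shortest $s$-$t$ paths $\overline{su_i}\cup\pi(u_i,v_i)\cup\overline{v_it}$. As in the proof of Observation~\ref{obser:eventvertex}, the point $(s,t,d(s,t))$ is then a common intersection of three elementary functions lying on $\calL(F)$, so it is a vertex of $\calL(F)$. Equivalently, $s$ is an event point of $\Psibb$ and $t$ is the location on $\calB$ where $\spmb(s)$ changes. I will spell out this charging explicitly, namely that every such pair appears as some vertex of $\calL(F)$, so that no diametral pair is missed.

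The key step is algorithmic: every vertex of $\calL(F)$ must be enumerated together with its height $d(s,t)$. The algorithm behind Corollary~\ref{coro:psibb} explicitly computes all vertices of $\calL(F)$ in $O(n^{4+\epsilon})$ time, and each vertex comes with its $(s,t)$ projection and function value. Likewise, the event-point formulation records the pair $(s,t)$ and the distance $d_{u_i,v_i}(s,t)$ at each event. We record each such triple as a candidate; this adds only $O(1)$ work per vertex, for $O(|\Psibb|\log n)$ total including heap or sorting overhead.

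Finally, we take the maximum distance over the vertex-case and event-case candidates and return all pairs attaining it. Correctness follows because every true diametral pair appears among the candidates. I expect the main obstacle to be justifying that the pair produced at each event really is on the envelope, that is, that the recorded distance equals the true $d(s,t)$ and not the value of a non-minimal function. This holds because events are detected on $\calL(F)$ itself, or on the maintained $\spmb(s)$, so the recorded distance is the true geodesic distance.
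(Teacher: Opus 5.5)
Your case split, and the charging of non-vertex diametral pairs to vertices of $\calL(F)$, follow the paper. However, the soundness step fails as written.

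You take all vertices of $\calL(F)$, with their heights, as candidates. You justify this by claiming the height on $\calL(F)$ is the true geodesic distance. That is false in general. $\calL(F)(s,t)=\min_u(|su|+d(u,t))$ over obstacle vertices $u$ visible from $s$, so it ignores the direct segment $\overline{st}$. As the paper notes right after defining $\calL(F)$, the height equals $d(s,t)$ only when $s$ does not see $t$. When $s$ sees $t$, the height can exceed not just $|st|$ but the diameter itself.

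For instance, take a slightly perturbed unit square, with $s$ and $t$ near the midpoints of the bottom and top edges. The four corner functions nearly tie at height $\frac12+\frac{\sqrt5}{2}\approx 1.618$, and they produce vertices of $\calL(F)$ there. The $\calB$-$\calB$ diameter, however, is about $\sqrt2$. Your maximum would therefore return a spurious pair.

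The underlying slip is the word ``equivalently''. Observation~\ref{obser:eventvertex} only maps event points of $\Psibb$ into vertices of $\calL(F)$, not conversely. That is why the paper's candidates are the event points of $\Psibb$, where $\spmb(s)$ actually changes, each recorded with its actual geodesic distance $d(s,t)$. For the same reason, your $O(|\Psibb|\log n)$ accounting is off: you are charging work per vertex of $\calL(F)$, so what you need is the $O(n^{4+\epsilon})$ bound on the number of vertices of $\calL(F)$.

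The repair is short. For each computed vertex $(s,t)$ of $\calL(F)$, test whether $s$ sees $t$ and discard the vertex if so. The $O(\log n)$-time ray-shooting query mentioned in Section~\ref{sec:query}, after $O(n^2)$ preprocessing, suffices.

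No non-vertex diametral pair is lost this way. Three topologically different shortest $s$-$t$ paths cannot exist when $\overline{st}\subseteq\calP$.

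Every surviving vertex has height exactly $d(s,t)$. Some shortest $s$-$t$ path must bend at an obstacle vertex visible from $s$, and every $f_u(s,t)\ge d(s,t)$. Hence no surviving candidate exceeds the diameter.

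The extra $O(n^{4+\epsilon}\log n)$ time is absorbed by adjusting $\epsilon$.
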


\paragraph{The $\boldsymbol{\calB}$-$\boldsymbol{\calP}$ case.}
For the $\calB$-$\calP$ case, suppose that $(s,t)$ is a diametral pair with $s\in \calB$. 
If $s$ is a vertex of $\calP$, then since $t$ must be a vertex of $\spm(s)$~\cite{ref:BaeCo15CGTA,ref:BaeTh13}, such a diametral pair can be easily found in $O(n^2\log n)$ time by computing a shortest path map from each vertex of $\calP$. 
If $t\in \calB$, then it becomes the $\calB$-$\calB$ case and thus such a diametral pair can be computed as discussed above. 

We now assume that $s$ is in the interior of an edge of $\calB$ and $t$ is in the interior of $\calP$. In this case, 
Bae, Korman, and Okamoto~\cite{ref:BaeTh13} proved that any diametral pair $(s,t)$ has four topologically different shortest \st\ paths. This implies that $(s,t)$ must 
%correspond to a vertex in the lower envelope $\calL(F)$ of the set $F$ of functions defined for $\Psibp$ in Section~\ref{sec:psibb}, which also 
correspond to an event point in the SPM-decomposition $\Psibp$. As in the $\calB$-$\calB$ case, we can modify the algorithm for $\Psibp$ to compute all $\calB$-$\calP$ case diametral pairs in $O(|\Psibp|\log n)$ time. 
%Using our algorithm for constructing $\Psibb$, we immediately have the following result.

\begin{theorem}
All $\calB$-$\calP$ case diametral pairs can be computed in $O(|\Psibp|\log n)$ time. 
\end{theorem}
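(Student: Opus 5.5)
The claim is that all $\calB$-$\calP$ diametral pairs can be found in $O(|\Psibp|\log n)$ time. The plan is to piggyback on the sweep algorithm of Theorem~\ref{theo:algopsibp}, following the $\calB$-$\calB$ case. We split a diametral pair $(s,t)$ with $s\in\calB$ into three cases and collect a candidate set of pairs with their geodesic distances; at the end we return all candidates of maximum distance.

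\emph{Case (i): $s$ is an obstacle vertex.} Then $t$ is a vertex of $\spm(s)$ \cite{ref:BaeCo15CGTA,ref:BaeTh13}. We compute $\spm(u)$ for every obstacle vertex $u$ in $O(n^2\log n)$ total time and scan its $O(n)$ vertices, each of which carries its distance from $u$.

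\emph{Case (ii): $t\in\calB$.} This is exactly the $\calB$-$\calB$ case. Its candidates come from the event points of $\Psibb$, which are boundary-case events of the $\Psibp$ sweep. So they are produced by the same sweep at no extra asymptotic cost, or by the previous theorem in $O(|\Psibb|\log n)$ time; note $|\Psibb|\le|\Psibp|$ since $\Psibp$ refines $\Psibb$.

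\emph{Case (iii): $s$ is interior to an obstacle edge and $t$ is interior to $\calP$.} By \cite{ref:BaeTh13}, $(s,t)$ has four topologically different shortest paths. Fix $s$ at the corresponding position on its segment $e_s$ of $\Psisptb$. Then $t$ is a point of $\spm(s)$ with four shortest paths from $s$, i.e.\ a degree-four vertex. By the general position assumption (no vertex of degree larger than four for $s\in\calB$), such a vertex appears only at the instant a bisector between two triple points contracts. That is precisely an interior-case event of the sweep. We will argue this carefully, with the converse direction being unnecessary since we only need a superset of candidates.

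Given this, we modify the main loop. Whenever an interior-case event $s_p$ is extracted from $H_s$, the contracting bisector gives the point $t$. We record $(s_p,t)$ with $d(s_p,t)=|s_pu|+d(u,v)+|vt|$ for any of the four anchor pairs. Since $d(u,v)$ is available from the precomputed shortest path trees, this costs $O(1)$ extra per event. The boundary-case events are recorded the same way for case (ii).

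The total work is therefore the running time of Theorem~\ref{theo:algopsibp}, namely $O((n^2+|\Psibp|)\log n)$, plus $O(n^2\log n)$ for case (i) and a linear scan for the maximum. Since $|\Psibp|=\Omega(n^2)$ in the worst case (it refines $\Psisptb$), and $|\Psibp|$ denotes the worst-case size, this is $O(|\Psibp|\log n)$.

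The main obstacle is justifying that every case-(iii) diametral pair actually appears as an event of the sweep. The fact from \cite{ref:BaeTh13} gives four shortest paths, but these are stated as topologically different paths. We must show they correspond to four distinct anchors $v_i$ of $t$, so that $t$ is a genuine degree-four point of $\spm(s)$ lying on a contracting bisector. We must also show $s$ is not in the interior of a stable interval of $\Psibp$, where $\spm(s)$ has only degree-three triple points. Both follow from the general position assumptions (2) and (4): the four shortest paths have distinct vertex sequences, and the anchor of $t$ determines the path via the unique vertex-to-vertex shortest path and $s$'s anchor. Hence four distinct anchors exist, and $t$ has degree four in $\spm(s)$, which can only happen at an event point.
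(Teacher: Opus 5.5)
Your proposal is correct and follows the paper's proof. Pairs whose $\calB$-endpoint is an obstacle vertex are found by computing $\spm(u)$ for all obstacle vertices $u$; pairs with $t\in\calB$ are reduced to the $\calB$-$\calB$ case; and the remaining pairs, which have four topologically different shortest paths by Bae, Korman, and Okamoto, are recorded as interior-case events of the $\Psibp$ sweep at $O(1)$ extra cost per event.

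Your extra justification goes beyond the paper's one-line claim, but it cites the wrong sources for two facts: uniqueness of the path from $s$ to an anchor of $t$ comes from $s$ lying inside a segment of $\Psisptb$, and the fact that degree-four vertices occur only at isolated $s$ comes from the four-equation system having $O(1)$ solutions (as in the proof of Theorem~\ref{theo:sizepsibp}), not from assumptions (2) and (4) alone.
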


With our bound $|\Psibp|=O(n^5)$, we obtain the following result. 

\begin{corollary}
All $\calB$-$\calP$ case diametral pairs can be computed in $O(n^{5}\log n)$ time.     
\end{corollary}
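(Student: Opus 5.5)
The plan is to reduce the bound to the running time of the $\Psibp$ construction of Theorem~\ref{theo:algopsibp}, combined with the size bound $|\Psibp|=O(n^5)$ of Theorem~\ref{theo:sizepsibp}. The preceding discussion already implies the correctness of the reduction. The only work left is to check that the modification that records candidate pairs does not add to the cost, and to handle the degenerate subcases separately.

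\textbf{Case split.} I would treat a $\calB$-$\calP$ diametral pair $(s,t)$ in three mutually exhaustive cases.
\begin{itemize}
\item If $s$ is an obstacle vertex, then $t$ is a vertex of $\spm(s)$ by~\cite{ref:BaeCo15CGTA,ref:BaeTh13}. Computing $\spm(u)$ for every obstacle vertex $u$ and scanning its $O(n)$ vertices costs $O(n^2\log n)$ in total.
\item If $t\in\calB$, the pair is a $\calB$-$\calB$ pair. It is found in $O(|\Psibb|\log n)=O(n^{4+\epsilon})$ time by the preceding theorem on the $\calB$-$\calB$ case.
\item Otherwise $s$ lies in the interior of an obstacle edge and $t$ lies in the interior of $\calP$. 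By~\cite{ref:BaeTh13}, $t$ then has four topologically different shortest paths from $s$.
\end{itemize}

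\textbf{The main case.} Here I would argue that $s$ is an interior-case event point in the sweep of Section~\ref{sec:psibp}, with $t$ the point where a bisector between two triple points contracts. There are two subtleties. First, $s$ could coincide with a vertex of $\Psisptb$; the sweep handles this because $\spm(s)$ is maintained continuously across segments of $\Psisptb$ on each obstacle edge. Second, the sweep must actually detect this event; it does, since the event heap $H_s$ holds a candidate for every bisector curve with two triple-point endpoints.

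\textbf{Recording candidates.} I would modify the main loop so that, at each interior-case event, it records the pair $(s,t)$ together with $d(s,t)=|su|+d(u,v)+|vt|$. The anchors $u$ and $v$ are read off the maintained $\spm(s)$ and its triple point, so this costs $O(1)$ extra per event. All candidates are then scanned and the maximizers returned, in $O(|\Psibp|)$ time.

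\textbf{Total time and main difficulty.} The total time is $O((n^2+|\Psibp|)\log n)+O(n^{4+\epsilon})$. Using $|\Psibp|=O(n^5)$, this is $O(n^5\log n)$. The main obstacle is confirming that every diametral configuration with four shortest paths is a genuine event of the sweep, rather than a degenerate tangency in which the contraction does not occur. I would first appeal to the general position assumption (3) together with the local-maximality argument of~\cite{ref:BaeTh13}: at a maximum, $t$ must be a vertex of $\spm(s)$ of degree four, and such a vertex exists only at a contraction event.
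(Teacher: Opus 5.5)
You reuse the paper's reduction (same three-way case split, same sweep, same accounting $O((n^2+|\Psibp|)\log n)=O(n^5\log n)$). However, the step you treat as settled in the main case fails: four topologically different shortest paths do not force four distinct anchors of $t$.

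At a vertex $s_0$ of $\Psisptb$, some obstacle vertex $v$ has two shortest paths from $s_0$. They leave $s_0$ through different vertices $u_1\neq u_2$; they cannot share the first vertex, by general position assumption (4). Now let $t$ be a triple point of $\spm(s_0)$ whose roots are $v,w_1,w_2$. Then $(s_0,t)$ has exactly four topologically different shortest paths. Yet $t$ is only a degree-three vertex of $\spm(s_0)$, no bisector contracts there, and the triple point simply moves continuously through $s_0$. So your claim that at a maximum $t$ must be a degree-four vertex created by a contraction is false. Keeping $\spm(s)$ continuous across segments of $\Psisptb$ does not make the sweep record $(s_0,t)$.

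This configuration is not degenerate. The point $s_0$ is fixed by one equation and $t$ by two. The two path-length functions through $v$ have gradients with the same $t$-component, so the local-maximum condition ($0$ in the interior of the convex hull of the four gradients) fixes the weights from the $t$-components. It then only asks that one number lie strictly between the two one-sided derivatives of $d(\cdot,v)$ at $s_0$, which is an open condition. Such pairs can therefore be diametral robustly, and general position cannot rule them out.

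The repair is cheap. Whenever the sweep passes one of the $O(n^2)$ vertices $s_0$ of $\Psisptb$, add every vertex $t$ of the current $\spm(s_0)$ to the candidate list together with $d(s_0,t)$. This costs $O(n)$ per vertex, $O(n^3)$ overall, so the $O(n^5\log n)$ bound stands.

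The paper's own argument only asserts that such a pair ``corresponds to an event point of $\Psibp$'' and leaves the recording step implicit. Your more explicit version commits to recording contraction events only, and that is exactly where it misses these pairs.
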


\paragraph{The general case.}
For the general case, Bae, Korman, and Okamoto~\cite{ref:BaeTh13} proved that any diametral pair $(s,t)$ has five topologically different shortest \st\ paths. This observation implies that $(s,t)$ must correspond to a vertex in the lower envelope $\calL(F)$ of the set $F$ of functions defined for $\Psipp$ in Section~\ref{sec:psipp}. As such, once the vertices of $\calL(F)$ are computed, all general case diametral pairs are available. Hence, we have the following result. 

\begin{theorem}
If all vertices of the lower envelope $\calL(F)$ of the set $F$ of functions defined for $\Psipp$ can be computed in $O(T)$ time, then 
all general case diametral pairs can be computed in $O(T)$ time. 
\end{theorem}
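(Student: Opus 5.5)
The plan is to reduce the general case to the vertex set of $\calL(F)$ and then do a linear scan, following the pattern of the $\calB$-$\calB$ and $\calB$-$\calP$ cases. Let $(s,t)$ be a general-case diametral pair with $s,t\in\calP$.

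\emph{Degenerate subcases.} If $s$ or $t$ is an obstacle vertex, then the other endpoint must be a vertex of the corresponding shortest path map~\cite{ref:BaeCo15CGTA,ref:BaeTh13}. Such pairs are found in $O(n^2\log n)$ time by computing $\spm(u)$ for every obstacle vertex $u$. If $s$ or $t$ lies on $\calB$ but is not a vertex, the pair is a $\calB$-$\calB$ or $\calB$-$\calP$ pair. Such a pair is still a boundary vertex of $\calL(F)$, as in the proof of Lemma~\ref{lem:psilesize}. Alternatively, it can be obtained in $O(n^5\log n)$ time by the earlier results. Since the vertices of $\calL(F)$ number $\Omega(n^4)$ in the worst case, both costs are absorbed into $O(T)$. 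A cleaner option is to note that all boundary configurations are themselves vertices of $\calL(F)$ and treat them uniformly.

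\emph{Main subcase.} Suppose $s$ and $t$ both lie in the interior of $\calP$. By the result of Bae, Korman, and Okamoto~\cite{ref:BaeTh13}, there are five topologically different shortest $s$-$t$ paths. Write them as $\overline{su_i}\cup\pi(u_i,v_i)\cup\overline{v_it}$ for $1\le i\le 5$. Then $s$ is visible to every $u_i$, and $t$ lies in the cell of $\spm(u_i)$ rooted at $v_i$. So each elementary function $d_{u_i,v_i}$ is defined at $(s,t)$ and attains the value $d(s,t)$ there. Every function of $F$ is at least $d(s,t)$ at $(s,t)$, because each is the length of some $s$-$t$ path, or $\infty$. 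Hence $p=(s,t,d(s,t))$ lies on $\calL(F)$ and on five distinct elementary functions. Any five such functions meet in $O(1)$ points, by the general-position argument of Observation~\ref{obser:threefun} applied to five equations. Therefore $p$ is an (isolated) vertex of $\calL(F)$.

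\emph{Algorithm.} Compute all vertices of $\calL(F)$ in $O(T)$ time. Each vertex $p$ directly gives a candidate pair $(\eta_s(p),\eta_t(p))$ together with its distance, namely the fifth coordinate of $p$. No distance query is needed, since $p$ lies on the lower envelope, so the fifth coordinate equals $d(s,t)$. A single scan over the $O(T)$ candidates then returns all pairs attaining the maximum, in $O(T)$ time.

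\emph{Main obstacle.} The delicate step is justifying that a point lying on five distinct elementary functions of $\calL(F)$ is genuinely a vertex, rather than lying in a higher-dimensional face caused by coincident functions. This relies on general position, together with the fact that the five paths are topologically distinct, so the pairs $(u_i,v_i)$ are distinct. I would argue it exactly as in the proof of Lemma~\ref{lem:psilesize}.
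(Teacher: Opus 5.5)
Your main subcase is the paper's argument: Bae, Korman, and Okamoto give five distinct shortest paths, so $(s,t,d(s,t))$ is a vertex of $\calL(F)$. The extraction step, however, fails as written. You claim no distance query is needed because the fifth coordinate of every vertex $p$ of $\calL(F)$ equals $d(\eta_s(p),\eta_t(p))$. That holds only when $\eta_s(p)$ does not see $\eta_t(p)$. Each function of $F$ measures a path forced through at least one obstacle vertex, so at a mutually visible pair $\calL(F)$ is in general strictly larger than $d(s,t)=|st|$; Section~\ref{sec:psibb} states exactly this caveat.

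These spurious values can exceed the diameter. Let $\calP$ be a slightly (generically) perturbed regular pentagon of circumradius $R$. Its corners $u_1,\dots,u_5$ lie on an ellipse whose foci $s,t$ are near the center. At $(s,t)$ all five functions $|su_i|+|u_it|$ equal the major axis $2a\approx 2R$, so $(s,t,2a)$ is a vertex of $\calL(F)$. The geodesic diameter, however, is the longest diagonal, about $1.902R$, and $d(s,t)=|st|\approx 0$. Your scan would therefore return a non-diametral pair with a wrong value.

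The repair is to attach the true distance to each candidate. For every vertex $p$, test whether $\eta_s(p)$ sees $\eta_t(p)$, and if so replace the fifth coordinate by $|\eta_s(p)\eta_t(p)|$. All candidate values are then genuine geodesic distances. The diametral pairs are among the candidates by your argument, so the maximum is correct.

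With the ray-shooting structure of Section~\ref{sec:query}, this costs $O(n^2)$ preprocessing plus $O(\log n)$ per vertex. You therefore really obtain $O(T+n^2+N\log n)$ for $N$ vertices. This is within $O(T)$ in the paper's instantiation, since $N=O(n^7)$ by Lemma~\ref{lem:psilesize} and $T=O(n^{7.73})$. There the issue is moot anyway, because the third procedure of Section~\ref{sec:psipp} keeps only edges with $d(p)=d(s,t)$.

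A minor point concerns your absorption claim. An $\Omega(n^4)$ vertex count does not absorb the $O(n^5\log n)$ boundary computation, and the paper proves $\Omega(n^4)$ only for $|\Psipp|$, not for the vertices of $\calL(F)$. Rely instead on your option of treating the boundary configurations uniformly as vertices of $\calL(F)$.
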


Our algorithm for constructing $\calL(F)$ runs in $O(n^{7.73})$ time, which matches the runtime of the algorithm in~\cite{ref:BaeTh13}.

\section{Geodesic center}
\label{sec:center}
In this section, we discuss the geodesic center problem. For two subsets $A,B\subseteq \calP$, we use the {\em $A$-$B$ case} to refer to the problem in which one wishes to find a point $s\in A$ that minimizes the largest geodesic distance from $s$ to any point $t\in B$; we call $s$ an {\em $A$-restricted center with respect to $B$} and the geodesic distance between $s$ and its farthest point in $B$ is the corresponding {\em geodesic radius}. For example, the $\calP$-$\calB$ case refers to the problem of computing
a point $s\in \calP$ that minimizes the largest geodesic distance from $s$ to any point $t\in \calB$. In particular, the {\em general case} refers to the $\calP$-$\calP$ case. 

%We use {\em the $\calB$-$\calB$ case} to refer to the problem where one wants to find a point $s\in \calB$ that minimizes the farthest point on $\calB$ from $s$. We use {\em the $\calB$-$\calP$ case} to refer to the problem where one wants to find a point $s\in \calB$ that minimizes the farthest point on $\calP$ from $s$. We use {\em the $\calP$-$\calB$ case} to refer to the problem where one wants to find a point $s\in \calP$ that minimizes the farthest point on $\calB$ from $s$. We use {\em the general case} to refer to the problem where one wants to find a point $s\in \calP$ that minimizes the farthest point on $\calB$ from $s$.

We discuss the following cases in order: the $\calB$-$\calP$ case, the $\calB$-$\calB$ case, the $\calP$-$\calB$ case, and the general case. 

\paragraph{The $\boldsymbol{\calB}$-$\boldsymbol{\calP}$ case.}
We first consider the $\calB$-$\calP$ case (the algorithm for the $\calB$-$\calB$ case is very similar, but simpler). A useful observation that was proved by Bae, Korman, and Okamoto~\cite{ref:BaeCo15CGTA} is the following: For any point $s\in \calP$, any of its farthest points must be a vertex of $\spm(s)$.

We first construct the SPM-equivalent decomposition $\Psibp$. For each segment $\sigma$ of $\Psibp$, by definition, when $s$ moves on $\sigma$, $\spm(s)$ does not change topologically, and in particular, $\spm(s)$ has the same set of vertices. As $s$ moves on $\sigma$, for each vertex $t\in \spm(s)$, the geodesic distance $d(s,t)$ is a constant-degree univariate function of the position of $s\in \sigma$. Let $\calU(\sigma)$ denote the upper envelope of the functions of all vertices $t$ of $\spm(s)$. We consider the problem of finding a $\sigma$-restricted center $s\in \sigma$ with respect to $\calP$ and its corresponding geodesic radius. Observe that such a center must correspond to a lowest vertex in $\calU(\sigma)$ and its geodesic radius is the height of the vertex. As such, an $\sigma$-restricted center and its radius can be easily found once $\calU(\sigma)$ is constructed. The latter task can be done in $O(\lambda_{O(1)}(n)\log n)$ time~\cite{ref:SharirDa95} since there are $O(n)$ constant-degree algebraic functions of constant size, where $\lambda_{O(1)}(n)$ is the maximum length of a $DS(n,O(1))$-sequence and is bounded by $O(n\log^*n)$ (for notational simplicity, we will use the bound $O(n\log^*n)$ instead in the following). Constructing $\calU(\sigma)$ for all segments $\sigma\in \Psibp$ will compute
all $\sigma$-restricted centers and their corresponding geodesic radii. Among them, we return those whose radii are the smallest.  
The total time is thus $O(|\Psibp|\cdot n\log n\log^*n)$, which is bounded by $O(n^6\log n\log^*n)$ using our bound $|\Psibp|=O(n^5)$. 

We now further improve the runtime by a linear factor. For each segment $\sigma\in \Psibb$, let $F(\sigma)$ denote the set of functions $d(s,t)$ for all vertices $t$ of $\spm(\sigma)$, where $\spm(\sigma)$ denote the shortest path map $\spm(s)$ for points $s\in \sigma$. Our main idea is based on the observation that for two adjacent segments $\sigma_1$ and $\sigma_2$ of $\Psibp$, their shortest path maps differs by at most $O(1)$ vertices. Hence, the set difference between $F(\sigma_1)$ and $F(\sigma_2)$ is $O(1)$. The details are given below. 

Consider an obstacle edge $e$. We consider the problem of finding an $e$-restricted center $s\in e$ with respect to $\calP$ and its corresponding geodesic radius. We consider the segments of $\Psibp$ on $e$ in order. Let $\sigma_1$ and $\sigma_2$ be the first two segments. Suppose we already have the functions $F(\sigma_1)$, defined on $\sigma_1$. For each function in $F(\sigma_1)\cap F(\sigma_2)$, instead of creating a new function for $F(\sigma_2)$, we simply extend the domain of the function from $s\in \sigma_1$ to $s\in \sigma_1\cup \sigma_2$. 
In this way, since $F(\sigma_1)$ and $F(\sigma_2)$ differs by $O(1)$, we only need to create $O(1)$ new functions for $F(\sigma_2)$. As such, the total number of functions of all segments of $\Psibp$ on $e$ is $O(n+\kappa(e))$, where $\kappa(e)$ is the number of segments of $\Psibp$ on $e$. Each $e$-restricted geodesic center corresponds to a lowest vertex of the upper envelope $\calU(e)$ of all these functions (again the height of the vertex is the corresponding geodesic radius). Since these are $O(n+\kappa(e))$  constant-degree functions of constant size, the upper envelope $\calU(e)$ can be computed in $O((n+\kappa(e))\log n\log^*n)$ time. Doing this for all obstacle edges $e$ will compute all $\calB$-$\calP$ case geodesic centers (i.e., among all edge restricted centers, we return those with smallest geodesic radii). Since $\sum_{e\in \calB}\kappa(e)=|\Psibp|$, the total time is bounded by $O((n^2+|\Psibp|)\log n\log^*n)$. 
As $|\Psibp|=\Omega(n^2)$ in the worst case, we obtain the following result. 

\begin{theorem}
All $\calB$-$\calP$ case geodesic centers can be computed in $O(|\Psibp|\log n\log^*n)$ time. 
\end{theorem}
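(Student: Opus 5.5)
The theorem summarizes the algorithm just described, so the proof consists of verifying its correctness and its running time.

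\emph{Correctness.} Consider a segment $\sigma$ of $\Psibp$. For every $s\in\sigma$, $\spm(s)$ is topologically invariant. By the observation of Bae, Korman, and Okamoto~\cite{ref:BaeCo15CGTA}, every farthest point of $s$ in $\calP$ is a vertex of $\spm(s)$. Hence the geodesic radius of $s$ equals $\max_{t}d(s,t)$, taken over the vertices $t$ of $\spm(\sigma)$. Each such $t$ moves with $s$, and its position is determined by algebraic equations of constant degree in the position of $s$. Therefore $d(s,t)=|su|+d(u,v)+|vt|$, with $u,v$ the fixed anchors, is a constant-degree univariate function of $s\in\sigma$ of constant size. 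So the radius function restricted to $\sigma$ is the upper envelope $\calU(\sigma)$, and its minimizers are exactly the lowest points of $\calU(\sigma)$. Taking the union over the segments on an obstacle edge $e$ gives the following. For every $s$ on $e$, the envelope $\calU(e)$ of the domain-extended functions coincides with the radius of $s$, because each function is defined exactly on the union of consecutive segments in which its vertex exists. Taking the minimum over all obstacle edges $e$ then yields all $\calB$-$\calP$ centers. One subtle point needs checking: a vertex $t$ that appears in two adjacent segments must be described by the same function on both sides. This holds because the anchors $u,v$ and the defining bisectors do not change across an event that does not involve $t$. Events that do involve $t$ destroy $t$ or create new vertices, and these are handled as new functions.

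\emph{Counting functions.} At a boundary or interior event, $\spm(s)$ changes only locally, by removing and adding $O(1)$ bisectors or triple points (as in the algorithm behind Theorem~\ref{theo:algopsibp}, using the general position assumption). So $|F(\sigma_1)\triangle F(\sigma_2)|=O(1)$ for adjacent segments. On edge $e$ this gives $O(n+\kappa(e))$ partially defined functions in total, each with an interval domain. The $O(n)$ term comes from the initial $\spm(s)$ at an endpoint of $e$, which has $O(n)$ vertices.

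\emph{Time.} Computing the upper envelope of $N$ partially defined constant-degree univariate functions takes $O(\lambda_{O(1)}(N)\log N)=O(N\log N\log^*N)$ time~\cite{ref:SharirDa95}. The functions themselves come out of the sweep of Theorem~\ref{theo:algopsibp} in $O((n^2+|\Psibp|)\log n)$ total time. Summing $O((n+\kappa(e))\log n\log^*n)$ over the $O(n)$ edges gives $O((n^2+|\Psibp|)\log n\log^* n)$. Since $|\Psibp|=\Omega(n^2)$ in the worst case (it refines $\Psisptb$), and the bound is stated in terms of the worst-case size, this is $O(|\Psibp|\log n\log^*n)$.

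The main obstacle is the consistency claim about function identity across events: the per-edge count $O(n+\kappa(e))$ depends on it. I would justify it by following the local update rules in the $\Psibp$ sweep, which show that each event touches $O(1)$ vertices of $\spm(s)$.
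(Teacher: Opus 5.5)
Your argument is the paper's own proof: per-segment envelopes, extending function domains along each obstacle edge $e$ to get $O(n+\kappa(e))$ functions, a Davenport--Schinzel envelope, and absorbing $n^2$ via $|\Psibp|=\Omega(n^2)$. The step you single out as the main obstacle is where it breaks; the paper asserts the same $O(1)$-difference claim without justification.

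\textbf{Where it fails.} The segments of $\Psibp$ on $e$ are also separated by breakpoints of $\Psisptb$. These are neither boundary-case nor interior-case events of the sweep behind Theorem~\ref{theo:algopsibp}, whose local update rules apply only inside one segment of $\Psisptb$. At such a breakpoint some obstacle vertex $u$ changes its parent in $\spt(s)$; for instance, $u$ becomes hidden from $s$ behind a vertex $w$. The anchor of $s$ then changes simultaneously for $u$ and for every descendant $v$ of $u$. So $d(s,v)$ switches from $|su|+d(u,v)$ to $|sw|+d(w,v)$, two different algebraic functions of $s$ that merely agree at the breakpoint. Every vertex $t$ of $\spm(s)$ with such a $v$ among its defining roots (including triple points and bisector endpoints on $\calB$) survives topologically, but $d(s,t)$, and in general its position, changes algebraic form. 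If $u$ is the only entrance to a region containing $\Theta(n)$ obstacle vertices, $\Theta(n)$ such $t$ are affected at once. So your claim that anchors do not change across an event not involving $t$ is false here. Likewise, $|F(\sigma_1)\triangle F(\sigma_2)|=O(1)$ holds only for adjacent segments inside one segment of $\Psisptb$. Extending domains by vertex identity across a $\Psisptb$ breakpoint attaches the wrong function on $\sigma_2$.

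\textbf{What the argument actually gives.} The straightforward repair is to split every affected function at each of the $O(n^2)$ breakpoints of $\Psisptb$. That gives $O(n^3+|\Psibp|)$ partial functions and time $O((n^3+|\Psibp|)\log n\log^*n)$. This still yields the $O(n^5\log n\log^*n)$ corollary. It matches the theorem as stated only if the worst-case $|\Psibp|$ is $\Omega(n^3)$, and only $\Omega(n^2)$ is established. To close the gap you must show that the total number of these forced splits over all of $\calB$ is $O(n^2+|\Psibp|)$. For the functions of obstacle vertices this is easy: their splits are breakpoints of the maps $\spmb(v)$, which total $\sum_v|\spmb(v)|=O(n^2)$. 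For triple points and bisector endpoints it is not. Their number at a breakpoint is governed by the complexity of the cells whose roots change anchor there, which can be $\Theta(n)$. Neither your proposal nor the paper's proof bounds the sum of these over all breakpoints.
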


Using the $O(n^5)$ bound for $|\Psibp|$, we have the following. 

\begin{corollary}
All $\calB$-$\calP$ case geodesic centers can be computed in $O(n^5\log n\log^*n)$ time. 
\end{corollary}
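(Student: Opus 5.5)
The plan is to derive the corollary directly from the preceding theorem, which gives running time $O(|\Psibp|\log n\log^*n)$, by plugging in the bound $|\Psibp|=O(n^5)$ of Theorem~\ref{theo:sizepsibp}. For the final time bound to be correct, the preceding theorem must already include every component of the algorithm, so I will reread its analysis and check each step.

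\textbf{Constructing $\Psibp$.} By Theorem~\ref{theo:algopsibp}, this takes $O((n^2+|\Psibp|)\log n)=O(n^5\log n)$ time.

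\textbf{Per-edge initialization.} For each obstacle edge $e$, we compute the initial shortest path map $\spm(s)$ with $s$ at an endpoint of $e$ in $O(n\log n)$ time. Over all edges this is $O(n^2\log n)$.

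\textbf{Incremental function sets.} As $s$ sweeps the segments of $\Psibp$ along $e$, adjacent segments differ by $O(1)$ vertices of $\spm(\sigma)$. This is where I would be most careful.
\begin{itemize}
\item Each event of $\Psibp$ changes $\spm(s)$ only locally: one bisector contracts and a new one appears (interior case), or $O(1)$ bisector endpoints on $\calB$ change (boundary case).
\item So only $O(1)$ vertices of $\spm(s)$ appear or disappear per event.
\item Along the event sequence we can maintain, for each surviving vertex, a single function $d(s,t)$ with extended domain, by retracing the sweep algorithm of Theorem~\ref{theo:algopsibp}.
\item A vertex $t$ that survives an event may still switch its defining anchors. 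I would treat this as the old function ending and a new one starting, which is still $O(1)$ new functions per event.
\end{itemize}
This yields $O(n+\kappa(e))$ constant-degree, constant-size partial functions on $e$.

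\textbf{Upper envelopes.} Computing the upper envelope $\calU(e)$ of these partial functions costs $O((n+\kappa(e))\log n\log^*n)$ by the Davenport--Schinzel bound $\lambda_{O(1)}(m)=O(m\log^*m)$ for partially defined functions. Summing over all $e$ gives $O((n^2+|\Psibp|)\log n\log^*n)$.

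\textbf{Extracting the centers.} The lowest vertices of $\calU(e)$ are the $e$-restricted centers, and these can be read off in linear time.

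\textbf{Conclusion.} The total is $O((n^2+n^5)\log n\log^*n)=O(n^5\log n\log^*n)$. The construction cost $O(n^5\log n)$ from Theorem~\ref{theo:algopsibp} is dominated by this bound.

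The only subtle point is the $O(1)$-change claim, and it follows from the general position assumption, under which vertices of $\spm(s)$ have bounded degree.
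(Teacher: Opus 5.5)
Your proposal is correct and follows the paper's own route: the paper proves the corollary by substituting $|\Psibp|=O(n^5)$ into the preceding theorem, and that theorem's proof is the same per-edge sweep over $\Psibp$, extending each distance function across adjacent segments and computing an $O((n+\kappa(e))\log n\log^*n)$ upper envelope, which you re-derive. One caveat on the step you flag as subtle: at a vertex of $\Psisptb$ the shortest path from $s$ to some obstacle vertex switches (for example, its first hop changes), so every vertex $t$ of $\spm(s)$ reached through that re-attached subtree of $\spt(s)$ gets a new formula for $d(s,t)$, which can mean up to $O(n)$ new functions rather than $O(1)$; since $\Psisptb$ has only $O(n^2)$ vertices this adds just $O(n^3)$ function pieces in total, which the $O(n^5\log n\log^*n)$ bound absorbs (the paper glosses over the same point).
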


\paragraph{The $\boldsymbol{\calB}$-$\boldsymbol{\calB}$ case.}
For the $\calB$-$\calB$ case geodesic centers, we follow the similar approach. The difference is the we use the decomposition $\Psibb$ instead. We omit the details and only summarize the result below. 

\begin{theorem}
All $\calB$-$\calB$ case geodesic centers can be computed in $O(|\Psibb|\log n\log^*n)$ time. 
\end{theorem}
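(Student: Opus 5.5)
The plan is to mirror the $\calB$-$\calP$ argument, replacing $\Psibp$ by $\Psibb$ and $\spm(s)$ by $\spmb(s)$.

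\textbf{Step 1: farthest points on $\calB$.} For a point $s$ and the restriction $t\in\calB$, I first need the analogue of the farthest-point observation of~\cite{ref:BaeCo15CGTA}. The claim is that any farthest point $t\in\calB$ from $s$ is a vertex of $\spmb(s)$: either an obstacle vertex or an endpoint on $\calB$ of a bisector curve or extension segment.

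The argument goes as follows. Inside a single segment of $\spmb(s)$ with anchor $v$, the distance is $d(s,t)=d(s,v)+|vt|$ for $t$ ranging over a straight piece of an obstacle edge. The function $|vt|$ is convex along that segment, so its maximum is attained at an endpoint of the segment. Segment endpoints are exactly the vertices of $\spmb(s)$ (obstacle vertices included), so the claim follows. This step is simpler than the planar case, because only convexity on a line segment is needed.

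\textbf{Step 2: one upper envelope per segment of $\Psibb$.} Compute $\Psibb$ (Corollary~\ref{coro:psibb}). Fix a segment $\sigma$ of $\Psibb$. As $s$ moves on $\sigma$, $\spmb(s)$ is combinatorially fixed, so it has $O(n)$ vertices, each moving on a fixed obstacle edge with a fixed anchor pair. For each such vertex $t(s)$ with anchor $v$ and source anchor $u$, the value $d(s,t(s))=|su|+d(u,v)+|v\,t(s)|$ is a constant-degree algebraic function of the position of $s$. Here $t(s)$ is given algebraically as the intersection of a fixed bisector with a fixed edge.

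A $\sigma$-restricted center then corresponds to a lowest point of the upper envelope of these $O(n)$ univariate functions.

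\textbf{Step 3: sharing functions along an edge.} As in the $\calB$-$\calP$ case, I process an obstacle edge $e$ segment by segment. Between adjacent segments of $\Psibb$ on $e$, $\spmb(s)$ changes by $O(1)$ vertices; this follows from the two event types in Section~\ref{sec:psibb}, each of which merges or creates $O(1)$ bisector endpoints. So I only extend the domains of the unchanged functions and create $O(1)$ new ones per event. Edge $e$ then carries $O(n+\kappa(e))$ partial constant-degree functions.

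Their upper envelope is computed in $O((n+\kappa(e))\log n\log^*n)$ time via Davenport–Schinzel bounds~\cite{ref:SharirDa95}, and I take its lowest point. Summing over all edges gives $O((n^2+|\Psibb|)\log n\log^*n)$. Since $|\Psibb|=\Omega(n^2)$ in the worst case, this is $O(|\Psibb|\log n\log^*n)$.

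The main care point is Step 1 together with the continuity of the tracked functions across event points. At an event, a vertex of $\spmb(s)$ disappears or two merge, and the corresponding partial function must be cut off exactly there. This requires that distances at merging points agree, which holds by continuity of $d(s,\cdot)$, so the envelope is still correct.
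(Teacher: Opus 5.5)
Your route is the paper's route. The paper's proof only says to rerun the $\calB$-$\calP$ argument with $\Psibb$ in place of $\Psibp$, and your Steps 1--3 do exactly that. Step 1 is correct and self-contained: on a segment of $\spmb(s)$ with root $v$, $d(s,t)=d(s,v)+|vt|$ is convex in $t$, which replaces the farthest-point fact the paper leaves implicit.

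The gap is in Step 3. On an edge $e$, segments of $\Psibb$ are separated not only by the two event types of Section~\ref{sec:psibb} but also by the vertices of $\Psisptb$. At those vertices, your claim that only $O(1)$ functions are new fails. At such a vertex $x$, some obstacle vertex $w$ switches its first hop from $s$, say from $u$ to $u'$. Every vertex $t$ of $\spmb(s)$ whose shortest path from $s$ passes through $w$ is then affected. This includes the obstacle vertices in the subtree of $w$ in $\spt(s)$ and the bisector and extension endpoints bounding their cells. For each such $t$, the distance formula changes from $|su|+d(u,w)+\cdots$ to $|su'|+d(u',w)+\cdots$. The two formulas agree at $x$ but are different algebraic functions, so you cannot just extend a function's domain across $x$. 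Suppose $w$ is the only entrance to a pocket containing $\Theta(n)$ vertices of $\spmb(s)$, and $s$ can reach $w$ around either side of an obstacle. Then a single such $x$ forces $\Theta(n)$ new partial functions.

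Counting this correctly gives $O(n)$ new functions at each of the $O(n^2)$ vertices of $\Psisptb$. The envelope phase then costs $O((n^3+|\Psibb|)\log n\log^*n)$. This still gives the $O(n^{4+\epsilon})$ corollary. It is $O(|\Psibb|\log n\log^*n)$ only if you know $|\Psibb|=\Omega(n^3)$ in the worst case, or have an extra charging argument for these re-parameterizations; only $\Omega(n^2)$ is established.

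The paper's $\calB$-$\calP$ argument makes the same assertion that adjacent segments differ by $O(1)$ functions, without separating out $\Psisptb$ vertices. So this is a gap you inherit, not a departure from the paper.

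A related point: you charge nothing for building $\Psibb$ via Corollary~\ref{coro:psibb}. That costs $O(n^{4+\epsilon})$, which is not known to be within $O(|\Psibb|\log n\log^*n)$. So the stated bound should be read as holding once $\Psibb$ is available, as the paper states explicitly for its $\calB$-$\calB$ query structure.
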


Using the $O(n^{4+\epsilon})$ bound for $|\Psibb|$, we obtain the following result (note that the $n^{\epsilon}$ factor absorbs the logarithmic factor). 

\begin{corollary}
All $\calB$-$\calB$ case geodesic centers can be computed in $O(n^{4+\epsilon})$ time. 
\end{corollary}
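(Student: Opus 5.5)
The plan mirrors the $\calB$-$\calP$ case, with $\Psibb$ in place of $\Psibp$ and $\spmb(s)$ in place of $\spm(s)$. We first compute $\Psibb$ in $O(|\Psibb|\log n)$ time by the sweeping algorithm restricted to boundary events (alternatively, the $O(n^{4+\epsilon})$ algorithm of Corollary~\ref{coro:psibb}).

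The key structural fact needed is the boundary analogue of the result of Bae, Korman, and Okamoto: for $s\in\calB$, every farthest point $t\in\calB$ from $s$ is a vertex of $\spmb(s)$. These vertices are the segment endpoints of $\spmb(s)$, namely obstacle vertices and endpoints on $\calB$ of bisector curves. To justify this, take $t$ in the interior of a segment of $\spmb(s)$ with anchor $v$. Then $d(s,t)=d(s,v)+|vt|$ with $t$ ranging over a straight piece of an obstacle edge. Since $|vt|$ is convex along that line, its maximum over the segment is attained at an endpoint. Hence an interior point of the segment can be a maximizer only when the function is constant there, which is impossible for a nondegenerate segment.

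Next, fix an obstacle edge $e$ and process the segments of $\Psibb$ on $e$ in order. For a segment $\sigma$, $\spmb(s)$ is combinatorially fixed for $s\in\sigma$. Each vertex $t$ of $\spmb(s)$ therefore moves as a constant-degree algebraic function of $s$, and $d(s,t)$ is a constant-degree univariate function, namely $|su|+d(u,v)+|vt(s)|$. Moving from a segment to its neighbor changes only $O(1)$ vertices of $\spmb(s)$ (one event of Fig.~\ref{fig:bbevent1}). Hence, by extending the domains of surviving functions, the total number of functions over $e$ is $O(n+\kappa(e))$, where $\kappa(e)$ is the number of segments of $\Psibb$ on $e$. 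An $e$-restricted center is then a lowest point of the upper envelope $\calU(e)$ of these partially defined functions. That envelope has complexity $O(\lambda_{O(1)}(n+\kappa(e)))$ and is computed in $O((n+\kappa(e))\log n\log^*n)$ time~\cite{ref:SharirDa95}. We also include the $O(n)$ obstacle vertices on $\calB$ as candidate centers, handled directly via $\spmb(u)$.

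Summing over all edges gives $O((n^2+|\Psibb|)\log n\log^*n)$, which is $O(|\Psibb|\log n\log^*n)$ since $|\Psibb|=\Omega(n^2)$ in the worst case. Finally, we return the edge-restricted centers with the minimum radius. I expect the main obstacle to be the justification that farthest points lie at vertices of $\spmb(s)$, i.e., the convexity argument. One must also check that the moving endpoints of $\spmb(s)$ are well-defined algebraic functions of $s$ on each segment of $\Psibb$. The rest is bookkeeping identical to the $\calB$-$\calP$ case.
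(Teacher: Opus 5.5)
Your proposal is correct and follows the paper's route. You build $\Psibb$, sweep each obstacle edge over its segments of $\Psibb$ while extending the domains of the functions $d(s,t)$ for vertices $t$ of $\spmb(s)$, and take the lowest point of the upper envelope with $|\Psibb|=O(n^{4+\epsilon})$. Your convexity argument for why farthest boundary points are vertices of $\spmb(s)$ supplies a step the paper leaves implicit.

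Two small caveats, neither affecting the $O(n^{4+\epsilon})$ bound:
\begin{itemize}
\item A sweep that processes only boundary events is not justified for building $\Psibb$, since interior contractions change which triple point ends a bisector reaching $\calB$. Rely on Corollary~\ref{coro:psibb} instead.
\item At vertices of $\Psisptb$ the algebraic formula of $d(s,t)$ can change for $O(n)$ vertices $t$ at once. This is not an $O(1)$ change, but it adds only $O(n^3)$ function pieces overall.
\end{itemize}
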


\paragraph{The $\boldsymbol{\calP}$-$\boldsymbol{\calB}$ case.}
We now discuss the $\calP$-$\calB$ case, which is to find a geodesic center $s\in \calP$ with respect to $\calB$. We can basically follow the same idea, with a major difference: We have to deal with bivariate functions (instead of the univariate functions in the above two cases). 

Specifically, we first construct the SPM-equivalent decomposition $\Psipb$ in $O((n^5+|\Psipb|)\log n)$ time by Theorem~\ref{theo:algopsipb}. We further compute a vertical decomposition of $\Psipb$ (i.e., by extending from each vertex and each $x$-extreme point on a curve a vertical segment until the cell boundaries) so that each cell becomes constant size. Note that the combinatorial complexity of $\Psipb$ does not change asymptotically. 
Then, for each cell $\sigma$ of $\Psipb$, when $s$ moves in $\sigma$, $d(s,t)$ is a constant-degree bivariate function of $s\in \sigma$, for each vertex of $\spmb(s)$; also, since $\sigma$ is of constant size, $d(s,t)$ is a constant-sized function. We wish to find a $\sigma$-restricted center $s\in \sigma$ with respect to $\calB$. Let $F$ be the set of functions $d(s,t)$ for all vertices $t\in \spmb(s)$; hence, $|F|=O(n)$. A $\sigma$-restricted center corresponds to a lowest vertex on the upper envelope $\calU(F)$ of all functions of $F$ and the height of the vertex is the corresponding geodesic radius. Since each function of $F$ is bivariate, constant-degree, and constant-sized, constructing $\calU(F)$ can be done in $O(n^{2+\epsilon})$ time~\cite{ref:AgarwalTh96}. As such, all $\sigma$-restricted geodesic centers can be found in $O(n^{2+\epsilon})$ time. Doing this for all cells of $\Psipb$ will find all $\calP$-$\calB$ case geodesic centers in $O((n^5+|\Psipb|)\cdot n^{2+\epsilon})$ time. We thus have the following result. 

\begin{theorem}\label{theo:diameterpb}
All $\calP$-$\calB$ case geodesic centers can be computed in $O((n^5 +|\Psipb|)\cdot n^{2+\epsilon})$ time. 
\end{theorem}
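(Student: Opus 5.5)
The plan is to combine three ingredients: the construction of $\Psipb$ from Theorem~\ref{theo:algopsipb}, the fact (from \cite{ref:BaeCo15CGTA}) that a farthest point of $s$ is a vertex of its map, and a per-cell upper-envelope computation for bivariate functions.

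First, I would compute $\Psipb$ in $O(n^5\log^2 n+(n^5+|\Psipb|)\log n)$ time by Theorem~\ref{theo:algopsipb}. Then I would refine it by a vertical decomposition so that every cell $\sigma$ has constant description complexity. This keeps the size at $O(n^5+|\Psipb|)$, since each curve contributes $O(1)$ $x$-extreme points and each vertex $O(1)$ vertical segments.

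Next, I would justify the reduction to finitely many candidate farthest points. For $s\in\calP$, a farthest point of $s$ within $\calB$ is either an endpoint of a segment of $\spmb(s)$ (an obstacle vertex, or an endpoint of a bisector curve or extension segment on $\calB$), or a point in the interior of a segment of $\spmb(s)$. In the latter case it must be a local maximum of $|v t|$ with $t$ on an edge. This distance function is convex along the edge, so its maximum over a segment is attained at an endpoint. Hence the farthest point is a vertex of $\spmb(s)$; this is the boundary analogue of the observation of \cite{ref:BaeCo15CGTA}.

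Within a cell $\sigma$ of $\Psipb$ the map $\spmb(s)$ is combinatorially fixed. So each of its $O(n)$ vertices $t(s)$ moves algebraically with $s$, and $d(s,t(s))=|s u|+d(u,v)+|v t(s)|$ for fixed anchors $u,v$. For an obstacle vertex $t$ the anchor $u$ is fixed because $\Psipb$ refines $\Psispt$. For a bisector endpoint, $t(s)$ solves a constant-degree system. This yields a set $F_\sigma$ of $O(n)$ bivariate, constant-degree, constant-size functions on $\sigma$. The radius function restricted to $\sigma$ is their upper envelope $\calU(F_\sigma)$, and a $\sigma$-restricted center is a lowest point of $\calU(F_\sigma)$. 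That lowest point is a vertex of the envelope, or lies on an edge, a face, or the boundary of $\sigma$, and each case is found by a constant-time computation per feature. Building $\calU(F_\sigma)$ takes $O(n^{2+\epsilon})$ time by \cite{ref:AgarwalTh96}, with $O(n^{2+\epsilon})$ complexity, so scanning its features for the minimum fits in the same bound.

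Summing over the $O(n^5+|\Psipb|)$ cells gives the claimed total, which dominates the construction cost of $\Psipb$. The step I expect to need the most care is the claim that $F_\sigma$ can be written down explicitly per cell within this budget. Listing $O(n)$ functions per cell costs $O(n)$, which is dominated anyway. The delicate part is ensuring that the vertices of $\spmb(s)$ (including triple-point-free bisector endpoints on $\calB$) indeed persist with fixed defining anchors throughout $\sigma$. I would argue this directly from the definition of $\Psipb$ together with its refinement of $\Psispt$.
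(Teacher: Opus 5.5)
Your proposal is correct and follows the paper's route: compute $\Psipb$ by Theorem~\ref{theo:algopsipb}, vertically decompose it into constant-size cells, then in each cell build the upper envelope of the $O(n)$ bivariate distance functions to the vertices of $\spmb(s)$ in $O(n^{2+\epsilon})$ time via \cite{ref:AgarwalTh96}, and sum over the $O(n^5+|\Psipb|)$ cells. Two of your additions are details the paper leaves implicit (it only invokes the $\spm(s)$-vertex fact of \cite{ref:BaeCo15CGTA} and speaks of a ``lowest vertex''): the convexity argument that a farthest point on $\calB$ must be a vertex of $\spmb(s)$, and the observation that the envelope's lowest point may lie on an edge, face, or cell boundary rather than at a vertex.
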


Using the $O(n^8)$ bound for $|\Psipb|$ by Theorem~\ref{theo:boundpsipb}, we obtain the following result. 

\begin{corollary}
All $\calP$-$\calB$ case geodesic centers can be computed in $O(n^{10+\epsilon})$ time. 
\end{corollary}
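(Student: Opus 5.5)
The corollary follows by substituting the bound of Theorem~\ref{theo:boundpsipb} into the running time of Theorem~\ref{theo:diameterpb}; the only work is to check that the algorithm behind Theorem~\ref{theo:diameterpb} really depends on $\Psipb$ only through its size.

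First, by Theorem~\ref{theo:algopsipb}, all edges and vertices of the lower envelope $\calL(F)$ for $\Psipb$ can be computed in $O(n^5\log^2 n)$ time. Then $\Psipb$ itself can be built in additional $O((n^5+|\Psipb|)\log n)$ time. By Theorem~\ref{theo:boundpsipb}, $|\Psipb|=O(n^8)$, so the construction costs $O(n^8\log n)$, which is dominated by the final bound.

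Second, we take the vertical decomposition of $\Psipb$. It still has $O(n^5+|\Psipb|)=O(n^8)$ cells, each of constant size. For each cell $\sigma$ we build the upper envelope of the $O(n)$ bivariate, constant-degree, constant-size functions $d(s,t)$, one for each vertex $t$ of $\spmb(s)$. By~\cite{ref:AgarwalTh96} this takes $O(n^{2+\epsilon})$ time per cell, and the lowest vertex of the envelope gives the $\sigma$-restricted center together with its radius. Multiplying over all cells gives $O(n^8)\cdot O(n^{2+\epsilon})=O(n^{10+\epsilon})$. Taking the minimum radius over all cells costs only linear time in the number of cells.

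The one point needing care is that the time in Theorem~\ref{theo:diameterpb} is stated in terms of the worst-case size $|\Psipb|$. Substituting the worst-case bound of Theorem~\ref{theo:boundpsipb} is therefore legitimate, and the additive $n^5$ term is absorbed by $n^8$. There is no real obstacle here.
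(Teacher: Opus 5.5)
Your proposal is correct and takes the same approach as the paper: the corollary is obtained by plugging $|\Psipb|=O(n^8)$ from Theorem~\ref{theo:boundpsipb} into the $O((n^5+|\Psipb|)\cdot n^{2+\epsilon})$ bound of Theorem~\ref{theo:diameterpb}. Your walk-through is exactly the algorithm behind that theorem: build $\Psipb$, vertically decompose it, and compute a per-cell upper envelope of the $O(n)$ bivariate functions.
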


\paragraph{Remark.}
One may wonder whether it is possible to improve the algorithm by making use of the property that shortest path maps between two adjacent cells of $\Psipb$ have only $O(1)$ differences. However, we find difficulty in doing so. Indeed, suppose $\sigma_1$ and $\sigma_2$ are two adjacent cells of $\Psipb$. Let $F(\sigma_1)$ and $F(\sigma_2)$ be the corresponding sets of functions. If a function is in both $F(\sigma_1)$ and $F(\sigma_2)$, we wish to extend the function from $F(\sigma_1)$ to $F(\sigma_2)$. The issue is that this extension increases the combinatorial size of the domain of $s$ from $\sigma_1$ to $\sigma_1\cup \sigma_2$. Since each cell is two-dimensional, changing the domain size will not guarantee the constant size of the function any more (but this issues does not exist in the 1D case, i.e., in both the $\calB$-$\calB$ case and the $\calB$-$\calP$ case; indeed, in these two cases, $\sigma_1$ and $\sigma_2$ are two segments with a common endpoint on an obstacle edge $e$, and thus $\sigma_1\cup \sigma_2$ is still a single segment of $e$).

\paragraph{The general case.}
The general case follows the same idea as the above $\calP$-$\calB$ case, but uses the decomposition $\Psi$ instead. 
%Since constructing $\Psi$ takes $O((n^8+|\Psi|)\log n)$ time, 
With Corollary~\ref{coro:psipp}, we  have the following result. 

\begin{theorem}\label{theo:diameterpp}
All general case geodesic centers can be computed in $O((n^7+|\Psi|)\cdot n^{2+\epsilon})$ time. 
\end{theorem}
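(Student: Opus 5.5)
We follow the $\calP$-$\calB$ argument verbatim, with $\Psi$ in place of $\Psipb$ and $\spm(s)$ in place of $\spmb(s)$. First, we construct $\Psi$ using Corollary~\ref{coro:psipp}. This takes $O(n^{7.73}+(n^7+|\Psi|)\log n)$ time, and the construction yields $\Psi$ with complexity $O(n^7+|\Psi|)$, since the decomposition is obtained by overlaying the $O(n^7)$ projected interior edges of $\calL(F)$ together with the boundary contributions. We then compute a vertical decomposition of $\Psi$, shooting vertical segments from every vertex and every $x$-extreme point of every curve. This makes every cell constant-size without changing the complexity asymptotically. Every cell of the refined decomposition still lies inside a single cell of $\Psi$.

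Fix a refined cell $\sigma$. By the definition of $\Psi$, $\spm(s)$ is topologically invariant for $s\in\sigma$. Hence it has a fixed set of $O(n)$ vertices $t$, each moving continuously with $s$.

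For each such vertex, $d(s,t)$ is a fixed constant-degree algebraic bivariate function of $s\in\sigma$. If $t$ is an obstacle vertex, then $d(s,t)=|su|+d(u,t)$ for a fixed anchor $u$. If $t$ is a bisector endpoint on $\calB$ or a triple point, then $t$ itself is an algebraic function of $s$ defined by a constant number of equalities $d_{u_i,v_i}(s,t)=d_{u_j,v_j}(s,t)$ with fixed roots, so $d(s,t)=d_{u_1,v_1}(s,t(s))$ is algebraic of constant degree. Because $\sigma$ has constant size, each of these functions is a constant-size surface patch.

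By the result of \cite{ref:BaeCo15CGTA}, every farthest point of $s$ is a vertex of $\spm(s)$. Therefore the $\sigma$-restricted radius function is the upper envelope $\calU(F_\sigma)$ of these $O(n)$ patches, and a $\sigma$-restricted center is a lowest point of $\calU(F_\sigma)$. A lowest point lies at a vertex, on an edge, or on a boundary feature of the envelope, and it can be found by scanning the envelope's features. The envelope can be computed in $O(n^{2+\epsilon})$ time by \cite{ref:AgarwalTh96}.

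Summing over the $O(n^7+|\Psi|)$ cells and taking the global minimum over all cells gives $O((n^7+|\Psi|)\,n^{2+\epsilon})$ total time. This dominates both the construction time and the $O(n\log n)$ per-cell cost of recovering the vertex set of $\spm(s)$. All geodesic centers are then the minimizers whose radius equals the global minimum.

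The main point to check is that the vertex set of $\spm(s)$, and the algebraic description of each vertex, really is invariant throughout a cell. This includes the vertices on $\calB$ and the triple points, and it is exactly what the topological equivalence defining $\Psi$ guarantees. A second point is that the per-cell parameterized map does not need to be rebuilt from scratch at $O(n^{2+\epsilon})$ cost beyond the envelope computation itself. As the Remark notes, we do not attempt to share functions across adjacent two-dimensional cells.
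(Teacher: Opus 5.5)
Your proposal is correct and follows the paper's proof essentially verbatim: construct $\Psi$ via Corollary~\ref{coro:psipp}, vertically decompose it into constant-size cells, and in each cell build the upper envelope of the $O(n)$ bivariate functions $d(s,t)$ over the vertices $t$ of $\spm(s)$ in $O(n^{2+\epsilon})$ time, exactly as in the $\calP$-$\calB$ case. Your additional remarks fill in details the paper leaves implicit and are sound: the envelope's minimum need not lie at a vertex, and both the $O(n\log n)$ per-cell cost of recovering $\spm(s)$ and the construction cost of $\Psi$ are dominated by the envelope computations.
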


The currently best upper bound for $|\Psi|$ is $O(n^{10})$~\cite{ref:ChiangTw99}, which makes the runtime of the algorithm in Theorem~\ref{theo:diameterpp} bounded by $O(n^{12+\epsilon})$. This matches the result obtained by Bae, Korman, and Okamoto~\cite{ref:BaeCo15CGTA}, but is worse than the $O(n^{11}\log n)$ time solution by Wang~\cite{ref:WangOn18}. 
%Our contribution is to reduce the problem of designing efficient algorithms to proving smaller bounds for $|\Psi|$. 

%%% >>> references
% \footnotesize
%\bibliographystyle{abbrv}
\bibliographystyle{plainurl}
\bibliography{reference}
%%% <<< end of reference

% \appendix
% \section*{APPENDIX} 
% \label{sec:appendix}

%\section{Proof of Lemma~\ref{lem:10}}
%\label{app:proof-lemma-10}

%\begin{figure}[t]
%\begin{minipage}[t]{0.5\textwidth}
%\begin{center}
%\includegraphics[height=1.3in]{annuli.pdf}
%\caption{\footnotesize An annulus $D_p$ (the grey region).}
%\label{fig:annuli}
%\end{center}

%\end{minipage}
%\hspace{0.05in}
%\begin{minipage}[t]{0.48\textwidth}
%\begin{center}
%\includegraphics[height=1.2in]{pseudotrap.pdf}
%\caption{\footnotesize Illustrating a pseudo-trapezoid.}
%\label{fig:pseudotrap}
%\end{center}
%\end{minipage}
%\vspace{-0.1in}
%\end{figure}

\end{document}